\documentclass[11pt]{article}

\usepackage{amsmath,amssymb,amsthm}
\usepackage{natbib}
\usepackage{hyperref}
\usepackage{booktabs}

\usepackage{graphicx}
\usepackage{subcaption}
\usepackage{overpic}

\usepackage[margin=1in]{geometry}

\newtheorem{assumption}{Assumption}
\newtheorem{theorem}{Theorem}
\newtheorem{remark}{Remark}
\newtheorem{lemma}{Lemma}

\title{Kernel-Based Functional Balancing for Causal Inference with Compositional Treatments}

\author{
Sungbum Kim \qquad Jiayi Wang\thanks{Corresponding author.} \\
\vspace{0.15cm}
\textit{Department of Mathematical Sciences, University of Texas at Dallas}
}

\date{June 2026}

\begin{document}

\maketitle

\begin{abstract}
We study causal effect estimation with compositional treatments, where the exposure lies on a simplex and the estimand is defined over compositions rather than scalar or binary values. By considering a projection of the average potential outcome onto the treatment space, a kernel-based covariate functional balancing approach is adopted for weight construction. The weights are obtained by directly minimizing a worst-case balancing error over a reproducing kernel Hilbert space (RKHS) defined on the joint space of treatments and covariates, instead of being estimated under a treatment assignment model.
Building on these weights, an augmented weighted estimator (AWE) is proposed, where the outcome function is estimated via kernel ridge regression and combined with a marginal augmentation over the covariate distribution. Despite the complex structure of the resulting objective, a finite-dimensional convex optimization problem is formulated via a representer theorem and a low-rank approximation. The proposed estimator achieves $\sqrt{n}$-consistency without requiring consistent estimation or smoothness of the weights. An asymptotic normality result is established around a sample-specific target. Empirical performance is demonstrated through simulation studies and a real data application.
\end{abstract}

\vspace{0.5cm}
\noindent \textbf{Keywords:} Causal inference, Compositional data, Covariate balancing, Reproducing kernel Hilbert space, Augmented estimator.

%
%
\section{Introduction} \label{s:intro}

Many interventions in science and policy involve allocating a finite resource rather than a single action.
Such interventions are naturally represented as compositional variables, where the components of a vector are proportions that sum to one. 
These data structures arise across disciplines and are increasingly common in empirical research.
In finance, a portfolio is a compositional treatment that allocates capital across assets subject to a fixed budget constraint \citep{Boonen.etal:2018, VegaGamez.AlonsoGonzalez:2024}. 
In the health and behavioral sciences, time-use epidemiology studies daily activity patterns---such as sleep, sedentary behavior, and physical activity---that sum to 24 hours \citep{Dumuid.etal:2020, Carson.etal:2016, Verswijveren.etal:2022}. 
Similar structures appear in molecular biology, where compositional data analysis compares biomarker profiles in glycomics \citep{Bennett.etal:2025}, and in marketing, where budget allocations and market shares form compositional systems \citep{Morais.etal:2018, FerrerRosell.etal:2021}.
In these settings, the objective is to quantify the causal effect of a compositional intervention on an outcome.
Under the potential outcomes framework, this is characterized by the expected outcome given a specific treatment allocation.
We focus on the average treatment effect (ATE), defined as the expected outcome under a specific compositional treatment.

Causal inference with compositional treatments is complicated by the simplex constraint, under which component proportions must sum to one.
A common approach is to map compositional variables to an unconstrained Euclidean space using log-ratio transformations or related representations \citep{aitchison1982statistical, egozcue2003isometric}, and then apply standard methods.
However, these transformations hinder causal interpretation.
Effects defined in the transformed space do not correspond to reallocations in the original composition, making it difficult to relate estimated effects to feasible interventions.
This limitation motivates approaches that operate directly on the simplex.
In this work, causal effects are defined through reallocations across components \citep{li2023relative}.

Existing approaches for estimating causal effects can be broadly categorized into regression-based methods and weighting methods.
Regression-based methods are commonly used to estimate the ATE due to their simplicity and interpretability.
These methods model the outcome as a function of treatment and covariates and use the fitted model to construct ATE estimators \citep{robins1995semiparametric, vanderlaan2011targeted}. 
They can be extended using nonparametric or machine learning techniques to accommodate complex outcome functions.
However, such extensions do not account for the reallocation structure inherent to compositional treatments and may yield estimates that are difficult to interpret.
Moreover, naive regression-based plug-in estimators inherit the first-stage estimation error. When flexible nonparametric or machine learning methods are used, both regularization bias and overfitting prevent the naive plug-in estimator from achieving $\sqrt{n}$-consistency \citep{chernozhukov2018double}.

An alternative approach is based on weighting.
Weighting methods construct a pseudo-population in which covariates are independent of treatment assignment.
Inverse probability weighting (IPW) assigns weights based on the inverse of the treatment assignment probability and can be interpreted as a density ratio that shifts the sample toward a target population \citep{rosenbaum1983central}.
However, IPW can produce unstable estimators when assignment probabilities are close to zero, leading to large or highly variable weights.
Moreover, its performance depends on accurate estimation of the treatment assignment mechanism, which becomes challenging in high-dimensional or constrained settings.
To address these issues, covariate balancing methods have been proposed.
Rather than modeling the treatment assignment mechanism, these methods construct weights that explicitly balance covariate distributions across treatment levels.
This perspective can improve numerical stability and reduce sensitivity to model misspecification. 
Examples include entropy balancing \citep{hainmueller2012entropy}, covariate balancing propensity scores \citep{imai2014covariate}, and kernel balancing \citep{hazlett2020kernel}.
Recent work has extended this perspective beyond binary treatments to continuous and more general treatment settings.
In particular, balancing-based approaches have been developed for continuous treatments \citep{fong2018covariate, tubbicke2022entropy, huling2024independence}, further extended to multivariate continuous treatments \citep{chen2023causal}, and even to functional treatments where the treatment is an infinite-dimensional object \citep{wang2023flexible}.
Despite these developments, causal inference with compositional treatments, where interventions correspond to reallocations on the simplex, remains largely underexplored.

In this paper, we propose a kernel-based balancing framework for causal effect estimation with compositional treatments. 
To capture complex, unstructured interactions between the treatment composition and covariates, we develop a general functional balancing approach. By assuming the outcome function lies in a tensor product RKHS defined on their joint domain, this model can adapt to highly nonlinear and interconnected confounding relationships. We construct weights to directly target the estimation error of the causal effect through a worst-case balancing criterion over the RKHS unit ball, without explicitly modeling the treatment assignment mechanism or requiring any smoothness assumption on the weight function. 
The resulting estimator achieves $\sqrt{n}$-consistency under regularity conditions. 
We also propose an augmented estimator that combines the balancing weights with an outcome regression fitted via kernel ridge regression. 
By doing so, the estimation error is bounded by the residual of the outcome regression rather than the RKHS norm of the outcome function itself. 
Since this residual vanishes as the sample size increases, this approach improves statistical efficiency.
Under regularity conditions, the augmented estimator is asymptotically normal with a tractable variance representation. 
Notably, these results are obtained without sample splitting or cross-fitting, in contrast to double machine learning approaches \citep{chernozhukov2018double, newey2018cross}.

The remainder of the paper is organized as follows. Section~\ref{s:preliminaries} presents the formal problem setup and key concepts. Section~\ref{s:Methodology} describes the proposed kernel-based balancing methodology and its computational aspects. Section~\ref{s:AWE} introduces the Augmented Weighted Estimator and presents its implementation using kernel ridge regression. Section~\ref{s:theory_section} establishes the theoretical properties of the estimator. Sections~\ref{s:simulation} and \ref{s:real_data} present simulation studies and a real-data application. 

%
%
\section{Preliminaries} \label{s:preliminaries}

\subsection{Causal Framework and Interpretable Modeling Objective}

We consider the problem of estimating the causal effect of a compositional treatment $A \in \mathcal{A}$ on a scalar outcome $Y$, where $\mathcal{A}$ denotes the $(L-1)$-dimensional simplex.
Let $X \in \mathcal{X}$ be a vector of covariates, where $\mathcal{X} \subseteq \mathbb{R}^p$ denotes the $p$-dimensional covariate space.
We adopt the potential outcomes framework, denoting $Y(a)$ as the potential outcome for a unit under treatment $a \in \mathcal{A}$. 
Identification of causal effect from observational data $\{(A_i, X_i, Y_i)\}_{i=1}^n$ relies on the following standard assumptions commonly adopted
in causal inference \citep{rosenbaum1983central, imbens2015causal}.
\begin{assumption}[Causal Identification Conditions]
    \label{ass:causal_identification}
    The following conditions are assumed: \\
    \textbf{(Consistency)} $Y = Y(A)$. \\
    \textbf{(Positivity)} For any treatment value $a \in \mathcal{A}$ with positive marginal density with respect to the simplex measure and any $x \in \mathcal{X}$, the conditional density satisfies $\varphi_{A|X}(a \mid x) > 0$. \\
    \textbf{(No Unmeasured Confounding)} $Y(a) \perp A \mid X$ for all $a \in \mathcal{A}$.
\end{assumption}
Under these conditions, the mean potential outcome $\mathbb{E}[Y(a)]$ is identified from the observed data as $\mathbb{E}_X \! \left[\mathbb{E}[Y \mid A = a, X]\right]$, where $\mathbb{E}_X[g(X)] := \int g(x)\varphi_X(x)\,dx$ denotes expectation with respect to the marginal distribution of the covariates $X$.

While the mean potential outcome may be complex, we summarize the causal effect through its best linear approximation by projecting $Y(A)$ onto the linear span of the treatment vector. Specifically, we define the target estimand $\boldsymbol{\beta}^*$ as
$$ 
    \boldsymbol{\beta}^* = \underset{\boldsymbol{\beta} \in \mathbb{R}^L }{\text{argmin}} \, 
    \mathbb{E} \Bigl[ \bigl( Y(A) - A^\intercal \boldsymbol{\beta} \bigr)^2 \Bigr],
$$
where the expectation is taken over the marginal distribution of $A$ and the distribution of the potential outcomes. 
Because the compositional treatment $A$ lies on the simplex, an intercept is omitted to avoid perfect collinearity and ensure identifiability. 

The minimizer admits a closed-form expression. Let $\boldsymbol{\Sigma}_A := \mathbb{E}[A A^\intercal]$. If $\boldsymbol{\Sigma}_A$ is nonsingular, then
\begin{equation}
    \boldsymbol{\beta}^* = \boldsymbol{\Sigma}_A^{-1} \mathbb{E}[A Y(A)].
    \label{eq:beta_star_def}
\end{equation}

The quantity $a^\intercal \boldsymbol{\beta}^*$ can be interpreted as the best linear approximation to the average potential outcome $\mathbb{E}[Y(a)]$.
This definition preserves the intuitive interpretability of the coefficients in $\boldsymbol{\beta}^*$, with each coefficient representing the marginal contribution of its corresponding component within the linear projection.

To give these coefficients a causal interpretation, we adopt the reallocational effects framework \citep{li2023relative}.
Under the unit-sum constraint, a change in treatment is represented as a reallocation of a fixed amount $\delta$ across components.
Specifically, consider reallocating $\delta$ from component $j$ to component $i$, so that $a' = a + \delta(e_i - e_j)$.
Based on this linear projection, the resulting change in the expected outcome is
\begin{equation}
    \tau (a', a) = \mathbb{E}[Y(a')] - \mathbb{E}[Y(a)] = (a' - a)^\intercal \boldsymbol{\beta}^* = \delta(\beta_i^* - \beta_j^*).
    \label{eq:tau} 
\end{equation}
More generally, an amount $\delta$ can be reallocated to component $i$ from the remaining $L-1$ components, yielding
$$
    \tau (\bar{a}', a) = \delta\left(\beta_i^* - \bar{\beta}_{-i}^*\right),
    \quad
    \bar{a}' = a + \delta \left( e_i - \frac{1}{L-1} \sum_{j \ne i} e_j \right),
    \quad
    \bar{\beta}_{-i}^* := \frac{1}{L-1}\sum_{j \neq i}\beta_j^*.
$$
For example, in a time-use study with $\delta = 1/24$, this corresponds to reallocating one hour to activity $i$ from the remaining activities.

\subsection{Oracle Weights for Confounding Adjustment}

Building on the previous section, our objective is to estimate $\boldsymbol{\beta}^*$ defined in \eqref{eq:beta_star_def}.
The primary challenge lies in estimating the term $\mathbb{E}\bigl[A \,  Y(A)\bigr]$ from observational data.
A naive sample average is generally biased in the presence of confounding, as the covariates $X$ may influence both the treatment allocation $A$ and the outcome $Y$.
To address this issue, a common strategy is to reweight the observed outcomes to construct a pseudo-population in which the treatment assignment can be viewed as independent of the covariates.

Theoretically, this reweighting is achieved through an oracle weight function
\begin{equation}
    w^*(a,x) = \frac{\varphi_A(a)}{\varphi_{A\mid X}(a \mid x)},
    \label{eq:w*definition}
\end{equation}
where $\varphi_A$ denotes the marginal density of the compositional treatment $A$, and $\varphi_{A\mid X}$ its conditional density given covariates $X$.
The role of $w^*(a, x)$ can be understood through its effect on expectations. For any integrable function $f(A,X)$,
$$
    \mathbb{E} \left[ w^*(A,X) f(A,X) \mid A=a \right] = \int f(a,x) \, \varphi_X(x) \, dx.
$$
In this sense, the weighted pseudo-population behaves as though treatment assignment were independent of the covariates.

However, directly estimating these oracle weights poses substantial practical and theoretical challenges.
The difficulty stems from the need to estimate both the marginal density $\varphi_A$ and the conditional density $\varphi_{A\mid X}$.
Nonparametric density estimation methods suffer from the curse of dimensionality, particularly when both $A$ and $X$ are multi-dimensional.
Parametric alternatives, such as Dirichlet regression, offer a seemingly convenient solution but are highly susceptible to model misspecification, which may result in residual confounding \citep{aitchison1982statistical}.
Moreover, the presence of zero or near-zero components renders the model ill-defined, further complicating estimation.
These challenges motivate approaches that bypass density estimation.
In particular, functional balancing methods \citep{wong2018kernel} achieve covariate balance without explicitly modeling the treatment assignment mechanism.
We adapt this idea to compositional treatments and develop the proposed method in the next section.

%
%
\section{Kernel-Based Covariate Functional Balancing Methodology} \label{s:Methodology}

\subsection{Problem Setup and Functional Representation}

Building on the observational setting described in Section~\ref{s:preliminaries}, our goal is to estimate the causal estimand $\boldsymbol{\beta}^*$. 
Suppose that the outcome satisfies the model
$$ Y_i = m(A_i, X_i) + \varepsilon_i, $$
where $m(A,X)$ is the true mean response function and $\mathbb{E}[ \varepsilon_i \mid A_i, X_i] = 0$.
Under Assumption~\ref{ass:causal_identification}, the estimand $\boldsymbol{\beta}^*$ admits the representation
$$
    \boldsymbol{\beta}^* = \boldsymbol{\Sigma}_A^{-1}\,\mathbb{E}\Bigl[A \cdot \mathbb{E}_X\bigl[m(A,X)\bigr]\Bigr].
$$

To connect the population representation with an implementable estimator, let $w_i^* := w^*(A_i, X_i)$ denote weights that satisfy the oracle balancing condition in \eqref{eq:w*definition}.
Let $w^*=(w_1^*, \dots, w_n^*)^\intercal$, define the diagonal weight matrix $\mathbf W^* = \mathrm{diag}(w_1^*, \dots, w_n^*)$, and denote $\mathbf A = (A_1, \dots, A_n)^\intercal$ and $\mathbf Y = (Y_1, \dots, Y_n)^\intercal$. The least squares estimator using the outcome weighted by the oracle weights is
$$
    \hat{\boldsymbol{\beta}}^* = (\mathbf A^\intercal \mathbf A)^{-1}\mathbf A^\intercal \mathbf W^* \mathbf Y.
$$
Under Assumption~\ref{ass:causal_identification}, the oracle weights satisfy
$$
    \mathbb{E} \left[ A \, w^*(A,X) \, Y \right] = \mathbb{E} \left[A \cdot \mathbb E_X[m(A,X)]\right] = \boldsymbol{\Sigma}_A \boldsymbol{\beta}^*,
$$
which implies $\mathbb E\bigl[\hat{\boldsymbol{\beta}}^* \bigr] = \boldsymbol{\beta}^*$. 
In practice, the oracle weights are unknown. We consider the following general weighted estimator with the weight $w = (w_1,\dots, w_n)^\intercal$
$$ 
    \hat{\boldsymbol{\beta}}(w) = (\mathbf A^\intercal \mathbf A)^{-1}\mathbf A^\intercal \mathbf W\mathbf Y,
$$
where $\mathbf W = \mathrm{diag}(w_1, \dots, w_n)$. In the following, we investigate the criteria needed to estimate the weights. 

We decompose the estimation error of an estimator $\hat{\boldsymbol{\beta}}(w)$. Using $\mathbf Y = \mathbf m + \boldsymbol{\varepsilon}$, where $\mathbf m = (m(A_1,X_1), \ldots, m(A_n,X_n))^\intercal$ and $\boldsymbol{\varepsilon} = (\varepsilon_1, \ldots, \varepsilon_n)^\intercal$, and applying the triangle inequality gives the decomposition
\begin{subequations} \label{eq:error_decomp}
\begin{align}
    \left\| \hat{\boldsymbol{\beta}}(w) - \boldsymbol{\beta}^* \right\|_2 \le& 
        \underbrace{\left\| (\mathbf{A}^\intercal \mathbf{A})^{-1} \mathbf{A}^\intercal \left( \mathbf{W} \mathbf{m} - \mathbb{E}_{X} [ \mathbf{m} ] \right) \right\|_2}_{\text{Stochastic Balancing Error}}
        \label{eq:error_decomp_a} \\
    & + 
        \underbrace{\left\| (\mathbf{A}^\intercal \mathbf{A})^{-1} \mathbf{A}^\intercal \mathbb{E}_{X} [ \mathbf{m} ] - \boldsymbol{\beta}^* \right\|_2}_{\text{Approximation Error}}
        \label{eq:error_decomp_b} \\
    & + 
        \underbrace{\left\| (\mathbf{A}^\intercal \mathbf{A})^{-1} \mathbf{A}^\intercal \mathbf{W} \boldsymbol{\varepsilon} \right\|_2}_{\text{Stochastic Noise Error}}
        \label{eq:error_decomp_c}
\end{align}
\end{subequations}
where $ \mathbb{E}_X[\mathbf{m}] := \left( \mathbb{E}_X \left[ m(A_1, X) \right], \dots,  \mathbb{E}_X \left[ m(A_n, X) \right] \right)^\intercal $. 
The first term represents the stochastic balancing error due to the discrepancy between the weighted sample $\mathbf{W}\mathbf m$ and their target marginal mean $\mathbb{E}_X[\mathbf m]$, which arises from finite-sample variability. 
The second term is a deterministic approximation error resulting from approximating the potentially nonlinear marginal mean function with a linear projection. 
The third term captures the variance induced by the weighted aggregation of the noise terms.

Our objective is to construct a weight vector $\hat w = (\hat{w}_1, \dots, \hat{w}_n)^\intercal$ that controls the stochastic balancing error in \eqref{eq:error_decomp_a} while regularizing the stochastic noise term in \eqref{eq:error_decomp_c}.
We denote by $\hat{\mathbf{W}} = \mathrm{diag}(\hat{w}_1, \dots, \hat{w}_n)$ the corresponding diagonal weighting matrix.
Since the projection term in \eqref{eq:error_decomp_b} does not depend on the weights, it is not affected by the choice of weights. Accordingly, the weight construction problem reduces to balancing the marginal mean function while stabilizing the variance induced by weighting. 
In practice, the true outcome function $m$ is unknown, so the balancing objective cannot be formulated directly in terms of $m$. To obtain a tractable formulation, we impose structural assumptions on the mean response function, specifically that $m$ belongs to the RKHS $\mathcal{H}$.

Inspired by \eqref{eq:error_decomp_a}, we introduce a general balancing criterion for an arbitrary function $f \in \mathcal{H}$. 
Let
$$ 
    \mathbf{f} = \bigl( f(A_1, X_1), \dots, f(A_n, X_n) \bigr)^\intercal, 
$$
and define the empirical marginal mean
$$ 
    \hat{\mathbb{E}}_X[\mathbf{f}] :=
        \begin{pmatrix}
            \frac1n \sum_{j=1}^n f(A_1, X_j) \\
            \vdots \\
            \frac1n \sum_{j=1}^n f(A_n, X_j)
        \end{pmatrix},
$$
which approximates $\mathbb{E}_X [f(A_i, X)]$. For the weight vector $w$, we define the balancing criterion function
\begin{equation}
    S(w, f) =
        \left\| (\mathbf{A}^\intercal \mathbf{A})^{-1} \mathbf{A}^\intercal \bigl( \mathbf{W} \mathbf{f} - \hat{\mathbb{E}}_{X}[\mathbf{f}] \bigr) \right\|_2^2.
    \label{eq:S(w, f)}
\end{equation}
To control imbalance uniformly over the function class, we minimize the worst-case discrepancy over the unit ball 
$\mathcal{H}(1):=\{ f \in \mathcal{H} : \|f\|_{\mathcal{H}} \le 1 \}$, 
leading to the objective
$$ \sup_{f \in \mathcal{H}(1)} S(w,f). $$

To control the stochastic noise term in \eqref{eq:error_decomp_c}, 
we introduce a variance regularization penalty of the form
\begin{equation}
    p(w) = \left\| (\mathbf{A}^\intercal \mathbf{A})^{-1} \mathbf{A}^\intercal \mathbf{W} \right\|_F^2.
    \label{eq:p(w)}
\end{equation}
This penalty controls variance inflation due to weighting.

We then construct the weight vector by solving the regularized worst-case balancing problem
\begin{equation}
    \hat{w} = \underset{w \in \mathbb{R}^n_+}{\text{argmin}} \, \left\{ \sup_{f \in \mathcal{H}(1)} S(w,f) + \lambda\, p(w) \right\}
    \label{eq:w_hat_def}
\end{equation}
where $\lambda > 0$ controls the trade-off between covariate balance and variance control. 
This optimization problem completes the formulation of the proposed kernel-based balancing estimator.

%
%

\subsection{Kernel-Based Balancing and Computational Implementation}

The optimization problem in \eqref{eq:w_hat_def} involves a supremum over the infinite-dimensional function space $\mathcal{H}(1)$, which presents analytical and computational challenges. 
The following theorem shows that the inner optimization of \eqref{eq:w_hat_def} admits a finite-dimensional representation.

\begin{theorem} [Representer and Decomposition Theorem]
\label{thm:representer}
    Let $\mathcal{H}$ be a reproducing kernel Hilbert space (RKHS) with product kernel
    $$ K\bigl((A,X),(A',X')\bigr) = K_A(A,A')\,K_X(X,X'). $$
    Let $\mathcal{H}(1)=\{f\in\mathcal{H}:\|f\|_{\mathcal{H}}\le1\}$ denote the unit ball of $\mathcal{H}$. 
    Then the worst-case balancing error satisfies
    $$ 
        \sup_{f\in\mathcal{H}(1)} S(w,f) = 
            \left\| (\mathbf{A}^\intercal \mathbf{A})^{-1} \mathbf{A}^\intercal 
            \bigl( \mathbf{W}, \, -\mathbf{I} \bigr)
            \tilde{\mathbf{K}}^{1/2} \right\|_{op}^2,
    $$
    where $\|\cdot\|_{op}$ denotes the operator norm, 
    $\mathbf{I}$ is the $n\times n$ identity matrix, 
    and $\bigl( \mathbf{W}, \, -\mathbf{I} \bigr)$ denotes the $n \times 2n$ block matrix formed by concatenating $\mathbf{W}$ and $-\mathbf{I}$.
    The matrix $\tilde{\mathbf{K}}$ is the $2n\times 2n$ kernel matrix defined in Appendix~\ref{app:formulas}; see \eqref{eq:k_tilde_mat} for its explicit form.
\end{theorem}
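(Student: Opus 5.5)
The plan is to write the vector inside $S(w,f)$ as a bounded linear operator applied to $f$, and then identify the supremum over the unit ball with an operator norm. Set $\mathbf{M} := (\mathbf{A}^\intercal\mathbf{A})^{-1}\mathbf{A}^\intercal \in \mathbb{R}^{L\times n}$. Introduce $2n$ representers in $\mathcal{H}$:
\[
\phi_i := K\bigl((A_i,X_i),\cdot\bigr), \qquad \psi_i := \frac1n\sum_{j=1}^n K\bigl((A_i,X_j),\cdot\bigr), \qquad i=1,\dots,n.
\]
By the reproducing property, $f(A_i,X_i)=\langle f,\phi_i\rangle_{\mathcal{H}}$ and $\hat{\mathbb{E}}_X[\mathbf f]_i=\langle f,\psi_i\rangle_{\mathcal{H}}$.

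Stack these as $\Phi := (\phi_1,\dots,\phi_n,\psi_1,\dots,\psi_n)$ and define the linear map $T:\mathcal{H}\to\mathbb{R}^{2n}$ by $Tf=(\langle f,\phi_k\rangle)_{k=1}^{2n}$. Then
\[
\mathbf W\mathbf f-\hat{\mathbb E}_X[\mathbf f] = (\mathbf W,-\mathbf I)\,Tf,
\qquad\text{so}\qquad
S(w,f)=\|\mathbf M(\mathbf W,-\mathbf I)Tf\|_2^2 .
\]
The supremum over $\mathcal{H}(1)$ is therefore $\|\mathbf B T\|_{op}^2$, where $\mathbf B:=\mathbf M(\mathbf W,-\mathbf I)$ is $L\times 2n$.

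Next, identify $\tilde{\mathbf K}$ as the Gram matrix of $\Phi$, namely $\tilde{\mathbf K}_{kl}=\langle\Phi_k,\Phi_l\rangle_{\mathcal H}$. Using the product kernel, its blocks are:
\begin{itemize}
\item $\mathbf K_{11}$ with entries $K_A(A_i,A_l)K_X(X_i,X_l)$;
\item $\mathbf K_{12}$ with entries $\frac1n\sum_j K_A(A_i,A_l)K_X(X_i,X_j)$;
\item $\mathbf K_{22}$ with entries $\frac1{n^2}\sum_{j,j'}K_A(A_i,A_l)K_X(X_j,X_{j'})$.
\end{itemize}
Each factorizes as $K_A(A_i,A_l)$ times an $X$-part, which I expect matches \eqref{eq:k_tilde_mat}. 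The decomposition into $K_A\circ(\cdot)$ Hadamard blocks is where the product structure is used. The key identity is $TT^*=\tilde{\mathbf K}$, since $T^*c=\sum_k c_k\Phi_k$ and $\langle T^*c,T^*c'\rangle = c^\intercal\tilde{\mathbf K}c'$.

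Then $\|\mathbf BT\|_{op}^2=\|\mathbf BTT^*\mathbf B^\intercal\|_{op}=\|\mathbf B\tilde{\mathbf K}\mathbf B^\intercal\|_{op}=\|\mathbf B\tilde{\mathbf K}^{1/2}\|_{op}^2$, using $\|R\|^2=\|RR^*\|$ twice and $\tilde{\mathbf K}\succeq0$, so that the square root exists. This is exactly the claimed expression.

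As a representer-theorem remark, the supremum is attained within $\mathrm{span}\,\Phi$, since components orthogonal to the span do not change $Tf$ but do increase the norm. This justifies restricting to $f=T^*c$ if one prefers a direct finite-dimensional argument, in which case $\sup_{c^\intercal\tilde{\mathbf K}c\le1}\|\mathbf B\tilde{\mathbf K}c\|^2$ gives the same value after substituting $u=\tilde{\mathbf K}^{1/2}c$.

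The main obstacle is bookkeeping rather than conceptual. The delicate points are getting the block entries of $\tilde{\mathbf K}$ right so that they agree with the appendix's definition (in particular the $1/n$ and $1/n^2$ averaging in the off-diagonal and lower blocks), and handling a singular $\tilde{\mathbf K}$ in the finite-dimensional route; the operator identity $TT^*=\tilde{\mathbf K}$ sidesteps the singularity.
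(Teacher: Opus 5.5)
Your proof is correct, and it reaches the identity by a somewhat different route from the paper. The paper first projects $f$ onto the span of the $2n$ representers. It then parametrizes $f=\sum_j\alpha_j\phi_{jj}+\sum_j\alpha'_j\,\frac1n\sum_k\phi_{jk}$ with $\phi_{jk}=K(\cdot,(A_j,X_k))$, and computes the two evaluation vectors as $\mathbf K_1\boldsymbol\alpha+\mathbf K_2\boldsymbol\alpha'$ and $\mathbf K_2^\intercal\boldsymbol\alpha+\mathbf K_3\boldsymbol\alpha'$. Finally, with your $\mathbf B=(\mathbf A^\intercal\mathbf A)^{-1}\mathbf A^\intercal(\mathbf W,-\mathbf I)$, it maximizes $\|\mathbf B\tilde{\mathbf K}\tilde{\boldsymbol\alpha}\|_2^2$ over $\tilde{\boldsymbol\alpha}^\intercal\tilde{\mathbf K}\tilde{\boldsymbol\alpha}\le1$, tacitly substituting $u=\tilde{\mathbf K}^{1/2}\tilde{\boldsymbol\alpha}$.

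You instead use the analysis operator $T$ with $TT^*=\tilde{\mathbf K}$ and the identity $\|R\|_{op}^2=\|RR^*\|_{op}$. This needs neither the projection step nor a coefficient parametrization. It also handles a singular $\tilde{\mathbf K}$ for free, whereas the paper's substitution silently relies on $u$ ranging only over $\operatorname{range}(\tilde{\mathbf K}^{1/2})$, with the complement annihilated by $\mathbf B\tilde{\mathbf K}^{1/2}$. Your argument makes plain that the identity holds for any reproducing kernel; the product structure enters only through the factored form of the blocks. The paper's version, in exchange, displays the worst-case $f$ explicitly as a combination of the $2n$ representers, which is the representer-theorem content of the statement.

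On your bookkeeping worry, your blocks agree with \eqref{eq:k_tilde_mat}. There the $X$-average in the off-diagonal block is attached to the row index, $(\mathbf K_2)_{il}=K_A(A_i,A_l)\,\frac1n\sum_j K_X(X_i,X_j)$. The paper's proof, when it restates $\mathbf K_2$, writes this factor as $\bar K_{X_j}$ with the column index. That is the transpose of what its own formula for $f(A_i,X_i)$ produces, so your version is the consistent one.
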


Directly forming and decomposing the full $2n \times 2n$ kernel matrix $\tilde{\mathbf{K}}$ can be computationally prohibitive for large $n$. To ensure scalability, we therefore employ a low-rank approximation of the kernel matrix using its rank-$r$ truncated eigenvalue decomposition, $\tilde{\mathbf{K}} \approx \mathbf{P}_1 \mathbf{Q}_1 \mathbf{P}_1^\intercal$, where $\mathbf{P}_1 \in \mathbb{R}^{2n \times r}$ contains the top $r$ eigenvectors and $\mathbf{Q}_1$ is a diagonal matrix of the corresponding eigenvalues.
$$
    \sup_{f \in \mathcal{H}(1)} S(w, f) \approx \|\mathbf{D}_w \|_{op}^2
    \quad \text{where} \quad 
    \mathbf{D}_w := (\mathbf{A}^\intercal \mathbf{A})^{-1}\mathbf{A}^\intercal 
        \bigl( \mathbf{W}, \, -\mathbf{I} \bigr)
        \mathbf{P}_1 \mathbf{Q}_1^{1/2}
$$

As a result, our original infinite-dimensional optimization problem reduces to the following finite-dimensional optimization problem
\begin{equation}
    \hat{w} \approx \underset{w \in \mathbb{R}^n_+}{\text{argmin}} \Bigl\{ \|\mathbf{D}_w \|_{op}^2 + \lambda p(w) \Bigr\}.
    \label{eq:w_hat_modified}
\end{equation}

The optimization problem in \eqref{eq:w_hat_modified} is efficiently solvable due to its convexity. 
As shown in Theorem~\ref{thm:convexity}, the objective is convex in $w$, ensuring global optimality and enabling efficient gradient-based optimization.

\begin{theorem}[Convexity]
\label{thm:convexity}
    Let $w \in \mathbb{R}^n$ and define $\mathbf{W} = \mathrm{diag}(w)$. 
    Let $\mathbf{P}_1$ be a fixed matrix, $\mathbf{Q}_1 \succeq 0$ be diagonal, and let $\lambda \ge 0$. 
    Define 
    $$ 
        \mathcal{L}(w) = \left\| \mathbf{D}_w \right\|_{op}^2 + \lambda p(w).
    $$
    Then $\mathcal{L}(w)$ is a convex function of $w$.
\end{theorem}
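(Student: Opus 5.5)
The plan is to show that each of the two terms in $\mathcal{L}(w)$ is the square of a nonnegative convex function of $w$; the sum is then convex because $\lambda \ge 0$. The key observation is that $\mathbf{W} = \mathrm{diag}(w)$ depends linearly on $w$. Consequently, $\mathbf{D}_w$ and $\mathbf{M}_w := (\mathbf{A}^\intercal \mathbf{A})^{-1}\mathbf{A}^\intercal \mathbf{W}$ are affine maps of $w$ into a space of matrices. Writing $\mathbf{B} := (\mathbf{A}^\intercal \mathbf{A})^{-1}\mathbf{A}^\intercal$ and $\mathbf{R} := \mathbf{P}_1\mathbf{Q}_1^{1/2}$, and splitting $\mathbf{R}$ into its top $n\times r$ block $\mathbf{R}_1$ and bottom block $\mathbf{R}_2$, I will use
$$
\mathbf{D}_w = \mathbf{B}\,\mathrm{diag}(w)\,\mathbf{R}_1 - \mathbf{B}\mathbf{R}_2 ,
$$
which is linear in $w$ plus a constant. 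Note that $\mathbf{Q}_1^{1/2}$ is well defined because $\mathbf{Q}_1 \succeq 0$ is diagonal.

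For the first term, let $g(w) := \|\mathbf{D}_w\|_{op}$. Since $w \mapsto \mathbf{D}_w$ is affine and $\|\cdot\|_{op}$ is a norm, $g$ is convex, via the triangle inequality and homogeneity applied to $\mathbf{D}_{\theta w + (1-\theta)v} = \theta \mathbf{D}_w + (1-\theta)\mathbf{D}_v$. Also $g \ge 0$. Because $t\mapsto t^2$ is convex and nondecreasing on $[0,\infty)$, the composition $g^2$ is convex. Concretely,
$$
g(\theta w+(1-\theta)v)^2 \le \bigl(\theta g(w)+(1-\theta)g(v)\bigr)^2 \le \theta g(w)^2 + (1-\theta) g(v)^2 .
$$
Alternatively, one can use the variational form $\|\mathbf{D}_w\|_{op}^2=\sup_{\|u\|=\|v\|=1}(u^\intercal \mathbf{D}_w v)^2$, which is a supremum of squares of affine functions of $w$ and hence convex.

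For the penalty, $p(w) = \|\mathbf{M}_w\|_F^2$ with $\mathbf{M}_w$ linear in $w$. The same argument with the Frobenius norm in place of the operator norm gives convexity. More explicitly, $p(w) = \sum_{i} \|\mathbf{b}_i\|_2^2 w_i^2$, where $\mathbf{b}_i$ is the $i$th column of $\mathbf{B}$. This is a nonnegatively weighted sum of the $w_i^2$, so it is evidently convex. Then $\mathcal{L} = g^2 + \lambda p$ is a nonnegative combination of convex functions, which completes the argument.

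The only step that needs real care is the composition step for the squared operator norm, which is not smooth, so I will argue by the monotone-composition inequality above rather than by Hessians. I will also check that the affine structure survives the block concatenation $(\mathbf{W},-\mathbf{I})$, since the identity block only contributes the constant $-\mathbf{B}\mathbf{R}_2$. I do not expect any genuine obstacle beyond stating these facts cleanly.
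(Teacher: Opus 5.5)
Your proposal is correct and follows essentially the same route as the paper, which also notes that $w \mapsto \mathbf{D}_w$ and $w \mapsto (\mathbf{A}^\intercal\mathbf{A})^{-1}\mathbf{A}^\intercal\mathbf{W}$ are affine and that $\|\cdot\|_{op}^2$ and $\|\cdot\|_F^2$ are convex, so $\mathcal{L}$ is convex. Your version adds details the paper leaves implicit: the monotone-composition argument for squaring a nonnegative convex function, and the explicit form $p(w)=\sum_i \|\mathbf{b}_i\|_2^2 w_i^2$.
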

The convexity result facilitates efficient optimization via projected gradient descent subject to the non-negativity constraint ($w_i \ge 0$).
The gradient of $\mathcal{L} (w)$ with respect to $w$ is given by 
$$
    \frac{\partial}{\partial w} \mathcal{L}(w) = 
    2 \cdot \text{diag}\left( \mathbf{P}_1\mathbf{Q}_1^{1/2} v \right) \mathbf{A} (\mathbf{A}^\intercal \mathbf{A})^{-1} \mathbf{D}_w v
    +
    2 \lambda \cdot \text{diag}\left( \mathbf{A}(\mathbf{A}^\intercal \mathbf{A})^{-2}\mathbf{A}^\intercal \right) \odot w
$$
where $v$ denotes the leading right singular vector of $\mathbf{D}_w$, and $\odot$ denotes the element-wise product.

These analytical gradients enable the implementation of a projected gradient descent algorithm to compute $\hat{w}$ under the non-negativity constraint. 
The regularization parameter $\lambda$ controls the trade-off between the balancing objective and the stability of the resulting weights. 
However, selecting $\lambda$ remains a practical challenge, since standard tuning procedures such as cross-validation are designed to optimize predictive performance and do not directly target covariate balance or causal estimation \citep{hastie2009elements}. 
While our theoretical results suggest an asymptotically optimal choice with $\lambda \asymp 1$, we propose a practical, data-driven strategy for tuning $\lambda$ in Section~\ref{s:lambda}.

%
%
\section{The Augmented Weighted Estimator and Kernel Ridge Regression Application} \label{s:AWE}

\subsection{Construction of the Augmented Weighted Estimator}

In the preceding section, we constructed balancing weights $\hat{w}$ by minimizing a robust measure of covariate imbalance. A direct application of these weights yields a weighted estimator for $\boldsymbol{\beta}^*$. However, weighting-based estimators can exhibit substantial variance, particularly when the weights are highly variable due to regions of low treatment density \citep{robins2000marginal, kang2007demystifying}. 
We therefore introduce the Augmented Weighted Estimator (AWE), which incorporates an outcome regression term into the weighted estimating equation. This augmentation reduces variance while retaining the balancing properties of the weights. The construction is related to augmented minimax linear estimators \citep{hirshberg2020augmented} and more broadly to approaches that combine weighting with outcome modeling for variance reduction.

To align with the nonparametric structure of the framework, we estimate the outcome function using kernel ridge regression (KRR). Since the balancing procedure is defined over an RKHS, KRR provides a coherent choice of outcome model within the same function space. This shared structure also yields a computational advantage. The $n \times n$ Gram matrix $\mathbf{K}_1$ required for KRR is a submatrix of the $2n \times 2n$ kernel matrix $\tilde{\mathbf{K}}$ used to construct the balancing weights; see Appendix~\ref{app:formulas}, equation~\eqref{eq:k1_mat}. As a result, the most computationally intensive kernel evaluations are performed only once.

Given this setup, the KRR estimator of the outcome function $m(A,X)$ is obtained by solving
$$
    \min_{m \in \mathcal{H}} \left\{ \frac{1}{n} \sum_{i=1}^{n} \Bigl(Y_i - m(A_i, X_i) \Bigr)^2 + \eta \|m\|_{\mathcal{H}}^2 \right\},
$$
where $\eta > 0$ is a regularization parameter.
By the Representer Theorem, the solution evaluated at a new point $(A_{\text{new}}, X_{\text{new}})$ takes the form
$$    
    \hat{m}(A_{\text{new}}, X_{\text{new}}) = \sum_{j=1}^{n} c_j K\bigl( (A_j, X_j), (A_{\text{new}}, X_{\text{new}}) \bigr),
$$
where $(c_1, \dots, c_n)^\intercal = (\mathbf{K}_1 + \eta n \mathbf{I})^{-1} \mathbf{Y}$.
Using $\hat{m}$ we define the augmented pseudo-outcomes
\begin{equation}  
    \tilde{Y}_i = \hat{w}_i\left(Y_i - \hat{m}(A_i,X_i)\right) + \frac{1}{n} \sum_{j=1}^n \hat{m} (A_i, X_j),
    \label{eq:y_tilde_def}
\end{equation}
and let $\tilde{\mathbf{Y}} = (\tilde{Y}_1, \dots, \tilde{Y}_n)^\intercal$.
The first term applies the balancing weights to the residuals $Y_i - \hat{m}(A_i,X_i)$ rather than to the raw outcomes, isolating the unexplained variation. The second term adds a model-based estimate of the marginal mean over the covariates.


With these augmented pseudo-outcomes, estimation reduces to a least squares projection of $\tilde{\mathbf{Y}}$ onto the treatment compositions. The resulting estimator is
\begin{equation}
    \hat{\boldsymbol{\beta}}_{AWE} = (\mathbf{A}^\intercal \mathbf{A})^{-1} \mathbf{A}^\intercal \tilde{\mathbf{Y}}.
    \label{eq:awe_estimator}
\end{equation}

Although the AWE combines weighting with outcome regression, it differs from classical doubly robust estimators such as augmented inverse probability weighting (AIPW) and double machine learning. These methods typically rely on sample splitting or cross-fitting to control bias arising from flexible nuisance estimation, which can degrade finite-sample performance. In contrast, the proposed estimator uses the full sample for both weight construction and outcome estimation.
A further distinction lies in the conditions required for $\sqrt{n}$-consistency. Doubly robust estimators generally require consistent estimation of both the weight function and the outcome regression. In contrast, the proposed estimator imposes no smoothness requirement on the weight function, and its consistency does not depend on the convergence rate of the estimated weights.

\subsection{Selection of Penalty Parameter $\lambda$} \label{s:lambda}

The regularization parameter $\lambda$ controls the trade-off between reducing worst-case covariate imbalance and controlling estimator variance; see Section~\ref{s:Methodology}, \eqref{eq:w_hat_modified}. In practice, the optimal choice of $\lambda$ depends on unknown features of the data-generating process, so it must be selected using a data-driven criterion.

Our approach is based on the error decomposition in Section~\ref{s:Methodology}. The estimation error consists of a stochastic balancing error, an approximation error, and a stochastic noise term. The approximation error is deterministic and does not depend on the weights, so selecting $\lambda$ reduces to balancing the stochastic balancing and noise terms. However, both terms depend on the unknown outcome function $m$ and the unobserved noise $\boldsymbol{\varepsilon}$. 
We adopt a plug-in approach based on the KRR estimator $\hat{m}$ from the previous subsection. Specifically, we replace the unknown outcome function $m$ with $\hat{m}$ and the noise $\boldsymbol{\varepsilon}$ with the empirical residuals $\hat{\boldsymbol{\varepsilon}} = \mathbf{Y} - \hat{\mathbf{m}}$, where $\hat{\mathbf{m}} = (\hat{m}(A_1, X_1), \dots, \hat{m}(A_n, X_n))^\intercal$.

In practice, $\lambda$ is selected from a predefined candidate set $\Lambda$. For each $\lambda \in \Lambda$, we compute the weights $\hat{w}^{(\lambda)}$ by solving \eqref{eq:w_hat_modified} and define $\hat{\mathbf{W}}^{(\lambda)} = \mathrm{diag}(\hat w_1^{(\lambda)}, \dots, \hat w_n^{(\lambda)})$. The selected value $\hat{\lambda}$ minimizes the empirical error proxy
$$
	\hat{\lambda} = 
    \underset{\lambda \in \Lambda}{\text{argmin}} \,
    \left\{ \left\| (\mathbf{A}^\intercal \mathbf{A})^{-1} \mathbf{A}^\intercal \left( \hat{\mathbf{W}} ^{(\lambda)} \hat{\mathbf{m}} - \hat{\mathbb{E}}_{X} [ \hat{\mathbf{m}} ] \right) \right\|_2^2 + \left\| (\mathbf{A}^\intercal \mathbf{A})^{-1} \mathbf{A}^\intercal \hat{\mathbf{W}} ^{(\lambda)} \hat{\boldsymbol{\varepsilon}} \right\|_2^2 \right\}.
$$
The first term penalizes imbalance in the predicted outcome surface, and the second penalizes residual noise.

%
%
\section{Theoretical Properties} \label{s:theory_section}

This section presents the theoretical properties of the proposed balancing framework and the resulting AWE. We begin by stating the assumptions required for the analysis. Under these assumptions, we establish convergence properties of the proposed estimators. The results formalize how the balancing framework and tensor product RKHS structure together ensure stable estimation under complex confounding structures.

%
%
\subsection{Assumptions}

We first state the assumptions on the data-generating process.

\begin{assumption}[Positive Definite Second Moment] \label{ass:Sigma_A}
The second moment matrix $\boldsymbol{\Sigma}_A = \mathbb{E}[A A^\intercal]$ is strictly positive definite, meaning its minimum eigenvalue is strictly bounded away from zero; i.e., $\phi_{\min}(\boldsymbol{\Sigma}_A) \ge c_A > 0$. 
\end{assumption}

\begin{assumption}[Error Conditions] \label{ass:epsilon}
The error terms $\{\varepsilon_i\}_{i=1}^n$ are independent and satisfy $\mathbb{E}[\varepsilon_i \mid A_i, X_i] = 0$ and $\operatorname{Var}(\varepsilon_i \mid A_i, X_i) = \sigma_i^2$, where there exists a constant $\sigma^2 < \infty$ such that $\sigma_i^2 \le \sigma^2$ for all $i=1, \dots, n$. 
\end{assumption}

\begin{assumption}[Bounded Weights] \label{ass:C_w}
The true weight function $w^*$ is uniformly bounded. That is, there exists a constant $C_w < \infty$ such that $\sup_{(a, x) \in \mathcal{A} \times \mathcal{X}} w^*(a, x) \leq C_w$. 
\end{assumption}

These assumptions are commonly employed in causal inference settings \citep{newey1994asymptotic, hirano2003efficient, graham2012inverse, ichimura2022influence}.
Assumption~\ref{ass:Sigma_A} guarantees identifiability through sufficient treatment variation. 
Assumption~\ref{ass:epsilon} ensures a well-defined conditional mean and controlled variability. 
Assumption~\ref{ass:C_w} is closely related to the overlap condition introduced in Section~\ref{s:preliminaries}, preventing extreme reweighting. 

We now state the assumptions on the outcome model.

\begin{assumption}[RKHS Structure with Bounded Kernels] \label{ass:kernels}
The true mean response function $m$ belongs to a Reproducing Kernel Hilbert Space $\mathcal{H}$ defined by a tensor product kernel 
$$ K((A,X),(A',X')) = K_A(A,A')K_X(X,X'). $$
The kernels $K_A$ and $K_X$ are uniformly bounded with finite constants $\kappa_A, \kappa_X < \infty$, i.e., 
$$\sup_{A \in \mathcal{A}} K_A(A, A) \le \kappa_A^2 \quad \text{and} \quad \sup_{X \in \mathcal{X}} K_X(X, X) \le \kappa_X^2.$$
\end{assumption}

\begin{assumption}[Entropy Condition] \label{ass:entropy}
The uniform entropy of the unit ball $\mathcal{H}(1)$ is bounded, such that for some constants $B>0$ and $0 < \alpha < 2$, we have $H(\mathcal{H}(1), \|\cdot\|_{\infty}, \epsilon) \leq B\,\epsilon^{-\alpha}$ for all $\epsilon > 0$. 
\end{assumption}

Assumption~\ref{ass:kernels} is mild and satisfied by many commonly used kernels such as the Gaussian kernel, which is bounded \citep{scholkopf2002learning, steinwart2008support}.
Assumption~\ref{ass:entropy} is a weak condition that holds for many commonly used function classes, including Sobolev spaces and RKHS induced by Gaussian kernels \citep{gyorfi2002distribution, zhou2002covering}.

%
%
\subsection{Theoretical Results}

We now present the main theoretical results, which establish statistical consistency and convergence rates for both the kernel balancing estimator and the AWE. 
We begin with a benchmark result for the estimator $\hat{\boldsymbol{\beta}}$ constructed using kernel functional balancing weights alone.

\begin{theorem}[Convergence Rate of the Weighted Estimator] \label{thm:conv_rate}
    Suppose Assumptions~\ref{ass:Sigma_A}--\ref{ass:entropy} hold. Let
    $$
        \hat{\boldsymbol{\beta}}
        =
        (\mathbf{A}^\intercal \mathbf{A})^{-1} \mathbf{A}^\intercal \hat{\mathbf{W}} \mathbf{Y},
    $$
    where $\hat{\mathbf{W}} = \mathrm{diag}(\hat{w})$ and $\hat{w}$ is defined in \eqref{eq:w_hat_def}.
    Then
    $$
        \left\| \hat{\boldsymbol{\beta}} - \boldsymbol{\beta}^* \right\|_2
        =
        O_p \left( \frac{1}{\sqrt{n}} \big( \| m \|_{\mathcal{H}} C_1 + \| m \|_\infty C_2 + C_3 \big) \right),
    $$
    
    where $C_1$, $C_2$, and $C_3$ are constants determined by those introduced in Assumptions~\ref{ass:Sigma_A}--\ref{ass:entropy} and the regularization parameter $\lambda \asymp 1$. Their explicit expressions are provided in \eqref{eq:thm3_C_1}, \eqref{eq:thm3_C_2}, and \eqref{eq:thm3_C_3}. 
\end{theorem}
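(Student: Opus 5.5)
We follow the error decomposition \eqref{eq:error_decomp}, with the population marginal mean $\mathbb{E}_X[\mathbf m]$ in the first term replaced by its empirical counterpart $\hat{\mathbb{E}}_X[\mathbf m]$. This gives four terms:
\begin{align*}
T_1 &= \bigl\|(\mathbf A^\intercal\mathbf A)^{-1}\mathbf A^\intercal(\hat{\mathbf W}\mathbf m-\hat{\mathbb E}_X[\mathbf m])\bigr\|_2,
&T_2 &= \bigl\|(\mathbf A^\intercal\mathbf A)^{-1}\mathbf A^\intercal(\hat{\mathbb E}_X[\mathbf m]-\mathbb E_X[\mathbf m])\bigr\|_2,\\
T_3 &= \bigl\|(\mathbf A^\intercal\mathbf A)^{-1}\mathbf A^\intercal\mathbb E_X[\mathbf m]-\boldsymbol\beta^*\bigr\|_2,
&T_4 &= \bigl\|(\mathbf A^\intercal\mathbf A)^{-1}\mathbf A^\intercal\hat{\mathbf W}\boldsymbol\varepsilon\bigr\|_2 .
\end{align*}
Throughout we use that $n^{-1}\mathbf A^\intercal\mathbf A\to\boldsymbol\Sigma_A$ almost surely, since $A$ is bounded on the simplex. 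By Assumption~\ref{ass:Sigma_A}, with probability tending to one $\phi_{\min}(\mathbf A^\intercal\mathbf A)\ge nc_A/2$, so $\|(\mathbf A^\intercal\mathbf A)^{-1}\mathbf A^\intercal\|_{op}\lesssim n^{-1/2}c_A^{-1/2}$.

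\textbf{Comparison with the oracle weights.} The key device is that $\hat w$ minimizes the objective in \eqref{eq:w_hat_def}, so its value is at most the value at the oracle weights $w^*$. By homogeneity,
\[
T_1^2=\|m\|_{\mathcal H}^2\, S\bigl(\hat w,m/\|m\|_{\mathcal H}\bigr)\le \|m\|_{\mathcal H}^2\Bigl\{\sup_{f\in\mathcal H(1)}S(w^*,f)+\lambda p(w^*)\Bigr\}.
\]
Similarly $p(\hat w)\le \lambda^{-1}\{\sup_{f\in\mathcal H(1)} S(w^*,f)+\lambda p(w^*)\}$. Since $\lambda\asymp1$, it therefore suffices to show two things:
\[
\text{(a) } p(w^*)=O_p(1/n), \qquad \text{(b) } \sup_{f\in\mathcal H(1)}S(w^*,f)=O_p(1/n).
\]
Part (a) is immediate from Assumption~\ref{ass:C_w}: $p(w^*)\le C_w^2\|(\mathbf A^\intercal\mathbf A)^{-1}\mathbf A^\intercal\|_F^2\lesssim C_w^2L/(nc_A)$.

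\textbf{The main obstacle: part (b).} We must bound
\[
\sup_{f\in\mathcal H(1)}\Bigl\|\tfrac1n\textstyle\sum_i A_i\bigl\{w^*_if(A_i,X_i)-\tfrac1n\sum_j f(A_i,X_j)\bigr\}\Bigr\|_2
\]
at rate $n^{-1/2}$, after multiplying by $(n^{-1}\mathbf A^\intercal\mathbf A)^{-1}$. The plan is to add and subtract $\mathbb E_X f(A_i,X)$, splitting the expression into two pieces.

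The first piece is an empirical process indexed by $\mathcal H(1)$ with summands $A_i\{w^*_if(A_i,X_i)-\mathbb E_Xf(A_i,X)\}$. These have mean zero by the oracle balancing property of $w^*$, and they are bounded by $\sqrt L(C_w+1)\kappa_A\kappa_X$, because $\|f\|_\infty\le\kappa_A\kappa_X$. Under Assumption~\ref{ass:entropy} with $\alpha<2$, the class is Donsker. A maximal inequality, via Dudley's entropy integral $\int_0^{\cdot}\epsilon^{-\alpha/2}\,d\epsilon<\infty$, then gives an $O_p(n^{-1/2})$ supremum with constant depending on $B,\alpha,C_w,\kappa_A,\kappa_X$.

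The second piece is $\tfrac1n\sum_iA_i\{\tfrac1n\sum_jf(A_i,X_j)-\mathbb E_Xf(A_i,X)\}$, a V-statistic. Here the product kernel helps: $f(a,x)=\langle f,K_A(a,\cdot)K_X(x,\cdot)\rangle$, so the piece equals $\langle f,\Phi\rangle$ for a random $\mathcal H$-valued (or $\mathbb R^L\otimes\mathcal H$-valued) element
\[
\Phi=\Bigl(\tfrac1n\textstyle\sum_iA_iK_A(A_i,\cdot)\Bigr)\otimes\Bigl(\tfrac1n\textstyle\sum_jK_X(X_j,\cdot)-\mu_X\Bigr),
\]
where $\mu_X$ is the kernel mean embedding of the covariate distribution. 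The supremum over $\mathcal H(1)$ is then at most $\|\Phi\|\le\sqrt L\,\kappa_A\cdot\|\hat\mu_X-\mu_X\|_{\mathcal H_X}=O_p(\kappa_A\kappa_X\sqrt L/\sqrt n)$, by a Hilbert-space Chebyshev bound. The delicate point is ensuring that the $i=j$ diagonal terms and the dependence between $A_i$ and $X_j$ do not break the factorization; if they do, a Hoeffding decomposition of the V-statistic would be used instead.

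\textbf{Remaining terms.} $T_2$ is handled by the same V-statistic argument applied to $m$, giving $O_p(\|m\|_{\mathcal H}/\sqrt n)$, or alternatively $\|m\|_\infty$-type bounds via the empirical process. $T_3$ is the sample OLS of the fixed function $g(a)=\mathbb E_Xm(a,X)$ on $A$, whose population limit is exactly $\boldsymbol\beta^*$; the standard CLT for least squares with bounded $g$ gives $O_p(\|m\|_\infty/\sqrt n)$, which is the $C_2$ contribution. For $T_4$, conditioning on $\{A_i,X_i\}$ does not yield a mean-zero noise term, since $\hat w$ depends on $Y$ only through nothing—$\hat w$ depends on $(A,X)$ alone, as the objective \eqref{eq:w_hat_def} does not involve $Y$. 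Hence
\[
\mathbb E[T_4^2\mid \mathbf A,\mathbf X]\le\sigma^2p(\hat w)=O_p(\sigma^2/n),
\]
using the bound on $p(\hat w)$ above, and this is the $C_3$ contribution. Collecting the terms gives the stated rate, with $C_1$ collecting the $\lambda$, $C_w$, $c_A$, $\kappa$ and entropy constants.
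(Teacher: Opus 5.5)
Your proposal is correct and follows the paper's proof essentially step for step. It uses the same four-term decomposition and the same basic inequality against the feasible oracle $w^*$ to control both $\sup_{f\in\mathcal{H}(1)}S(\hat w,f)$ and $p(\hat w)$, Dudley's entropy integral for the centered weighted process, Hoeffding/OLS-CLT arguments for the two $\|m\|_\infty$ terms, and $\mathbb{E}[T_4^2\mid\mathbf{A},\mathbf{X}]\le\sigma^2p(\hat w)$ for the noise.

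The only real variation is your mean-embedding bound for the $\hat{\mathbb{E}}_X[\mathbf f]-\mathbb{E}_X[\mathbf f]$ piece, where the paper instead uses $\|(\mathbf{A}^\intercal\mathbf{A})^{-1}\mathbf{A}^\intercal\|_F^2\sum_i\sup_f\delta_i(f)^2$. Your factorization is an exact algebraic identity, so the worry about diagonal terms is moot. Also, the opening of your $T_4$ paragraph is garbled: conditioning on $(\mathbf{A},\mathbf{X})$ \emph{does} leave a mean-zero noise term, precisely because $\hat w$ depends only on $(\mathbf{A},\mathbf{X})$.
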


This result establishes consistency of the proposed weighted estimator.
Its convergence rate depends on $\|m\|_{\mathcal H}$, which reflects the complexity of the outcome function.
When the response surface is highly complex, this dependence may deteriorate finite-sample performance.
The following theorem shows that the augmented estimator replaces this dependence with the estimation error of $\hat m$.

\begin{assumption}[Convergence Rate of the Outcome Function Estimation Error] \label{ass:delta_conv}
    Let $m$ denote the true outcome function and $\hat{m}$ its estimator, both belonging to the Reproducing Kernel Hilbert Space $\mathcal{H}$. There exists a constant $\zeta > 0$ such that 
    $$ \| \hat{m} - m \|_{\mathcal{H}} = O_p\bigl(n^{-\zeta}\bigr).$$
\end{assumption}

\begin{theorem}[Convergence Rate of the Augmented Weighted Estimator] \label{thm:awe_rate}
    Under the conditions of Theorem~\ref{thm:conv_rate} and  Assumption~\ref{ass:delta_conv}, let $\hat{\boldsymbol{\beta}}_{AWE}$ be defined as in \eqref{eq:awe_estimator}. 
    Then
    $$
        \left\| \hat{\boldsymbol{\beta}}_{AWE} - \boldsymbol{\beta}^* \right\|_2
        =
        O_p \left( \frac{1}{\sqrt{n}} \big( n^{-\zeta} C_1 + \| m \|_\infty C_2 + C_3 \big) \right),
    $$
    where $C_1$, $C_2$, and $C_3$ are constants determined by those introduced in Assumptions~\ref{ass:Sigma_A}--\ref{ass:entropy} and the regularization parameter $\lambda \asymp 1$. Their explicit expressions are provided in \eqref{eq:thm3_C_1}, \eqref{eq:thm3_C_2}, and \eqref{eq:thm3_C_3}.
\end{theorem}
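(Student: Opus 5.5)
We write $\delta := m - \hat m$ and rewrite the augmented estimator so that the proof of Theorem~\ref{thm:conv_rate} can be reused with $\delta$ in place of $m$. Since $Y_i = m(A_i,X_i) + \varepsilon_i$, the pseudo-outcomes in \eqref{eq:y_tilde_def} satisfy
$$
\tilde{\mathbf Y} = \hat{\mathbf W}\boldsymbol\delta + \hat{\mathbf W}\boldsymbol\varepsilon + \hat{\mathbb E}_X[\hat{\mathbf m}],
$$
where $\boldsymbol\delta = (\delta(A_i,X_i))_{i=1}^n$. Adding and subtracting $\hat{\mathbb E}_X[\mathbf m]$ and then $\mathbb E_X[\mathbf m]$ gives
$$
\hat{\boldsymbol\beta}_{AWE}-\boldsymbol\beta^* = (\mathbf A^\intercal\mathbf A)^{-1}\mathbf A^\intercal\bigl(\hat{\mathbf W}\boldsymbol\delta-\hat{\mathbb E}_X[\boldsymbol\delta]\bigr) + (\mathbf A^\intercal\mathbf A)^{-1}\mathbf A^\intercal\bigl(\hat{\mathbb E}_X[\mathbf m]-\mathbb E_X[\mathbf m]\bigr) + \Bigl\{(\mathbf A^\intercal\mathbf A)^{-1}\mathbf A^\intercal\mathbb E_X[\mathbf m]-\boldsymbol\beta^*\Bigr\} + (\mathbf A^\intercal\mathbf A)^{-1}\mathbf A^\intercal\hat{\mathbf W}\boldsymbol\varepsilon .
$$
This is the decomposition \eqref{eq:error_decomp} with one change: the balancing term now acts on $\delta$ instead of $m$.

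\textbf{Step 1 (balancing term).} Because $\delta \in \mathcal H$, homogeneity of $S(w,\cdot)$ gives $S(\hat w,\delta) \le \|\delta\|_{\mathcal H}^2 \sup_{f\in\mathcal H(1)} S(\hat w,f)$. Therefore
$$
\bigl\|(\mathbf A^\intercal\mathbf A)^{-1}\mathbf A^\intercal(\hat{\mathbf W}\boldsymbol\delta-\hat{\mathbb E}_X[\boldsymbol\delta])\bigr\|_2 \le \|\hat m-m\|_{\mathcal H}\Bigl\{\sup_{f\in\mathcal H(1)}S(\hat w,f)\Bigr\}^{1/2}.
$$
This inequality holds deterministically, for every realization. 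This is the key point: $\hat m$ is fit on the same sample, but the supremum over the unit ball absorbs that dependence, so no sample splitting is needed.

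In the proof of Theorem~\ref{thm:conv_rate}, the bound $\sup_f S(\hat w,f) + \lambda p(\hat w) = O_p(C_1^2/n)$ would be obtained by comparing with the oracle weights $w^*$. Since $\hat w$ is the minimizer, its objective is at most that of $w^*$. The objective at $w^*$ is controlled using Assumption~\ref{ass:C_w} and an empirical-process bound over $\mathcal H(1)$ that relies on Assumptions~\ref{ass:kernels}--\ref{ass:entropy}. I reuse this bound unchanged. Combined with Assumption~\ref{ass:delta_conv}, the balancing term is $O_p(n^{-\zeta}C_1/\sqrt n)$.

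\textbf{Step 2 (remaining terms).} The other terms are identical to those in Theorem~\ref{thm:conv_rate}, so I import their bounds.

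The empirical-versus-population marginal mean term, together with the approximation term, yields the $\|m\|_\infty C_2/\sqrt n$ contribution. Here $(\mathbf A^\intercal\mathbf A)^{-1}\mathbf A^\intercal\mathbb E_X[\mathbf m]$ is a sample least squares fit whose population target is $\boldsymbol\beta^*$, using $\mathbb E[A\,\mathbb E_X m(A,X)] = \boldsymbol\Sigma_A\boldsymbol\beta^*$. It is controlled by the CLT and Assumption~\ref{ass:Sigma_A}. The term $\hat{\mathbb E}_X[\mathbf m]-\mathbb E_X[\mathbf m]$ is a V-statistic-type average with bounded entries, which is again $O_p(\|m\|_\infty/\sqrt n)$.

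The noise term involves $\hat w$, which depends only on $(\mathbf A,\mathbf X)$ and not on $\boldsymbol\varepsilon$. Conditionally on $(\mathbf A,\mathbf X)$, its second moment is at most $\sigma^2 p(\hat w) = O_p(C_3^2/n)$ by Assumption~\ref{ass:epsilon}, using the same $\lambda\asymp1$ bound on $p(\hat w)$.

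\textbf{Main obstacle.} The delicate point is Step 1's reliance on the bound for $\sup_f S(\hat w,f)$. We must make sure it is stated for the data-dependent minimizer $\hat w$ and not only at $w^*$, so that multiplying by the random quantity $\|\hat m-m\|_{\mathcal H}$ preserves the $O_p$ rate. This works because the product of two $O_p$ terms is $O_p$ of the product of their rates. One also needs $\hat m \in \mathcal H$, which KRR guarantees.
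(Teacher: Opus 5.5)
Your proposal is correct and follows essentially the same route as the paper. You rewrite $\tilde{\mathbf Y}$ so that only the balancing term changes, now acting on $\Delta = m - \hat m$, and reuse the marginal-mean, approximation, and noise bounds from Theorem~\ref{thm:conv_rate} unchanged. You then bound the balancing term by $\|\hat m - m\|_{\mathcal H}\{\sup_{f\in\mathcal H(1)} S(\hat w,f)\}^{1/2}$ via the minimizer comparison with $w^*$, and conclude with Assumption~\ref{ass:delta_conv}. Your remarks that this bound holds realization-wise, and that $\hat w$ depends only on $(\mathbf A,\mathbf X)$, make explicit what the paper's proof uses implicitly.
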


Theorem~\ref{thm:awe_rate} replaces the dependence on $\|m\|_{\mathcal H}$ with a term involving $\|\hat m - m\|_{\mathcal H}$.
As the estimation error decreases, the contribution of the outcome regression term becomes smaller even when the true outcome function is complex.

\begin{remark}
    Assumption~\ref{ass:delta_conv} is satisfied when $\hat{m}$ is estimated via KRR. In particular, Theorem~1(ii) of \citet{fischer2020sobolev} implies that, under standard regularity conditions on the kernel and the target function, including eigenvalue decay and source conditions, and with an appropriate choice of the regularization parameter of KRR $\eta$, $\hat{m}$ satisfies the assumption with some $\zeta > 0$.
\end{remark}

We next study the asymptotic distribution of the proposed estimator through the decomposition
$$
	\hat{\boldsymbol{\beta}}_{AWE} - \boldsymbol{\beta}^* = (\hat{\boldsymbol{\beta}}_{AWE} - \tilde{\boldsymbol{\beta}}) + (\tilde{\boldsymbol{\beta}} - \boldsymbol{\beta}^*).
$$
where $ \tilde{\boldsymbol{\beta}} = (\mathbf{A}^\intercal \mathbf{A})^{-1} \mathbf{A}^\intercal \hat{\mathbb{E}}_X[\mathbf{m}].$ 
The quantity $\tilde{\boldsymbol{\beta}}$ represents the sample-specific linear projection induced by the empirical covariate distribution under the true outcome function $m$.
In many causal inference procedures, asymptotic normality with respect to the target parameter $\boldsymbol{\beta}^*$ typically requires consistent estimation of both the outcome regression function and the treatment assignment mechanism \citep{robins1994estimation, bang2005doubly}. 
These results often rely on convergence rate conditions and additional regularity assumptions on the nuisance estimators.
In contrast, the proposed estimator does not require the estimated weights $\hat{w}$ to converge to the true weights $w^*$, nor does it impose convergence rate or smoothness conditions on the weight estimator.
While this avoids imposing strong assumptions on the weighting mechanism, it also makes it difficult to characterize the asymptotic variance of $ \hat{\boldsymbol{\beta}}_{AWE} - \boldsymbol{\beta}^* $ through convergence to a fixed covariance matrix.
By comparison, the variance of $ \hat{\boldsymbol{\beta}}_{AWE} - \tilde{\boldsymbol{\beta}} $ can be characterized directly. 
This allows us to establish asymptotic normality for the intermediate term without imposing additional strong assumptions on the estimated weights.

\begin{assumption}[Conditional Third Moment Condition] \label{ass:eps_third_moment}
    Let $\varepsilon_i = Y_i - m(A_i, X_i)$ denote the true residual.
    There exists a constant $C_\varepsilon > 0$ such that
    $$ 
        \sup_{1 \le i \le n} \mathbb{E} \left[ | \varepsilon_i |^3 \mid A_i, X_i \right] \le C_\varepsilon
    $$
    almost surely.
\end{assumption}

\begin{assumption}[Weighted Stability Condition]
\label{ass:weighted_stability}
The estimated weights $\hat{w}$ satisfy
$$
    \| \hat{w} \|_{\infty} = o_p(n^{1/6}),
$$
and there exists a constant $c_V > 0$ such that
$$
    \phi_{\min} \left( 
    \frac{1}{n} \sum_{i=1}^n \hat{w}_i^2 \sigma_i^2 A_i A_i^\intercal
    \right) \ge c_V
$$
with probability tending to one.
\end{assumption}

Assumption~\ref{ass:weighted_stability} is imposed to control the weighted noise term in the asymptotic normality proof. The growth condition on $\|\hat{w}\|_\infty$ is used to verify the Lyapunov condition, while the nondegeneracy condition ensures that the asymptotic covariance matrix remains well-conditioned.
A stronger sufficient condition for the nondegeneracy requirement is the existence of constants $c_w > 0$ and $c_\sigma > 0$ such that
$$
\hat{w}_i \ge c_w
\quad \text{and} \quad
\sigma_i^2 \ge c_\sigma
$$
for all $i = 1, \dots, n$ with probability tending to one. Combined with Assumption~\ref{ass:Sigma_A}, these conditions imply the weighted covariance condition in Assumption~\ref{ass:weighted_stability} asymptotically.

\begin{theorem}[Asymptotic Normality of the Augmented Weighted Estimator] \label{thm:awe_normality}
    Under the conditions of Theorem~\ref{thm:awe_rate} and Assumptions~\ref{ass:eps_third_moment} and \ref{ass:weighted_stability}, it holds that
    $$
        \sqrt{n} 
        \left\{ 
        \hat{\boldsymbol{\Sigma}}_A^{-1} \left( \frac{1}{n} \sum_{i=1}^n \hat{w}_i^2 \hat{\varepsilon}_i^2 A_i A_i^\intercal \right) \hat{\boldsymbol{\Sigma}}_A^{-1} 
        \right\}^{-1/2} 
        \left( 
        \hat{\boldsymbol{\beta}}_{AWE} - \tilde{\boldsymbol{\beta}} 
        \right) 
        \overset{d}{\longrightarrow} \mathcal{N}(\mathbf{0}, \mathbf{I}),
    $$
    where $ \hat{\varepsilon}_i = Y_i - \hat{m} (A_i, X_i) $.
\end{theorem}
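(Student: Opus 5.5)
The plan is to split $\hat{\boldsymbol{\beta}}_{AWE} - \tilde{\boldsymbol{\beta}}$ into a weighted noise term plus remainders that are $o_p(n^{-1/2})$. Write $\hat\delta = \hat m - m$. Substituting $Y_i = m(A_i,X_i)+\varepsilon_i$ into \eqref{eq:y_tilde_def} gives
$$
\hat{\boldsymbol{\beta}}_{AWE} - \tilde{\boldsymbol{\beta}}
= (\mathbf A^\intercal\mathbf A)^{-1}\mathbf A^\intercal \hat{\mathbf W}\boldsymbol\varepsilon
- (\mathbf A^\intercal\mathbf A)^{-1}\mathbf A^\intercal\bigl(\hat{\mathbf W}\hat{\boldsymbol\delta} - \hat{\mathbb E}_X[\hat{\boldsymbol\delta}]\bigr),
$$
where $\hat{\boldsymbol\delta}=(\hat\delta(A_i,X_i))_i$.

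\textbf{Balancing remainder.} The second term is bounded by $\|\hat\delta\|_{\mathcal H}\, \sup_{f\in\mathcal H(1)} S(\hat w,f)^{1/2}$. The proof of Theorem~\ref{thm:awe_rate} (through Theorem~\ref{thm:representer} and the comparison of $\hat w$ with the oracle $w^*$) controls this supremum at order $O_p(n^{-1/2})$. Assumption~\ref{ass:delta_conv} then makes the term $O_p(n^{-1/2-\zeta}) = o_p(n^{-1/2})$. One subtlety is that $\hat\delta$ is data-dependent, but the worst-case bound is uniform over $\mathcal H(1)$, so this causes no difficulty.

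\textbf{Noise term.} Write $\sqrt n\,(\mathbf A^\intercal\mathbf A)^{-1}\mathbf A^\intercal\hat{\mathbf W}\boldsymbol\varepsilon = \hat{\boldsymbol\Sigma}_A^{-1} n^{-1/2}\sum_i \hat w_i A_i\varepsilon_i$, with $\hat{\boldsymbol\Sigma}_A=\mathbf A^\intercal\mathbf A/n \to \boldsymbol\Sigma_A$ by the LLN and Assumption~\ref{ass:Sigma_A}. The main obstacle is that $\hat w$ depends on $(A_i,X_i)$ but \emph{not} on $\mathbf Y$, since the weights in \eqref{eq:w_hat_def} use only $\mathbf A$ and the kernel matrix in $(A,X)$; this is the key point to verify, and it presumes $\lambda$ is fixed rather than tuned through $\hat m$. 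Granting it, I would condition on $\mathcal F_n=\sigma(A_i,X_i)_{i\le n}$. Conditionally, the summands $\hat w_i A_i\varepsilon_i$ are independent with mean zero and covariance $V_n = n^{-1}\sum_i \hat w_i^2\sigma_i^2A_iA_i^\intercal$. For any unit vector $u$, I apply the Lyapunov CLT to $u^\intercal V_n^{-1/2} n^{-1/2}\sum_i\hat w_iA_i\varepsilon_i$. Using $\|A_i\|\le1$, Assumption~\ref{ass:eps_third_moment}, and $\phi_{\min}(V_n)\ge c_V$, the Lyapunov ratio is bounded by $C\,n^{-3/2}\sum_i\hat w_i^3 \le C\|\hat w\|_\infty\, n^{-1/2}\cdot n^{-1}\sum_i \hat w_i^2$. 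The penalty $p(\hat w)$ together with the comparison to $w^*$ bounded by $C_w$ gives $n^{-1}\sum\hat w_i^2=O_p(1)$, and then $\|\hat w\|_\infty=o_p(n^{1/6})$ makes the ratio $o_p(1)$. (A cruder route bounds the ratio by $\|\hat w\|_\infty^3 n^{-1/2}$, which is where the $n^{1/6}$ rate comes from.) Cramér--Wold, together with dominated convergence of conditional characteristic functions, yields unconditional convergence $V_n^{-1/2}n^{-1/2}\sum_i\hat w_iA_i\varepsilon_i \to \mathcal N(0,\mathbf I)$.

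\textbf{Variance estimation and Slutsky.} It remains to show $\hat V_n - V_n = o_p(1)$, where $\hat V_n = n^{-1}\sum\hat w_i^2\hat\varepsilon_i^2A_iA_i^\intercal$. Expand $\hat\varepsilon_i^2 = \varepsilon_i^2 - 2\varepsilon_i\hat\delta_i + \hat\delta_i^2$. The piece $n^{-1}\sum \hat w_i^2(\varepsilon_i^2-\sigma_i^2)A_iA_i^\intercal$ has conditional mean zero. Its size can be handled by truncation using the third-moment bound and $\|\hat w\|_\infty^2 = o_p(n^{1/3})$: the conditional $L^{3/2}$ norm gives the rate $\|\hat w\|_\infty^2 n^{-1/3}\to0$ via a von Bahr--Esseen type inequality. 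The cross and squared terms are controlled by $\|\hat\delta\|_\infty\le\kappa_A\kappa_X\|\hat\delta\|_{\mathcal H}=O_p(n^{-\zeta})$ together with $n^{-1}\sum\hat w_i^2(1+|\varepsilon_i|)=O_p(1)$. Since $\phi_{\min}(V_n)\ge c_V$, the inverse square roots converge, and combining with $\hat{\boldsymbol\Sigma}_A^{-1}\to\boldsymbol\Sigma_A^{-1}$ and the $o_p(n^{-1/2})$ remainder from the balancing step, Slutsky's lemma gives the stated limit.
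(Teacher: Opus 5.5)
Your proposal is correct and follows the paper's proof essentially step for step. The shared steps are: the split of $\hat{\boldsymbol{\beta}}_{AWE}-\tilde{\boldsymbol{\beta}}$ into the weighted noise term plus a balancing remainder bounded by $\|\hat m - m\|_{\mathcal H}\sup_{f\in\mathcal H(1)} S(\hat w,f)^{1/2} = O_p(n^{-1/2-\zeta})$; a conditional Lyapunov CLT given $\{(A_i,X_i)\}_{i=1}^n$; and consistency of the plug-in variance via the $\hat\varepsilon_i^2$ expansion, followed by Slutsky.

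Your extra checks fill in points the paper only asserts. These are that $\hat w$ must not depend on $\mathbf Y$ (so $\lambda$ cannot be tuned through $\hat m$), that $n^{-1}\sum_i \hat w_i^2 = O_p(1)$ follows from $p(\hat w)$, and the von Bahr--Esseen bound for the $\varepsilon_i^2-\sigma_i^2$ term.

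One small point on the last step: $V_n$ need not converge, so $\hat{\boldsymbol\Sigma}_A^{-1}\to\boldsymbol\Sigma_A^{-1}$ alone does not close the argument. It is cleaner to apply your conditional CLT (which is uniform over directions) directly to the $\hat{\boldsymbol\Sigma}_A^{-1}$-sandwiched covariance, as the paper does.
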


Theorem~\ref{thm:awe_normality} shows that the leading stochastic component of the augmented weighted estimator admits an asymptotically Gaussian representation after appropriate normalization. The normalization matrix is constructed using the estimated residuals $\hat{\varepsilon}_i^2$ in place of the unknown conditional variances $\sigma_i^2$, which yields a fully data-driven standardization. The result establishes asymptotic normality around the sample-dependent quantity $\tilde{\boldsymbol{\beta}}$, rather than the population target $\boldsymbol{\beta}^*$. Under the proposed normalization, the limiting distribution is centered at zero with identity covariance matrix.

%
%
\section{Simulation Studies} \label{s:simulation}

To evaluate the finite-sample performance of the proposed AWE built upon the kernel functional balancing framework, we conduct a comprehensive simulation study. The objective is to examine robustness and efficiency under data generating processes that exhibit different structural relationships among the compositional treatment, covariates, and outcome. We assess the variance-regularizing weighting scheme and the augmentation step by comparing the proposed approach with several alternative estimators across three distinct scenarios.

\subsection{Data Generating Process}

Synthetic datasets of size $n$ are generated with treatment and covariate dimensions fixed at $L=p=3$. The covariates $X_i = (x_{i1}, x_{i2}, x_{i3})^\intercal$ are drawn from a multivariate normal distribution
$$
    X_i \sim \mathcal{N}(\mathbf{1}, \boldsymbol{\Sigma}_X),
    \text{ where }
    \boldsymbol{\Sigma}_X = \begin{pmatrix}
        1 & 0.4 & 0.4 \\
        0.4 & 1 & 0.4 \\
        0.4 & 0.4 & 1
    \end{pmatrix},
$$
which induces moderate correlation among covariates.

To introduce nonlinear confounding between $X_i$ and treatment allocation, unnormalized treatment components $A_i' = (a_{i1}', a_{i2}', a_{i3}')^\intercal$ are generated through nonlinear transformations of the covariates
\begin{align*}
    a_{i1}' \sim \mathcal{N}(4 \sqrt{|x_{i1}|}, 1), \quad
    a_{i2}' \sim \mathcal{N}(1.5 x_{i2}, 1), \quad
    a_{i3}' \sim \mathcal{N}(0.5 x_{i3}^2, 1).
\end{align*}
The treatment vector $A_i$ is obtained by normalizing the absolute values of these components onto the $(L-1)$-dimensional simplex
$$ 
    A_i = \left( \frac{|a_{i1}'|}{\sum_{l=1}^L |a_{il}'|}, \dots, \frac{|a_{iL}'|}{\sum_{l=1}^L |a_{il}'|} \right)^\intercal. 
$$

The outcome is generated according to
$$ 
    Y_i = m(A_i, X_i) + \varepsilon_i, \qquad \varepsilon_i \sim \mathcal{N}(0, 1)
$$
under three structural models
\begin{align*}
    \text{Model 1 (Linear, Additive): } & m_1(A_i, X_i) = A_i^\intercal \boldsymbol{\beta} + X_i^\intercal \boldsymbol{\gamma}, \\
    \text{Model 2 (Nonlinear, Additive): } & m_2(A_i, X_i) = A_i^\intercal \boldsymbol{\beta} + g(X_i) + 2, \\
    \text{Model 3 (Nonlinear, Interactive): } & m_3(A_i, X_i) = (A_i^\intercal \boldsymbol{\beta}) \{ g(X_i) + 2 \},
\end{align*}
where $g(X_i)$ is the centered transformation of $x_{i1}^3 + x_{i2}x_{i3}$.

The vector $\boldsymbol{\beta}$ determines the contribution of the compositional treatment components in the outcome model, while $\boldsymbol{\gamma}$ governs the covariate contribution in Model 1. In the simulation, each component of $\boldsymbol{\beta}$ is drawn independently as an integer from $\{-5,-4,\dots,-1,1,\dots,5\}$, excluding zero. Each component of $\boldsymbol{\gamma}$ is drawn independently from $\{1,2,\dots,5\}$.

Because the parameter of interest is the population-level causal estimand $\boldsymbol{\beta}^*$ rather than the structural outcome coefficient $\boldsymbol{\beta}$, we approximate $\boldsymbol{\beta}^*$ using Monte Carlo integration over the marginal distribution of $X$. By generating $B$ independent draws $X^{(b)}$ from the marginal distribution of $X$, the target parameter under Model 1 is numerically evaluated as
$$
    \boldsymbol{\beta}^* \approx \boldsymbol{\beta} + \left( \frac{1}{B}\sum_{b=1}^{B} \left(X^{(b)}\right)^\intercal \boldsymbol{\gamma} \right) \mathbf{1}_L,
$$
where $\mathbf{1}_L$ is an $L$-dimensional vector of ones. 
Similarly, under Models 2 and 3, the relationships are approximated by
$$
    \boldsymbol{\beta}^* \approx \boldsymbol{\beta} + \left( \frac{1}{B}\sum_{b=1}^{B} \left\{ g\left(X^{(b)}\right) + 2 \right\} \right) \mathbf{1}_L, \qquad
    \boldsymbol{\beta}^* \approx \boldsymbol{\beta} \left( \frac{1}{B}\sum_{b=1}^{B} \left\{ g\left(X^{(b)}\right) + 2 \right\} \right),
$$
respectively. The number of Monte Carlo replicates is set to $B = 10n$.

\subsection{Estimators and Comparison Framework}

We evaluate eight estimators defined by different combinations of outcome modeling and weighting strategies. The outcome model $\hat{m}$ is used for selecting the tuning parameter $\lambda$ and constructing the augmented outcome vector $\tilde{\mathbf{Y}}$ required for the AWE procedure. As indicated in \eqref{eq:w_hat_modified}, $\hat{m}$ is not used in estimating the weights $\hat{w}$.

Four outcome models are considered. The naive model,
$ \hat{m}(A) = A^\intercal \boldsymbol{\beta}, $
serves as a simple parametric baseline that omits covariates $X$.  
The linear model with covariates,
$ \hat{m}(A, X) = A^\intercal \boldsymbol{\beta} + X^\intercal \boldsymbol{\gamma}, $
provides a standard linear regression model.
KRR, $ \hat{m}(A, X) = \hat{f}_{\mathrm{KRR}}(A, X)$, is a flexible nonparametric estimator described in Section~\ref{s:AWE}.  
Finally, the smoothing spline model,
$ \hat{m}(A, X) = \hat{f}_{\mathrm{SS}}(A, X), $
assumes an additive structure without interaction effects.

Three weighting strategies are examined. Uniform weights set $\hat{w}_i = 1$ and serve as an unweighted baseline without covariate balancing. The proposed penalized weights are obtained from the objective in \eqref{eq:w_hat_modified} and are designed to achieve covariate balance. Dirichlet regression weights are derived from a parametric Dirichlet model and represent a conventional approach for compositional treatments.

Combining these components yields eight estimators. Their definitions and labeling conventions are summarized in Table~\ref{tab:estimator_mapping}.
\begin{table}[htbp]
\centering
\small
\caption{Estimators used in Simulation Studies.}
\label{tab:estimator_mapping}
\begin{tabular}{l l l}
    \toprule
    \textbf{Estimator Label} & \textbf{Outcome Model ($\hat{m}$)} & \textbf{Weighting Method ($\hat{w}$)} \\
    \midrule
    NM         & Naive Model        & Uniform (Unweighted) \\
    Weighted   & -                  & Proposed Penalized Weights \\
    Dirichlet  & -                  & Dirichlet Regression Weights \\
    \addlinespace
    LM         & Linear Model with Covariates  & Uniform (Unweighted) \\
    KRR        & Kernel Ridge Regression       & Uniform (Unweighted) \\
    SS         & Smoothing Spline              & Uniform (Unweighted) \\
    \addlinespace
    KRR (AWE)  & Kernel Ridge Regression       & Proposed Penalized Weights (with Augmentation) \\
    KRR (Diri) & Kernel Ridge Regression       & Dirichlet Regression Weights (with Augmentation) \\
    \bottomrule
\end{tabular}
\end{table}

For each estimator, the target parameter $\boldsymbol{\beta}^*$ is obtained by 
$$ 
    \hat{\boldsymbol{\beta}}^* = \left(\mathbf{A}^\intercal \mathbf{A}\right)^{-1} \mathbf{A}^\intercal \mathbf{Y}^{\dagger},
$$
where $\mathbf{Y}^{\dagger} = (Y_1^{\dagger}, \ldots, Y_n^{\dagger})^\intercal$.
The form of $\mathbf{Y}^{\dagger}$ depends on whether augmentation is employed. For unaugmented estimators, $\mathbf{Y}^{\dagger} = \hat{\mathbf{W}} \mathbf{Y}$, whereas for augmented estimators, $\mathbf{Y}^{\dagger} = \tilde{\mathbf{Y}}$. The weights $\hat{w}_i$ are obtained using one of the three methods described above.

\subsection{Results and Discussion}

\begin{table}[htbp]
\centering
\footnotesize
\setlength{\tabcolsep}{4pt}
\caption{Simulation results under $m_1$, $m_2$, and $m_3$}
\label{tab:mse_results}
\begin{tabular}{l c c c | c c c | c c c}
    \toprule
    & \multicolumn{3}{c|}{Results under $m_1$} 
    & \multicolumn{3}{c|}{Results under $m_2$} 
    & \multicolumn{3}{c}{Results under $m_3$} \\
    \cmidrule(lr){2-10}
    \textbf{Model} & $n=100$ & $n=200$ & $n=400$ & $n=100$ & $n=200$ & $n=400$ & $n=100$ & $n=200$ & $n=400$ \\
    \midrule
    NM         & 27.8649 & 18.0018 & 14.7202 & 14.5455 & 7.3647 & 5.1677 & 43.1794 & 37.7537 & 33.7000 \\
    Weighted   & 20.3408 & 13.4687 & 12.5837 & 6.2250 & 4.0681 & 2.2839 & 28.5825 & 26.0877 & 23.6755 \\
    Dirichlet  & 2154.1743 & 1533.0190 & 1088.9011 & 351.6335 & 373.3823 & 194.5801 & 574.7100 & 575.2398 & 530.5572 \\
    \hline
    LM         & 0.7419 & 0.4060 & 0.1953 & 8.8508 & 4.6514 & 3.9442 & 119.0623 & 110.4890 & 110.0103 \\
    KRR        & 6.4389 & 4.3733 & 3.2061 & 4.2973 & 3.4678 & 2.7087 & 34.4532 & 27.0316 & 22.2884 \\
    SS         & 0.7896 & 0.4194 & 0.1994 & 0.7549 & 0.3064 & 0.1755 & 109.8386 & 101.3640 & 93.5561 \\
    \hline
    KRR (AWE)  & 5.5241 & 3.7023 & 2.7915 & 4.2066 & 3.3803 & 2.6367 & 33.7203 & 26.5738 & 21.9904 \\
    KRR (Diri) & 16.7829 & 11.6118 & 7.3741 & 8.7160 & 6.1644 & 3.8063 & 42.0677 & 28.0540 & 21.8820 \\
    \bottomrule
\end{tabular}\\
All entries represent the Mean Squared Error (MSE).
\end{table}

Model 1 considers a linear and additive outcome structure, serving as a benchmark setting where linear regression is correctly specified. As expected, the unweighted estimators `LM' and `SS' perform well, with `LM' achieving the best performance since it coincides with the true outcome model. Unsurprisingly, `NM' yields unsatisfactory results. While `Weighted' offers a slight improvement over `NM', the `KRR' estimator exhibits stable performance, with its MSE steadily decreasing as the sample size increases. Building on this, `KRR (AWE)' further improves upon the performance of unweighted `KRR'. In contrast, `KRR (Diri)' degrades the initial `KRR' estimates. Notably, traditional weighting via Dirichlet regression without the augmentation step (`Dirichlet') leads to highly unstable and explosive MSE values.

Model 2 introduces a nonlinear additive structure through $g(X)$. While `LM' is misspecified with respect to the components related to $X$, the `SS' estimator provides a close approximation to the true model and naturally yields the best results. In this scenario, `NM' and `Weighted' generally achieve better results than in Model 1. Notably, as the underlying model becomes more complex, `Weighted' exhibits a much more pronounced performance improvement than `NM'. The `KRR' estimator remains robust and performs well, even showing slight improvement relative to its performance in Model 1. In this setting, `KRR (AWE)' successfully refines the `KRR' estimates, yielding a modest improvement. However, Dirichlet regression related methods once again exhibit poor performance.

Model 3 introduces interactions between $A$ and $X$, violating the additivity assumptions of `LM' and `SS', which leads to persistent large errors for these models. In this complex setting, overall MSEs increase and the performance of `NM' deteriorates significantly. In contrast, `Weighted' maintains strong performance comparable to `KRR' and augmented models. `KRR' consistently outperforms the misspecified parametric models, and `KRR (AWE)' further improves these estimates. While `KRR (Diri)' anomalously achieves the lowest error at the largest sample size, this is merely an incidental artifact. Dirichlet regression requires strictly positive components, and the necessary ad-hoc adjustments for zero values in $A$ distort original treatment ratios, making the estimation unreliable despite the seemingly favorable MSE.

Overall, the simulation study demonstrates the versatility and robustness of the proposed `KRR (AWE)' estimator. Although `KRR (AWE)' produces higher MSEs than `LM' or `SS' when those specific parametric forms perfectly match the true data generating process, it consistently achieves the best performance among the remaining estimators across all scenarios. In contrast, traditional parametric weighting strategies frequently produce unstable weights and can substantially degrade estimation accuracy.

%
%
\section{Application to Real Data} \label{s:real_data}

We applied the proposed methodology to the 2024 \emph{American Time Use Survey} (ATUS), a nationally representative dataset that records how individuals allocate their time across various daily activities. Our final sample consists of $n = 1,\!868$ observations. In this dataset, the treatment variable $A$ corresponds to the proportion of time devoted to different categories of activities, while the outcome variable $Y$ is defined as each individual’s raw weekly income.

We include the demographic and socio-economic covariates \emph{household size}, \emph{number of children}, \emph{age}, and \emph{education level}, which we collectively denote by $X$.
\begin{figure}[htbp] 
    \centering
    
    \begin{subfigure}{0.48\textwidth}
        \centering
        \includegraphics[width=0.9\linewidth]{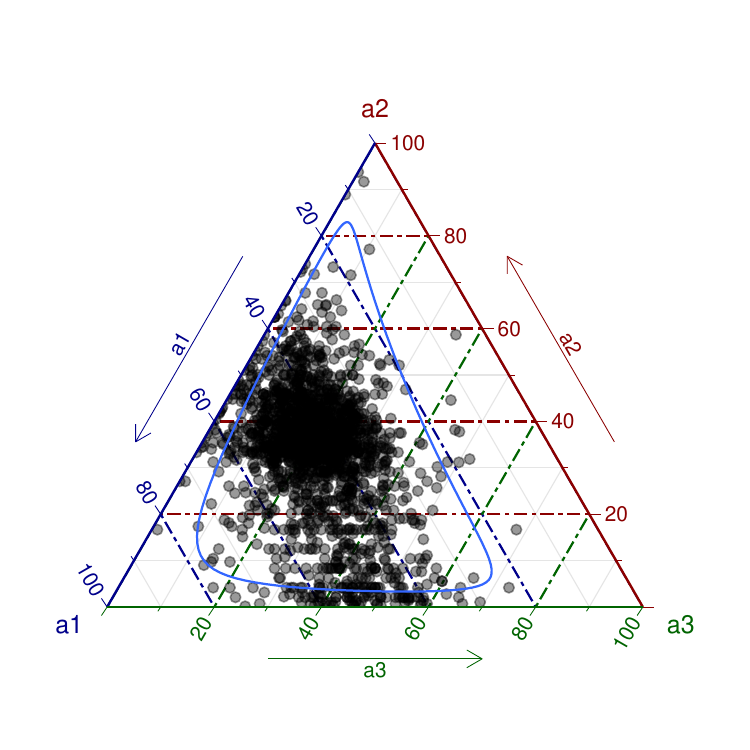}
        \caption{Observed compositions}
    \end{subfigure}
    \hfill
    \begin{subfigure}{0.48\textwidth}
        \centering
        \includegraphics[width=1.18\linewidth]{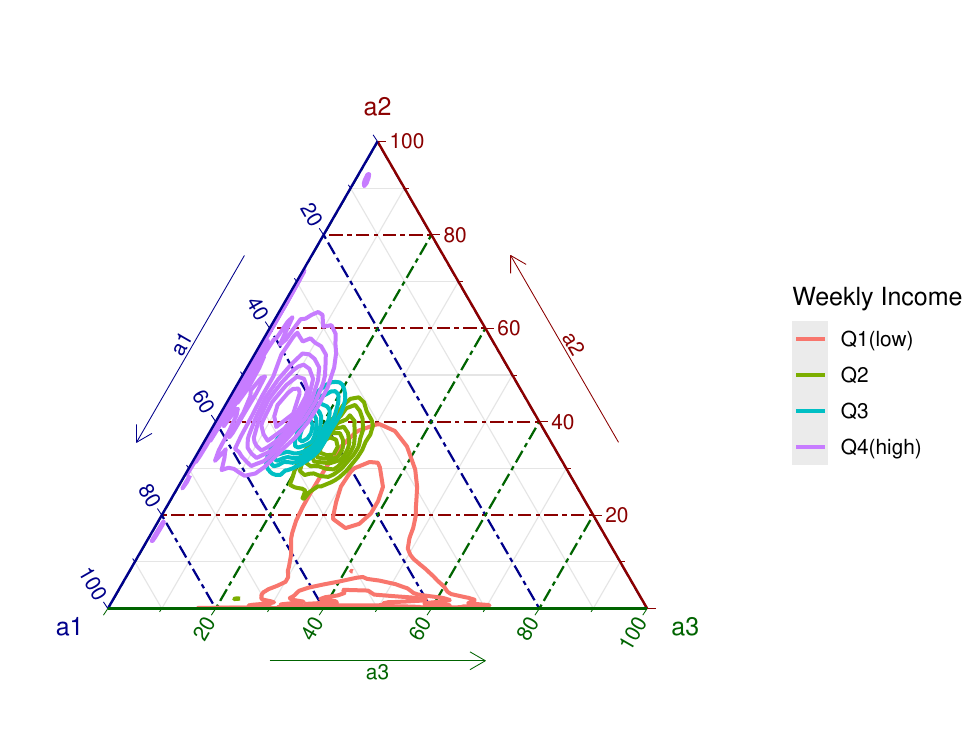}
        \caption{Predicted group distributions}
    \end{subfigure}

    \vspace{0.6em}

    \begin{subfigure}{0.48\textwidth}
        \centering
        \includegraphics[width=\linewidth]{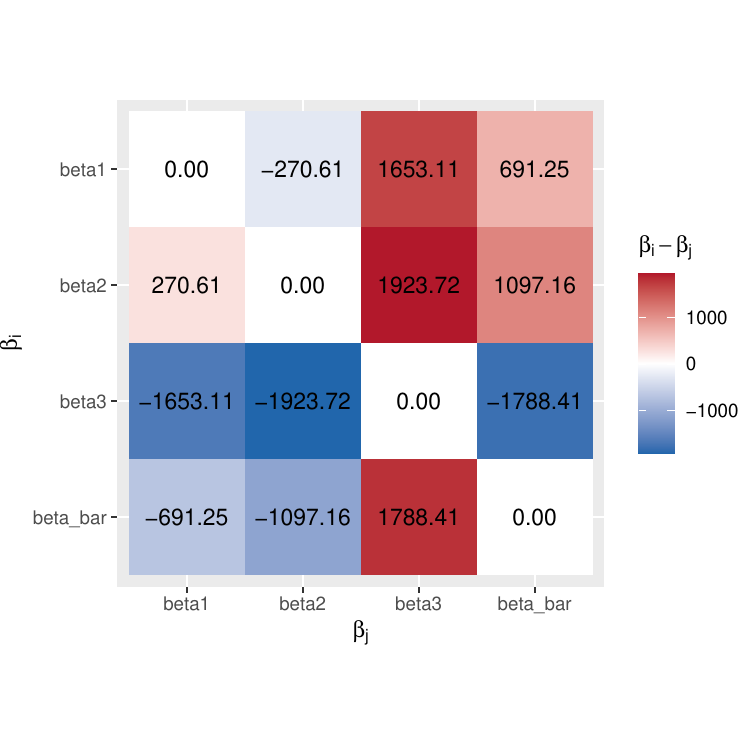}
        \caption{Relative shift effects (3D)}
    \end{subfigure}
    \hfill
    \begin{subfigure}{0.48\textwidth}
        \centering
        \includegraphics[width=\linewidth]{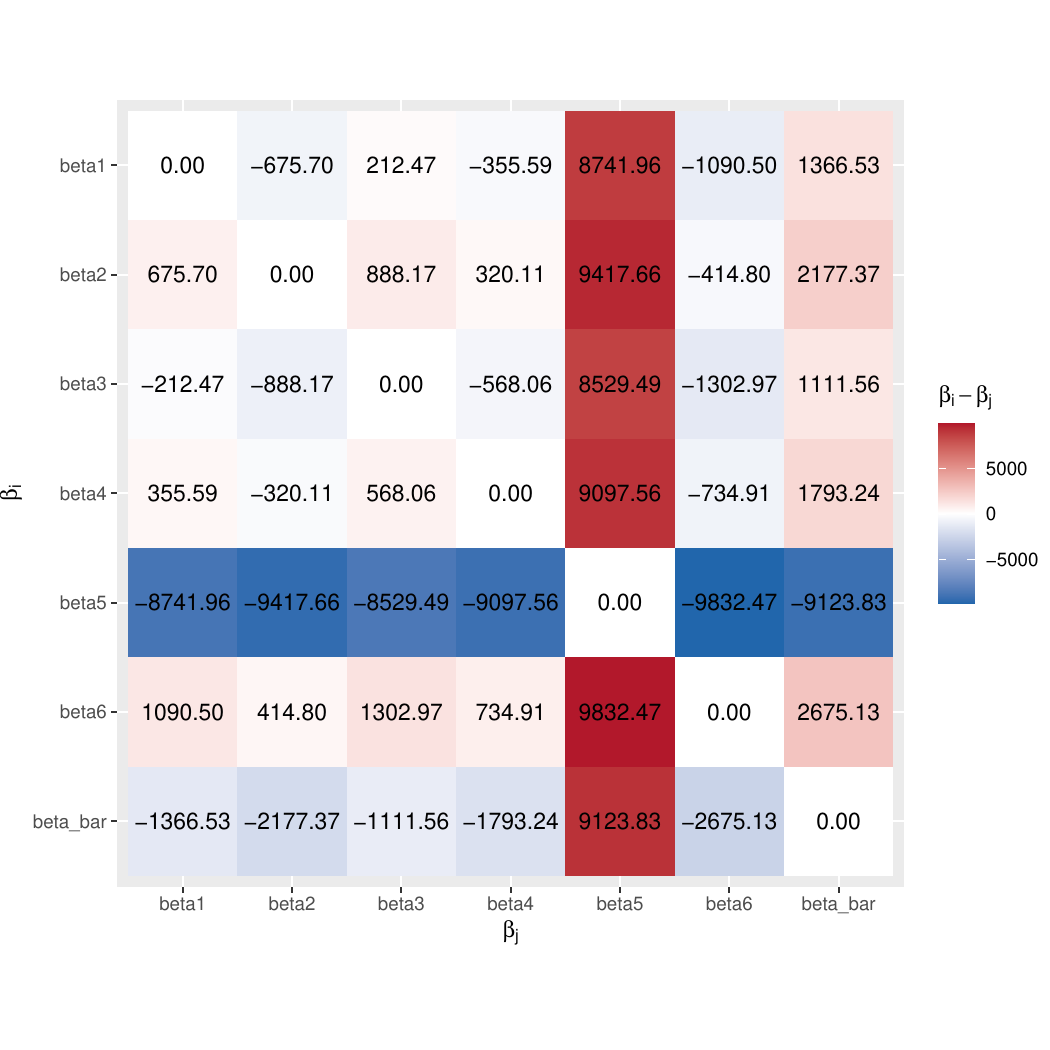}
        \caption{Relative shift effects (6D)}
    \end{subfigure}

    \caption{
        Time-use compositions and KRR(AWE)-based results.
        (a) Observed compositions. \\
        (b) Predicted group distributions.
        (c) Relative shift effects (3D).
        (d) Relative shift effects (6D).
    }
    \label{fig:results_KRR(AWE)}
\end{figure}

Figure~\ref{fig:results_KRR(AWE)}(a) displays the empirical distribution of the treatment composition $A = (a_1, a_2, a_3)^\intercal$ on the simplex. Each point represents an individual's daily time allocation, with components summing to one. For interpretability, the original activity categories are aggregated into three components: maintenance ($a_1$, including sleeping, eating, and personal care), productive ($a_2$, including working and studying), and discretionary ($a_3$, including leisure, socializing, and entertainment), as defined in Appendix~\ref{app:activity_definition}.  
Most observations lie in the interior of the simplex, with noticeable concentration around moderate shares of maintenance time. Across the sample, $a_1$ accounts for approximately 40--60\% of total time, while $a_2$ and $a_3$ exhibit greater variability.

Figure~\ref{fig:results_KRR(AWE)}(b) presents the predicted group-wise activity distributions. 
For each individual, we compute
$$ \hat{Y}_i = A_i^\intercal \hat{\boldsymbol{\beta}}, $$
where $\hat{\boldsymbol{\beta}}$ is estimated using the proposed method. 
The sample is partitioned into four groups according to the empirical quartiles of $\{\hat{Y}_i\}_{i=1}^n$.
The simplex shows group-wise compositions across increasing predicted outcome levels. 
A monotonic pattern is observed. 
The proportion of time spent on $a_1$ remains stable across all income groups, and the primary factor distinguishing the quartiles is $a_2$ and $a_3$. 
Higher-income groups allocate more time to $a_2$ and less time to $a_3$, whereas the lowest-income group allocates the opposite. 
Q2 and Q3 have nearly identical compositions, suggesting limited separation among middle-income groups. 
Furthermore, the lowest-income group allocates a remarkably larger share to $a_3$ than the other groups.

Figure~\ref{fig:results_KRR(AWE)}(c) shows the estimated reallocation effects $\tau(a',a)$ defined in \eqref{eq:tau}. 
Each cell represents the expected change in weekly income when one hour of time is reallocated from the $j$-th activity (horizontal axis) to the $i$-th activity (vertical axis), holding total time fixed. 
Darker red cells indicate larger positive effects, and darker blue cells indicate larger negative effects. 
The fourth column reports $\tau(\bar{a}', a)$, corresponding to a joint reallocation in which one hour is shifted to activity $i$ from the remaining activities.
The estimates exhibit consistent patterns. 
Reallocating time toward $a_2$ is associated with higher income, particularly when time is shifted from maintenance or discretionary activities. 
Reducing $a_3$ is also associated with higher income, even when the time is reallocated to $a_1$. 
The fourth column summarizes these joint reallocations. 
In particular, reallocations toward $a_2$ yield the largest increases in weekly income, 
whereas reallocations toward $a_3$ yield the largest decreases.

Figures~\ref{fig:app_prediction_plot} and \ref{fig:app_reallocation_panels} in Appendix~\ref{app:figures} present the corresponding results obtained using alternative methods for Figures~\ref{fig:results_KRR(AWE)} (b) and (c), respectively. 
The results from (d) LM and (e) KRR follow similar patterns, although the separation is less distinct. 
(a) NM and (f) KRR(Diri) produce patterns with reversed directions, in contrast to the other methods.
Additional details are provided in Appendix~\ref{app:figures}.

Figure~\ref{fig:results_KRR(AWE)}(d) reports the same analysis under a six-dimensional treatment composition, with activity categories expanded from three to six; definitions are given in Appendix~\ref{app:activity_definition}. 
Reallocations toward $a_1$ and $a_2$ are associated with increases in weekly income, whereas shifts toward $a_3$ and $a_5$ lead to decreases. 
The effects are more dispersed than in the three-dimensional case, but the directional patterns remain consistent. 
Overall, the results show that the proposed estimator produces stable and interpretable reallocation effects across specifications.


\bibliographystyle{apalike}
\bibliography{references}

\appendix
\setcounter{theorem}{0}
\section{Additional Materials}
%
%
\subsection{Additional Formulations}
\label{app:formulas}

This section provides the explicit block-matrix form of the $2n \times 2n$ kernel matrix $\tilde{\mathbf{K}}$ introduced in Theorem~\ref{thm:representer}. The matrix is partitioned as
\begin{align}
    \tilde{\mathbf{K}} = \begin{pmatrix} \mathbf{K}_1 & \mathbf{K}_2 \\ \mathbf{K}_2^\intercal & \mathbf{K}_3 \end{pmatrix},
    \label{eq:k_tilde_mat}
\end{align}
where the submatrices $\mathbf{K}_1$, $\mathbf{K}_2$, and $\mathbf{K}_3$ are defined as follows. The matrix $\mathbf{K}_1$ is the $n \times n$ kernel matrix, which is also utilized for the outcome function estimation in AWE, 
\begin{align}
    \mathbf{K}_1 = 
        \begin{pmatrix} 
            K_{A11}K_{X11} & \cdots & K_{A1n}K_{X1n} \\ 
            \vdots & \ddots & \vdots \\ 
            K_{An1}K_{Xn1} & \cdots & K_{Ann}K_{Xnn}
        \end{pmatrix}.
    \label{eq:k1_mat}
\end{align}
The submatrices $\mathbf{K}_2$ and $\mathbf{K}_3$ are given by
\begin{align*}
    \mathbf{K}_2 = 
        \begin{pmatrix} 
            \mu(K_{X1})K_{A11} & \cdots & \mu(K_{X1})K_{A1n} \\ 
            \vdots & \ddots & \vdots \\ 
            \mu(K_{Xn})K_{An1} & \cdots & \mu(K_{Xn})K_{Ann}
        \end{pmatrix}, \quad
    \mathbf{K}_3 = \mu(K_X) 
        \begin{pmatrix} 
            K_{A11} & \cdots & K_{A1n} \\ 
            \vdots & \ddots & \vdots \\ 
            K_{An1} & \cdots & K_{Ann}
        \end{pmatrix}.
\end{align*}
The elements in these matrices are defined using the following notation:
\begin{itemize}
    \item $K_{Aij} = K_A(A_i, A_j)$ and $K_{Xkl} = K_X(X_k, X_l)$,
    \item $\mu(K_{Xl}) = \frac{1}{n} \sum_{k=1}^n K_{Xkl}$,
    \item $\mu(K_X) = \frac{1}{n^2} \sum_{k=1}^n \sum_{l=1}^n K_{Xkl}$.
\end{itemize}
%
%
\subsection{Definition of Aggregated Activity Components}
\label{app:activity_definition}
For interpretability, the original activity categories are aggregated into three broad components defined as follows.
\begin{itemize}
    \item Maintenance Activities ($a_1$): Essential daily survival and upkeep tasks, including sleep, eating and drinking, and purchasing goods and services.

    \item Productive Activities ($a_2$): Activities directly linked to human capital accumulation and income generation, such as work and education.

    \item Discretionary Activities ($a_3$): Leisure, social engagement, and other miscellaneous or residual activities. This includes socializing, relaxation, sports, volunteer work, and religious observance.
\end{itemize}

\noindent
The activities can alternatively be divided into six categories defined as follows.
\begin{itemize}
    \item Maintenance Activities ($a_1$): Essential daily survival and upkeep tasks, including sleep, eating and drinking.
    \item Productive Activities ($a_2$): Activities directly linked to human capital accumulation and income generation, such as work and education
    \item Consumption Activities ($a_3$): Activities related to purchasing goods and services.
    \item Leisure Activities ($a_4$): Socializing, leisure, sports, exercise, and recreational activities.
    \item Civic and Religious Activities ($a_5$): Nonmarket social engagement, including religious, spiritual, and volunteer activities.
    \item Residual Discretionary Activities ($a_6$): Other discretionary time uses not captured above, such as telephone calls and related minor activities.
\end{itemize}
%
%
\subsection{Additional Figures}
\label{app:figures}

\begin{figure}[t]
\centering
\begin{minipage}{0.78\linewidth}
  \begin{minipage}{0.49\linewidth}
    \begin{overpic}[width=\linewidth]{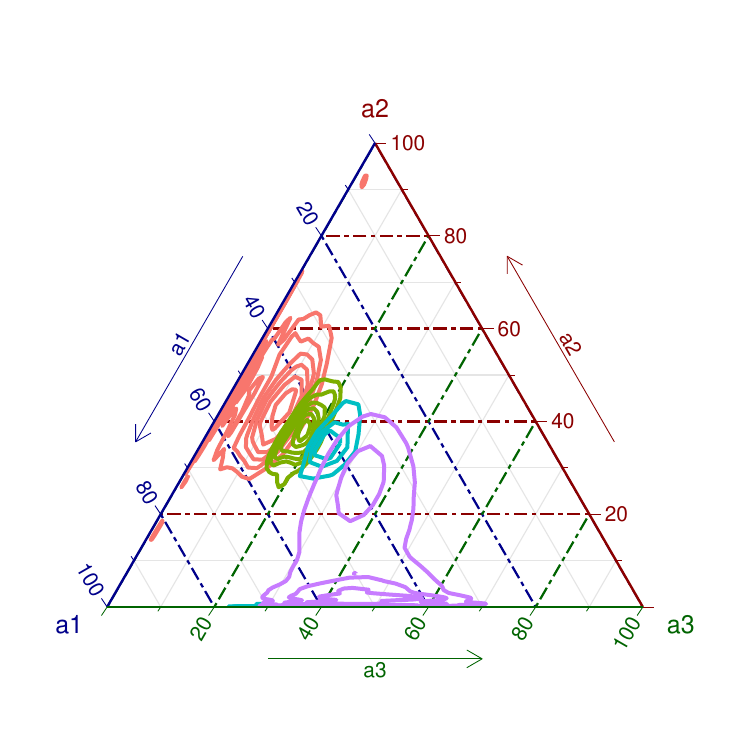}\put(-6,85){(a)}\end{overpic}
  \end{minipage}
  \begin{minipage}{0.49\linewidth}
    \begin{overpic}[width=\linewidth]{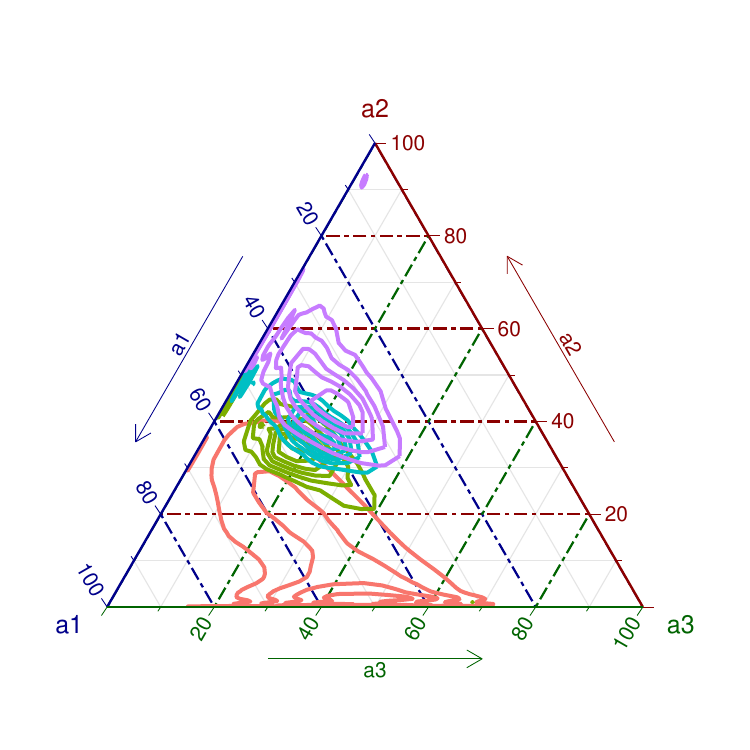}\put(-6,85){(b)}\end{overpic}
  \end{minipage}

  \begin{minipage}{0.49\linewidth}
    \begin{overpic}[width=\linewidth]{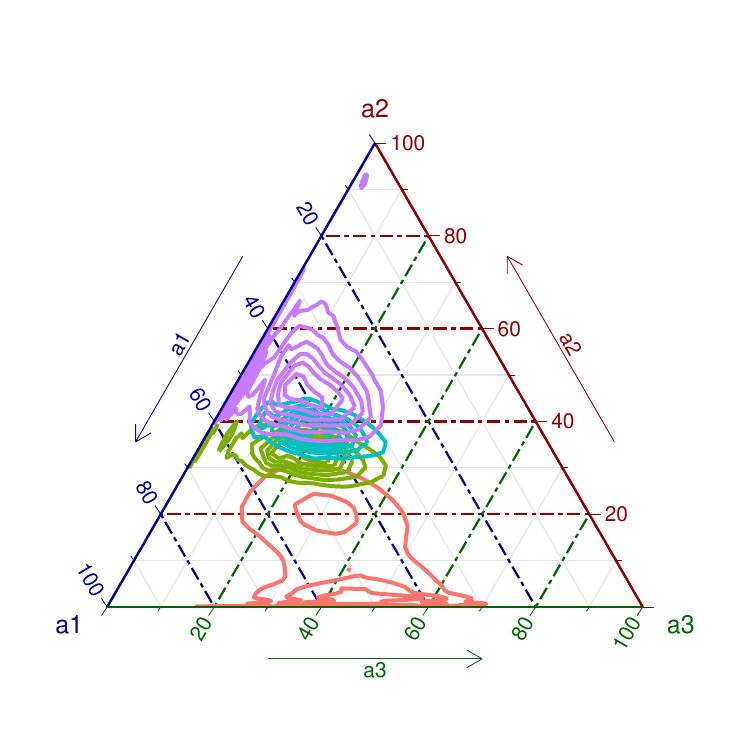}\put(-6,85){(c)}\end{overpic}
  \end{minipage}
  \begin{minipage}{0.49\linewidth}
    \begin{overpic}[width=\linewidth]{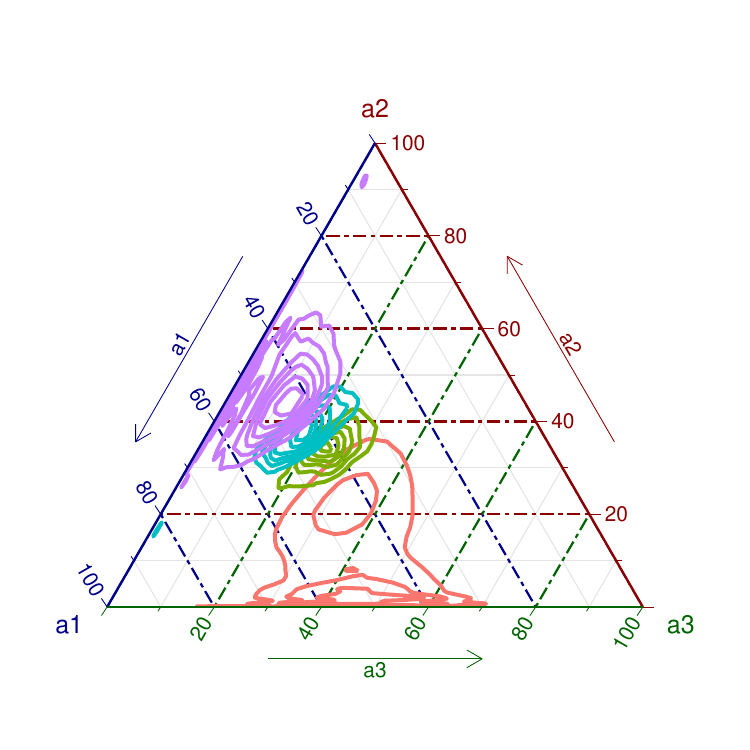}\put(-6,85){(d)}\end{overpic}
  \end{minipage}

  \begin{minipage}{0.49\linewidth}
    \begin{overpic}[width=\linewidth]{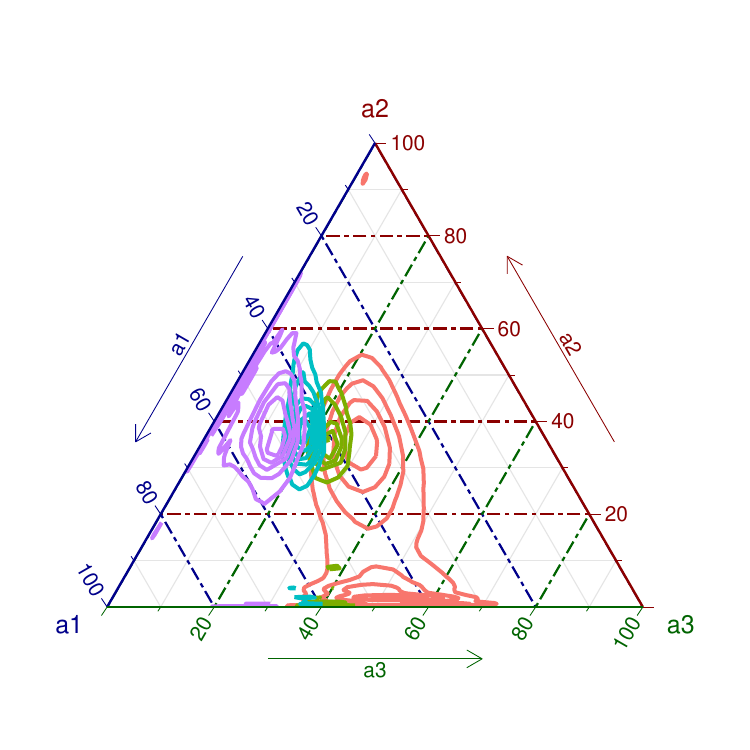}\put(-6,85){(e)}\end{overpic}
  \end{minipage}
  \begin{minipage}{0.49\linewidth}
    \begin{overpic}[width=\linewidth]{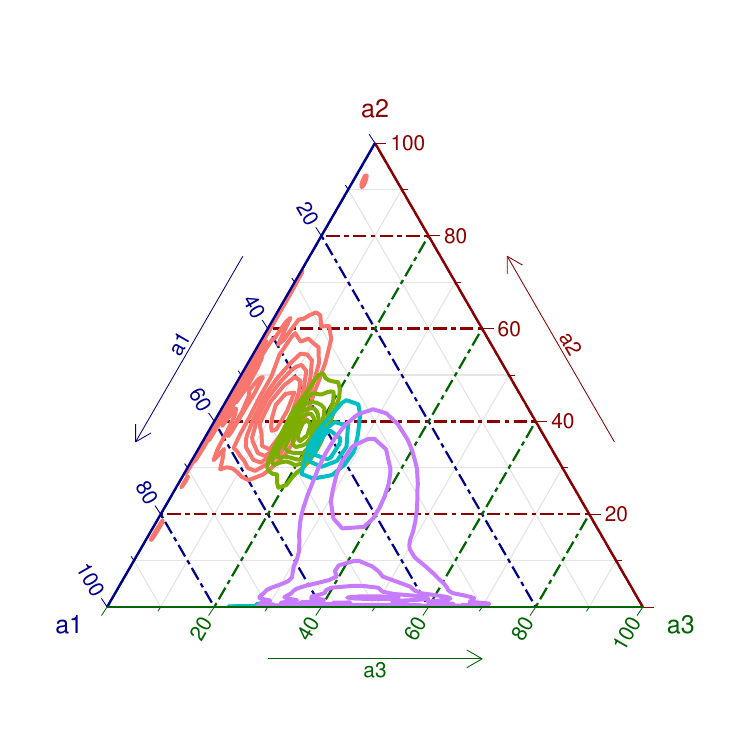}\put(-6,85){(f)}\end{overpic}
  \end{minipage}
\end{minipage}\hfill
\begin{minipage}{0.20\linewidth}
    \vspace*{0.15\linewidth}
    \includegraphics[width=\linewidth]{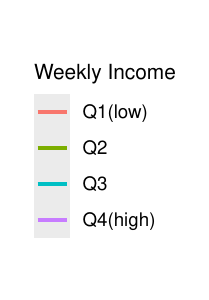}
\end{minipage}
\caption{Group-wise compositional distributions based on predicted outcomes: (a) NM, (b) Weighted, (c) Dirichlet, (d) LM, (e) KRR, and (g) KRR (Diri).}
\label{fig:app_prediction_plot}
\end{figure}

\begin{figure}[t]
\centering
\begin{minipage}{0.78\linewidth}
  \begin{minipage}{0.49\linewidth}
    \begin{overpic}[width=\linewidth]{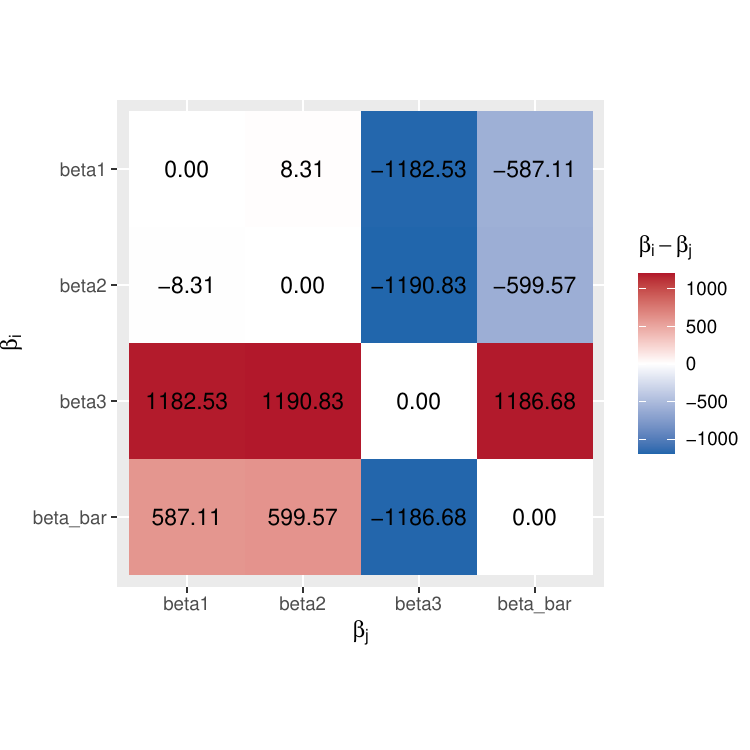}\put(-8,85){(a)}\end{overpic}
  \end{minipage}
  \hspace{0.05\linewidth}
  \begin{minipage}{0.49\linewidth}
    \begin{overpic}[width=\linewidth]{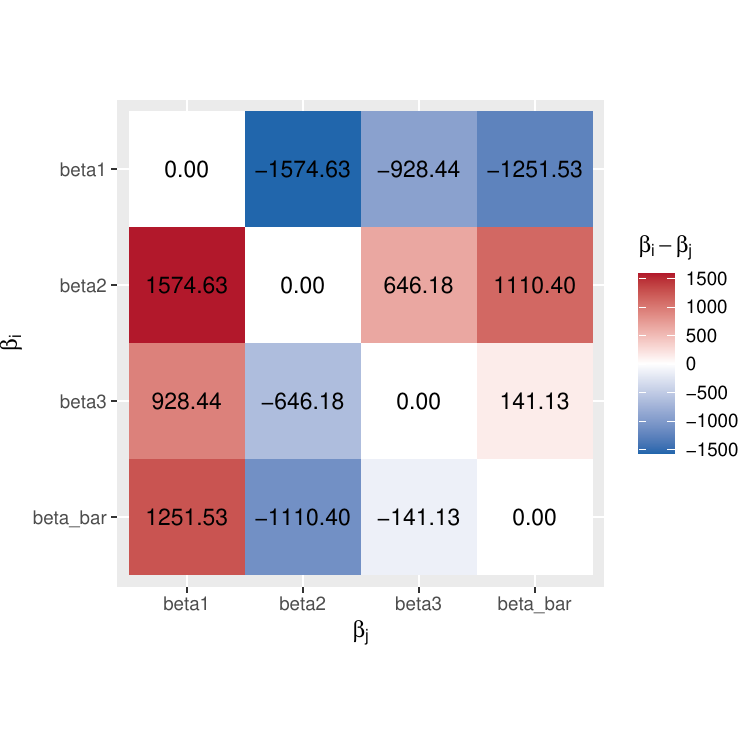}\put(-8,85){(b)}\end{overpic}
  \end{minipage}

  \begin{minipage}{0.49\linewidth}
    \begin{overpic}[width=\linewidth]{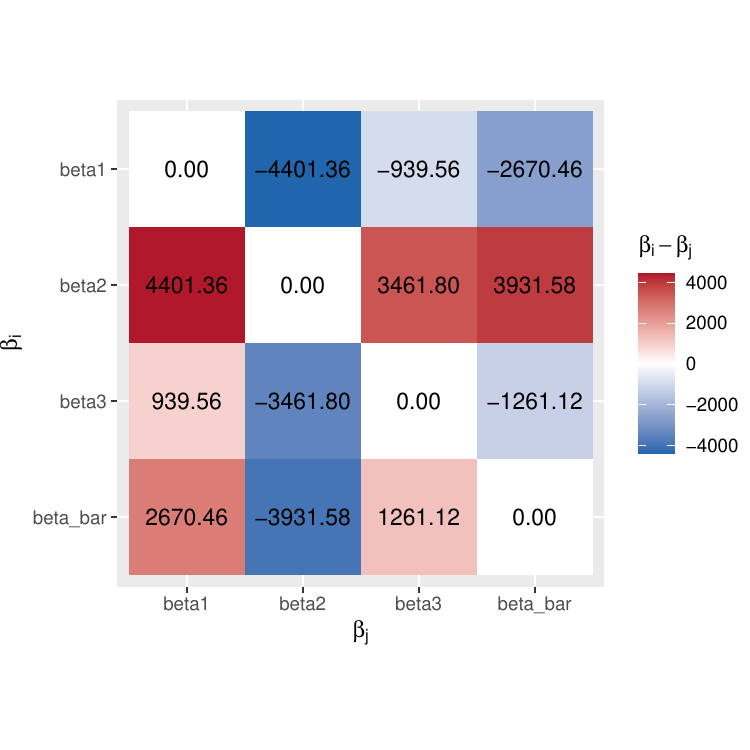}\put(-8,85){(c)}\end{overpic}
  \end{minipage}
  \hspace{0.05\linewidth}
  \begin{minipage}{0.49\linewidth}
    \begin{overpic}[width=\linewidth]{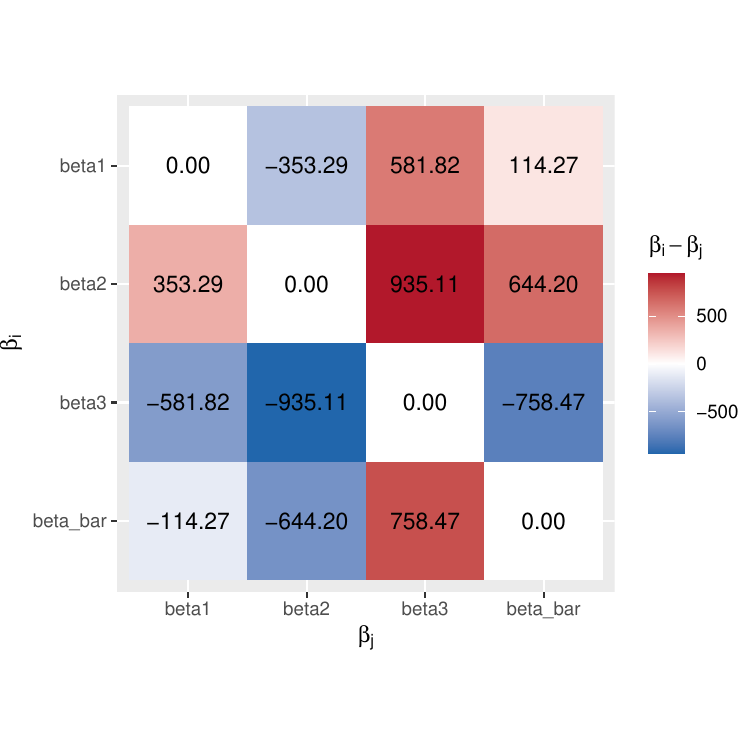}\put(-8,85){(d)}\end{overpic}
  \end{minipage}

  \begin{minipage}{0.49\linewidth}
    \begin{overpic}[width=\linewidth]{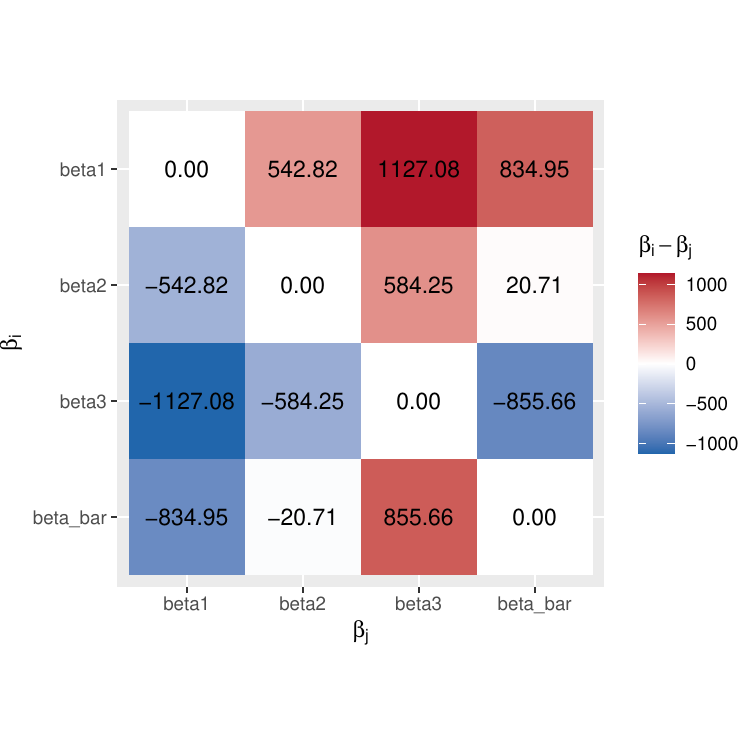}\put(-8,85){(e)}\end{overpic}
  \end{minipage}
  \hspace{0.05\linewidth}
  \begin{minipage}{0.49\linewidth}
    \begin{overpic}[width=\linewidth]{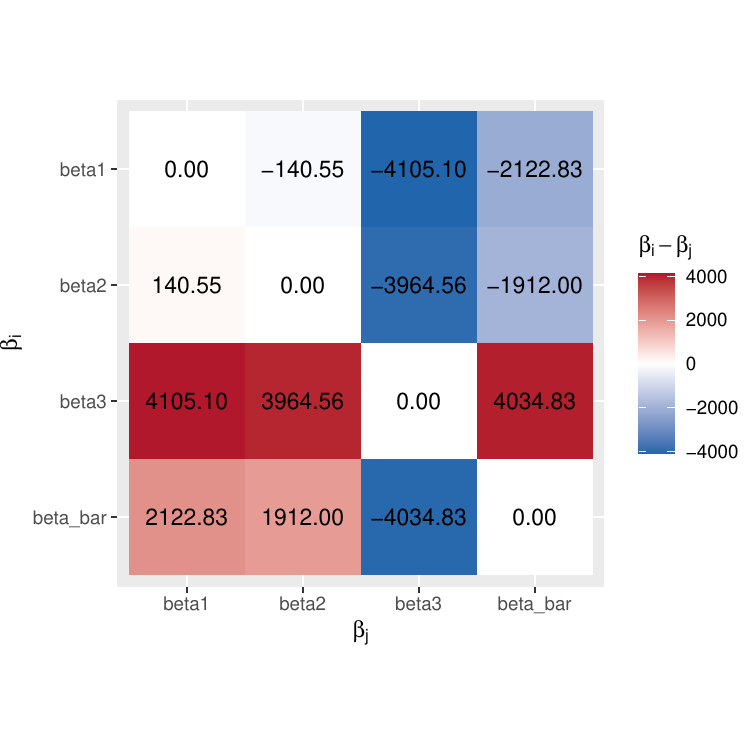}\put(-8,85){(f)}\end{overpic}
  \end{minipage}
\end{minipage}\hfill
\caption{Relative shift effects across estimators: (a) NM, (b) Weighted, (c) Dirichlet, (d) LM, (e) KRR, and (g) KRR (Diri). Each cell shows the change in weekly income for a one-hour reallocation $j$-th to $i$-th activity.}
\label{fig:app_reallocation_panels}
\end{figure}

\section{Proofs of Main Results}
%
%
\begin{theorem} [Representer and Decomposition Theorem]
\label{thm:representer_a}
    Let $\mathcal{H}$ be a reproducing kernel Hilbert space (RKHS) with product kernel
    $$ K\bigl((A,X),(A',X')\bigr) = K_A(A,A')\,K_X(X,X'). $$
    Let $\mathcal{H}(1)=\{f\in\mathcal{H}:\|f\|_{\mathcal{H}}\le1\}$ denote the unit ball of $\mathcal{H}$. 
    Then the worst-case balancing error satisfies
    $$ 
        \sup_{f\in\mathcal{H}(1)} S(w,f) = 
            \left\| (\mathbf{A}^\intercal \mathbf{A})^{-1} \mathbf{A}^\intercal 
            \bigl( \mathbf{W}, \, -\mathbf{I} \bigl)
            \tilde{\mathbf{K}}^{1/2} \right\|_{op}^2,
    $$
    where $\|\cdot\|_{op}$ denotes the operator norm, 
    $\mathbf{I}$ is the $n\times n$ identity matrix, 
    and $\bigl( \mathbf{W}, \, -\mathbf{I} \bigl)$ denotes the $n \times 2n$ block matrix.
    The matrix $\tilde{\mathbf{K}}$ is the $2n\times 2n$ kernel matrix defined in \eqref{eq:k_tilde_mat}.
\end{theorem}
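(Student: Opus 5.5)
The plan is to write the balancing discrepancy $\mathbf{W}\mathbf{f}-\hat{\mathbb{E}}_X[\mathbf{f}]$ as a finite collection of bounded linear functionals of $f\in\mathcal{H}$, realized through $2n$ representers in $\mathcal{H}$. Then $\sup_{f\in\mathcal{H}(1)} S(w,f)$ becomes the squared operator norm of a finite-rank operator $\mathcal{H}\to\mathbb{R}^L$. That norm can be computed from the Gram matrix of the representers, which I will identify with $\tilde{\mathbf{K}}$ in \eqref{eq:k_tilde_mat}.

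\textbf{Step 1 (representers).} By the reproducing property, $f(A_i,X_i)=\langle f,\phi_i\rangle_{\mathcal{H}}$ with $\phi_i:=K((A_i,X_i),\cdot)$. Because the kernel is a product, define $\bar k_X:=\frac1n\sum_{j=1}^n K_X(X_j,\cdot)$, which lies in the RKHS of $K_X$, and set $\psi_i:=K_A(A_i,\cdot)\otimes \bar k_X\in\mathcal{H}$. Linearity of the inner product then gives
\[
\tfrac1n\sum_{j=1}^n f(A_i,X_j)=\langle f,\psi_i\rangle_{\mathcal{H}}.
\]
Define $\Phi:\mathcal{H}\to\mathbb{R}^{2n}$ by
\[
\Phi f=(\langle f,\phi_1\rangle,\dots,\langle f,\phi_n\rangle,\langle f,\psi_1\rangle,\dots,\langle f,\psi_n\rangle)^\intercal .
\]
With this map, $\mathbf{W}\mathbf{f}-\hat{\mathbb{E}}_X[\mathbf{f}]=(\mathbf{W},-\mathbf{I})\Phi f$.

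\textbf{Step 2 (Gram matrix).} I will compute $\Phi\Phi^*$, whose $(k,l)$ entry is the inner product of the $k$-th and $l$-th representers. Using $\langle g_1\otimes h_1,g_2\otimes h_2\rangle=\langle g_1,g_2\rangle\langle h_1,h_2\rangle$, the three blocks are:
\begin{align*}
\langle\phi_i,\phi_j\rangle &= K_{Aij}K_{Xij},\\
\langle\phi_i,\psi_j\rangle &= K_{Aij}\,\tfrac1n\sum_l K_X(X_i,X_l)=\mu(K_{Xi})K_{Aij},\\
\langle\psi_i,\psi_j\rangle &= K_{Aij}\,\tfrac1{n^2}\sum_{k,l}K_{Xkl}=\mu(K_X)K_{Aij}.
\end{align*}
These reproduce $\mathbf{K}_1$, $\mathbf{K}_2$ and $\mathbf{K}_3$ in \eqref{eq:k_tilde_mat}, so $\Phi\Phi^*=\tilde{\mathbf{K}}$. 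In particular $\tilde{\mathbf{K}}\succeq0$, so $\tilde{\mathbf{K}}^{1/2}$ is well defined. This bookkeeping, checking that the row-index convention of $\mathbf{K}_2$ matches, is the step I expect to need the most care.

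\textbf{Step 3 (operator norm).} Let $\mathbf{M}:=(\mathbf{A}^\intercal\mathbf{A})^{-1}\mathbf{A}^\intercal(\mathbf{W},-\mathbf{I})$, an $L\times 2n$ matrix, and $T:=\mathbf{M}\Phi:\mathcal{H}\to\mathbb{R}^L$. Then $S(w,f)=\|Tf\|_2^2$, so
\[
\sup_{f\in\mathcal{H}(1)}S(w,f)=\|T\|_{op}^2 .
\]
For a bounded operator $\|T\|_{op}^2=\|TT^*\|_{op}$, and
\[
TT^*=\mathbf{M}\Phi\Phi^*\mathbf{M}^\intercal=\mathbf{M}\tilde{\mathbf{K}}\mathbf{M}^\intercal=(\mathbf{M}\tilde{\mathbf{K}}^{1/2})(\mathbf{M}\tilde{\mathbf{K}}^{1/2})^\intercal .
\]
Hence $\|TT^*\|_{op}=\|\mathbf{M}\tilde{\mathbf{K}}^{1/2}\|_{op}^2$, which is the claimed identity.

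To avoid invoking adjoints abstractly, I can instead note that the component of $f$ orthogonal to $\mathrm{span}\{\phi_i,\psi_i\}$ does not change $Tf$ but does use up norm. So the supremum may be restricted to $f=\sum_k c_k\rho_k$, where $\rho_k$ runs over the $2n$ representers. For such $f$, $\|f\|_{\mathcal{H}}^2=c^\intercal\tilde{\mathbf{K}}c$ and $Tf=\mathbf{M}\tilde{\mathbf{K}}c$. Substituting $u=\tilde{\mathbf{K}}^{1/2}c$ reduces the problem to maximizing $\|\mathbf{M}\tilde{\mathbf{K}}^{1/2}u\|_2^2$ over $\|u\|_2\le1$. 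Here $u$ ranges over $\mathrm{range}(\tilde{\mathbf{K}}^{1/2})$, which loses nothing because $\mathbf{M}\tilde{\mathbf{K}}^{1/2}$ vanishes on the kernel of $\tilde{\mathbf{K}}^{1/2}$.
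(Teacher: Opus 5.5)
Your proposal is correct and follows essentially the same route as the paper. It uses the same $2n$ representers (the $\phi_{ii}$ and the averaged $\frac{1}{n}\sum_k \phi_{ik}$, which is your $\psi_i$), identifies their Gram matrix with $\tilde{\mathbf{K}}$, and reduces the supremum to a finite-dimensional quadratic form with value $\|\mathbf{M}\tilde{\mathbf{K}}^{1/2}\|_{op}^2$. Your row-index convention $\langle\phi_i,\psi_j\rangle=\mu(K_{Xi})K_{Aij}$ matches \eqref{eq:k_tilde_mat}, and your $\|T\|_{op}^2=\|TT^*\|_{op}$ finish (or your remark on the range of $\tilde{\mathbf{K}}^{1/2}$) justifies the final equality, which the paper only asserts.
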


\begin{proof}
    The functional $S(w,f)$ depends on $f$ only through the evaluations $\{ f(A_i,X_i) \}_{i=1}^n$ and the empirical averages $\{ \frac{1}{n} \sum_{k=1}^n f(A_i,X_k) \}_{i=1}^n$.
    By the reproducing property,
    $$
        f(A_i, X_i) = \langle f, \phi_{ii} \rangle, \quad 
        \frac{1}{n}\sum_{k=1}^n f(A_i, X_k) = \Bigl\langle f, \frac{1}{n}\sum_{k=1}^n \phi_{ik} \Bigl\rangle,
    $$
    where $\phi_{ik}=K(\cdot,(A_i,X_k))$.
    Hence $S(w,f)$ depends only on the projection of $f$ onto the finite-dimensional subspace
    $$
        V = \text{span}\left(\{\phi_{jj}\}_{j=1}^n \cup \left\{\frac{1}{n}\sum_{k=1}^n \phi_{jk}\right\}_{j=1}^n\right).
    $$
    Any component orthogonal to $V$ does not affect $S(w,f)$, so it suffices to consider $f\in V$, which admits the representation
    $$
        f = \sum_{j=1}^n \alpha_j \phi_{jj} + \sum_{j=1}^n \alpha'_j \left( \frac{1}{n} \sum_{k=1}^n \phi_{jk} \right).
    $$

    Using the product kernel structure, we obtain
    \begin{align*}
        f(A_i, X_i) &= 
            \sum_{j=1}^n \alpha_j \langle \phi_{jj}, \phi_{ii} \rangle + \sum_{j=1}^n \alpha'_j \frac{1}{n} \sum_{k=1}^n \langle \phi_{jk}, \phi_{ii} \rangle \\
        &= 
            \sum_{j=1}^n \alpha_j K_A(A_j, A_i) K_X(X_j, X_i) + \sum_{j=1}^n \alpha'_j K_A(A_j, A_i) \frac{1}{n} \sum_{k=1}^n K_X(X_k, X_i).
    \end{align*}
    and
    \begin{align*}
        \frac{1}{n} \sum_{k=1}^n f(A_i, X_k) &= 
            \sum_{j=1}^n \alpha_j \frac{1}{n} \sum_{k=1}^n \langle \phi_{jj}, \phi_{ik} \rangle + \sum_{j=1}^n \alpha'_j \frac{1}{n^2} \sum_{k=1}^n \sum_{l=1}^n \langle \phi_{jk}, \phi_{il} \rangle \\
        &= \sum_{j=1}^n \alpha_j K_A(A_j, A_i) \frac{1}{n} \sum_{k=1}^n K_X(X_j, X_k) + \sum_{j=1}^n \alpha'_j K_A(A_j, A_i) \frac{1}{n^2} \sum_{k=1}^n \sum_{l=1}^n K_X(X_k, X_l).
    \end{align*}

    Now let 
    $$
        \tilde{\mathbf{K}} = \begin{pmatrix} \mathbf{K}_1 & \mathbf{K}_2 \\ \mathbf{K}_2^\intercal & \mathbf{K}_3 \end{pmatrix},
        \quad
        \tilde{\boldsymbol{\alpha}} = \begin{pmatrix} \boldsymbol{\alpha} \\ \boldsymbol{\alpha}' \end{pmatrix},
    $$ 
    where
    \begin{align*}
        \mathbf{K}_1 &= \bigl( K_A(A_i, A_j) K_X(X_i, X_j) \bigr)_{ij}, \\
        \mathbf{K}_2 &= \bigl( \bar{K}_{X_j} K_A(A_i, A_j) \bigr)_{ij}, \quad
            \bar{K}_{X_j} = \frac{1}{n} \sum_{k=1}^n K_X(X_k, X_j), \\
        \mathbf{K}_3 &= \bar{K}_X \bigl( K_A(A_i, A_j) \bigr)_{ij}, \quad
            \bar{K}_X = \frac{1}{n^2} \sum_{k=1}^n \sum_{l=1}^n K_X(X_k, X_l).
    \end{align*}
    Then the two evaluation vectors can be written as
    \begin{align*}
        \bigl( f(A_1, X_1), \dots, f(A_n, X_n) \bigr)^\intercal = \mathbf{K}_1 \boldsymbol{\alpha} + \mathbf{K}_2 \boldsymbol{\alpha}', 
        \quad 
        \left( \frac{1}{n} \sum_{k=1}^n f(A_1, X_k), \dots, \frac{1}{n} \sum_{k=1}^n f(A_n, X_k) \right)^\intercal = \mathbf{K}_2^\intercal \boldsymbol{\alpha} + \mathbf{K}_3 \boldsymbol{\alpha}'.
    \end{align*}

    Substituting these expressions into $S(w,f)$ gives
    \begin{align*}
        S(w, f) &= 
            \left\| (\mathbf{A}^\intercal \mathbf{A})^{-1} \mathbf{A}^\intercal \left[ 
            \mathbf{W}(\mathbf{K}_1 \boldsymbol{\alpha} + \mathbf{K}_2 \boldsymbol{\alpha}')
            -
            (\mathbf{K}_2^\intercal \boldsymbol{\alpha} + \mathbf{K}_3 \boldsymbol{\alpha}') 
            \right] 
            \right\|_2^2 \\
        &= 
            \left\| (\mathbf{A}^\intercal \mathbf{A})^{-1} \mathbf{A}^\intercal \left[ (\mathbf{W}\mathbf{K}_1 - \mathbf{K}_2^\intercal)\boldsymbol{\alpha} + (\mathbf{W}\mathbf{K}_2 - \mathbf{K}_3)\boldsymbol{\alpha}' \right] \right\|_2^2 \\
        &=
            \left\| (\mathbf{A}^\intercal \mathbf{A})^{-1} \mathbf{A}^\intercal 
            \bigl( \mathbf{W}, \, -\mathbf{I} \bigl)
            \tilde{\mathbf{K}} \tilde{\boldsymbol{\alpha}} \right\|_2^2.
    \end{align*}

    Finally, the constraint $f\in\mathcal{H}(1)$ is equivalent to
    $$
        \|f\|_{\mathcal{H}}^2 = \tilde{\boldsymbol{\alpha}}^\intercal \tilde{\mathbf{K}} \tilde{\boldsymbol{\alpha}} \le 1.
    $$
    Therefore,
    $$ 
        \sup_{f\in\mathcal{H}(1)} S(w,f) = 
            \sup_{\tilde{\boldsymbol{\alpha}}^\intercal \tilde{\mathbf{K}}\tilde{\boldsymbol{\alpha}}\le 1} \left\| (\mathbf{A}^\intercal\mathbf{A})^{-1} \mathbf{A}^\intercal (\mathbf{W},-\mathbf{I}) \tilde{\mathbf{K}} \tilde{\boldsymbol{\alpha}} \right\|_2^2
        =
            \left\| (\mathbf{A}^\intercal \mathbf{A})^{-1} \mathbf{A}^\intercal 
            \bigl( \mathbf{W}, \, -\mathbf{I} \bigl)
            \tilde{\mathbf{K}}^{1/2} \right\|_{op}^2.
    $$
    This completes the proof.
    
\end{proof}
%
%
\begin{theorem}[Convexity]
\label{thm:convexity_a}
Let $\tilde{\mathbf{K}} \in \mathbb{R}^{2n \times 2n}$ admit a rank-$r$ approximation of the form
$$ 
    \tilde{\mathbf{K}} \approx \mathbf{P}_1 \mathbf{Q}_1 \mathbf{P}_1^\intercal,
$$
where $\mathbf{P}_1 \in \mathbb{R}^{2n \times r}$ has orthonormal columns and 
$\mathbf{Q}_1 = \mathrm{diag}(q_1,\dots,q_r)$ with $q_j \ge 0$. 
Let $\lambda \ge 0$, and define $\mathbf{W} = \mathrm{diag}(w)$ for $w \in \mathbb{R}^n$. Consider the objective function $\mathcal{L} : \mathbb{R}^n \to \mathbb{R}$ given by
$$ 
    \mathcal{L}(w) =
        \left\| (\mathbf{A}^\intercal \mathbf{A})^{-1}\mathbf{A}^\intercal
        (\mathbf{W}, -\mathbf{I})
        \mathbf{P}_1 \mathbf{Q}_1^{1/2}
        \right\|_{op}^2 
        +
        \lambda \left\| (\mathbf{A}^\intercal \mathbf{A})^{-1}\mathbf{A}^\intercal \mathbf{W} \right\|_F^2.
$$
Then $\mathcal{L}(w)$ is convex in $w$.
\end{theorem}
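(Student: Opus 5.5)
The plan is to show each of the two summands of $\mathcal{L}(w)$ is convex and then use that nonnegative combinations of convex functions are convex. The key observation is that the matrices inside both norms depend \emph{affinely} on $w$. Convexity then follows from composing a convex, nondecreasing scalar map with a convex norm of an affine map.

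\textbf{First term.} Write $\mathbf{M} := (\mathbf{A}^\intercal\mathbf{A})^{-1}\mathbf{A}^\intercal \in \mathbb{R}^{L\times n}$. Partition $\mathbf{P}_1\mathbf{Q}_1^{1/2} = \begin{pmatrix} \mathbf{R}_1 \\ \mathbf{R}_2 \end{pmatrix}$ with $\mathbf{R}_1,\mathbf{R}_2 \in \mathbb{R}^{n\times r}$. Then
\[
\mathbf{D}_w = \mathbf{M}\mathbf{W}\mathbf{R}_1 - \mathbf{M}\mathbf{R}_2 = \sum_{i=1}^n w_i\, \mathbf{M} e_i e_i^\intercal \mathbf{R}_1 - \mathbf{M}\mathbf{R}_2 ,
\]
which is an affine matrix-valued function of $w$. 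The operator norm is a norm, so $w \mapsto \|\mathbf{D}_w\|_{op}$ is convex, being a norm composed with an affine map. This uses only $\|\theta\mathbf{B}+(1-\theta)\mathbf{C}\|_{op} \le \theta\|\mathbf{B}\|_{op}+(1-\theta)\|\mathbf{C}\|_{op}$. The function is also nonnegative. Because $t\mapsto t^2$ is convex and nondecreasing on $[0,\infty)$, the composition $\|\mathbf{D}_w\|_{op}^2$ is convex.

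Alternatively, one can write $\|\mathbf{D}_w\|_{op}^2 = \sup_{\|u\|_2\le1,\|v\|_2\le1} (u^\intercal \mathbf{D}_w v)^2$. Each map $w\mapsto (u^\intercal \mathbf{D}_w v)^2$ is the square of an affine function and hence convex, and a pointwise supremum of convex functions is convex. I would mention this as a remark, since it also motivates the gradient formula through Danskin's theorem using the leading singular vector.

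\textbf{Second term.} Here $\|\mathbf{M}\mathbf{W}\|_F^2 = \sum_{i=1}^n w_i^2 \|\mathbf{M} e_i\|_2^2$. This is a nonnegatively weighted sum of squares of the coordinates of $w$, that is, a positive semidefinite quadratic form $w^\intercal \mathrm{diag}(\mathbf{M}^\intercal\mathbf{M})\, w$. It is therefore convex, and multiplying by $\lambda\ge0$ preserves convexity.

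Adding the two terms completes the argument. No real obstacle is expected. The only point needing care is the first term, where the operator norm is not differentiable everywhere. The argument above avoids differentiability entirely by working directly with the norm's triangle inequality or with the supremum representation. The assumptions that $\mathbf{P}_1$ has orthonormal columns and $q_j\ge0$ are needed only so that $\mathbf{Q}_1^{1/2}$ is real; they play no further role.
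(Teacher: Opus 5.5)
Your proof is correct and follows the paper's argument: the matrices inside both norms are affine in $w$, and composing the convex squared operator and Frobenius norms with affine maps gives convexity of each term. You go slightly further than the paper by justifying convexity of the squared norm through the nondecreasing convex map $t \mapsto t^2$ on $[0,\infty)$. One small correction: only $q_j \ge 0$ is used, to make $\mathbf{Q}_1^{1/2}$ real; orthonormality of $\mathbf{P}_1$ plays no role at all.
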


\begin{proof}
Since $\mathbf{W}=\mathrm{diag}(w)$ depends linearly on $w$, both
\[
    (\mathbf{A}^\intercal \mathbf{A})^{-1}\mathbf{A}^\intercal
    (\mathbf{W}, -\mathbf{I})
    \mathbf{P}_1 \mathbf{Q}_1^{1/2}
    \quad\text{and}\quad
    (\mathbf{A}^\intercal \mathbf{A})^{-1}\mathbf{A}^\intercal \mathbf{W}
\]
are affine functions of $w$. Because $\|\cdot\|_{op}^2$ and $\|\cdot\|_F^2$ are convex, both terms in $\mathcal{L}(w)$ are convex in $w$. Therefore, $\mathcal{L}(w)$ is convex.
\end{proof}
%
%
\begin{theorem}[Convergence rate of the weighted estimator]
\label{thm:conv_rate_a}
Suppose Assumptions~\ref{ass:Sigma_A}--\ref{ass:entropy} hold. Let
$$
    \hat{\boldsymbol{\beta}}
    =
    (\mathbf{A}^\intercal \mathbf{A})^{-1} \mathbf{A}^\intercal \hat{\mathbf{W}} \mathbf{Y},
$$
where $\hat{w}$ is defined in \eqref{eq:w_hat_def}, and define
$$
    \boldsymbol{\beta}^*
    =
    \boldsymbol{\Sigma}_A^{-1}
    \mathbb{E}\!\left[
        A \, \mathbb{E}_X\bigl[m(A,X)\bigr]
    \right].
$$
Then
$$
    \left\| \hat{\boldsymbol{\beta}} - \boldsymbol{\beta}^* \right\|_2
    =
    O_p \left( \frac{1}{\sqrt{n}} \big( \| m \|_{\mathcal{H}} C_1 + \| m \|_\infty C_2 + C_3 \big) \right),
$$
where $C_1$, $C_2$, and $C_3$ are constants determined by constants introduced in Assumptions~\ref{ass:Sigma_A}--\ref{ass:entropy} and the regularization parameter $\lambda \asymp 1$. See \eqref{eq:thm3_C_1}, \eqref{eq:thm3_C_2}, and \eqref{eq:thm3_C_3} for their explicit expressions.
\end{theorem}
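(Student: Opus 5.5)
The proof starts from the error decomposition \eqref{eq:error_decomp}, with one change: in the balancing term we swap $\mathbb{E}_X[\mathbf m]$ for its empirical counterpart $\hat{\mathbb{E}}_X[\mathbf m]$. This gives four pieces.
\begin{align*}
\|\hat{\boldsymbol\beta}-\boldsymbol\beta^*\|_2 \le{}& \|(\mathbf A^\intercal\mathbf A)^{-1}\mathbf A^\intercal(\hat{\mathbf W}\mathbf m-\hat{\mathbb E}_X[\mathbf m])\|_2
+\|(\mathbf A^\intercal\mathbf A)^{-1}\mathbf A^\intercal(\hat{\mathbb E}_X[\mathbf m]-\mathbb E_X[\mathbf m])\|_2\\
&+\|(\mathbf A^\intercal\mathbf A)^{-1}\mathbf A^\intercal\mathbb E_X[\mathbf m]-\boldsymbol\beta^*\|_2
+\|(\mathbf A^\intercal\mathbf A)^{-1}\mathbf A^\intercal\hat{\mathbf W}\boldsymbol\varepsilon\|_2 .
\end{align*}
Throughout, the factor $(\mathbf A^\intercal\mathbf A)^{-1}$ is handled as follows. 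Each $A_i$ lies in the simplex, so $\|A_i\|_2\le 1$. The matrix law of large numbers then gives $n^{-1}\mathbf A^\intercal\mathbf A\to\boldsymbol\Sigma_A$, and by Assumption~\ref{ass:Sigma_A}, $\|(\mathbf A^\intercal\mathbf A)^{-1}\|_{op}\le 2/(nc_A)$ with probability tending to one. Every term therefore reduces to bounding $n^{-1}$ times a sum of $n$ vectors.

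\emph{Balancing and noise terms via a comparison with $w^*$.} Since $m/\|m\|_{\mathcal H}\in\mathcal H(1)$, the first term is at most $\|m\|_{\mathcal H}\sup_{f\in\mathcal H(1)}S(\hat w,f)^{1/2}$. Conditional on $(\mathbf A,\mathbf X)$, the noise term has second moment $\sum_i\sigma_i^2\|(\mathbf A^\intercal\mathbf A)^{-1}A_i\|^2\hat w_i^2\le\sigma^2 p(\hat w)$, so it is $O_p(\sigma\,p(\hat w)^{1/2})$. Because $\hat w$ is the minimizer of \eqref{eq:w_hat_def}, we have
\[
\sup_f S(\hat w,f)+\lambda p(\hat w)\le \sup_f S(w^*,f)+\lambda p(w^*),
\]
and with $\lambda\asymp1$ both $\sup_f S(\hat w,f)$ and $p(\hat w)$ are bounded by a constant times the right-hand side. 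This step is what makes smoothness or consistency of $\hat w$ unnecessary.

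A subtlety is that $\hat w$ depends on $\boldsymbol\varepsilon$ only through the design, since the weights are built from $(\mathbf A,\mathbf X)$ alone. Conditioning on $(\mathbf A,\mathbf X)$ is therefore legitimate. For the weight penalty, Assumption~\ref{ass:C_w} gives $p(w^*)\le C_w^2\sum_i\|(\mathbf A^\intercal\mathbf A)^{-1}A_i\|^2=O_p(C_w^2/(nc_A))$. For the balancing criterion we need
\[
\sup_{f\in\mathcal H(1)}\Bigl\|\tfrac1n\sum_i A_i\bigl(w^*_if(A_i,X_i)-\tfrac1n\textstyle\sum_j f(A_i,X_j)\bigr)\Bigr\|_2=O_p(n^{-1/2}).
\]

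\emph{Main obstacle: the uniform bound for $w^*$.} This is the step I expect to be hardest. The inner double sum is a V-statistic, so I would write it as a second-order U-statistic in $(A_i,X_i)$ and $X_j$ plus a diagonal term of order $O(\kappa_A\kappa_X/n)$. The Hoeffding projection has a useful structure. The oracle identity $\mathbb E[w^*f(A,X)\mid A=a]=\int f(a,x)\varphi_X(x)dx$ makes the first-order kernel $A_i\bigl(w^*_if(A_i,X_i)-\mathbb E_Xf(A_i,X)\bigr)$ mean zero. The other linear piece, $\mathbb E_A[Af(A,X_j)]-\mathbb E[Af(A,X)]$, is also mean zero, and the second-order remainder is degenerate.

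The function classes indexed by $f\in\mathcal H(1)$ are uniformly bounded, by $C_w\kappa_A\kappa_X$ and $\kappa_A\kappa_X$ respectively. Assumption~\ref{ass:entropy} with $\alpha<2$ gives a finite uniform entropy integral, so a maximal inequality (Dudley chaining, or van der Vaart--Wellner Theorem 2.14.1) bounds the linear parts by $O_p(n^{-1/2})$. The degenerate U-process part is handled by the analogous maximal inequality for degenerate U-processes (e.g. Nolan--Pollard or Arcones--Gin\'e), which gives $O_p(n^{-1})$. Multiplying by $\|(\mathbf A^\intercal\mathbf A)^{-1}\|_{op}\lesssim 1/(nc_A)$ times $n$ yields the $C_1$-type contribution $\|m\|_{\mathcal H}C_1/\sqrt n$, where $C_1$ depends on $c_A,C_w,\kappa_A,\kappa_X,B,\alpha,\lambda$. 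The same comparison feeds the noise bound $C_3/\sqrt n$, where $C_3$ involves $\sigma$ and $C_w$.

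\emph{Remaining terms.} For the second term, apply the CLT/Chebyshev argument to $n^{-1}\sum_iA_i\bigl(\hat{\mathbb E}_X-\mathbb E_X\bigr)m(A_i,\cdot)$, which is again a V-statistic with bounded kernel $|m|\le\|m\|_\infty$. Its projection variance is $O(\|m\|_\infty^2/n)$, giving the $\|m\|_\infty C_2/\sqrt n$ piece. For the third term, $\mathbb E_X[m(A,X)]$ is bounded by $\|m\|_\infty$, and $\boldsymbol\beta^*=\boldsymbol\Sigma_A^{-1}\mathbb E[A\,\mathbb E_Xm(A,X)]$. It is then the difference between an OLS estimator and its population limit: $(n^{-1}\mathbf A^\intercal\mathbf A)^{-1}n^{-1}\sum_i A_i g(A_i)-\boldsymbol\Sigma_A^{-1}\mathbb E[Ag(A)]$ with $g=\mathbb E_Xm$. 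The CLT for each average, together with the delta method and the bound on $\|\boldsymbol\Sigma_A^{-1}\|$, gives $O_p(\|m\|_\infty/\sqrt n)$, which is absorbed into $C_2$. Collecting the four bounds proves the claim.
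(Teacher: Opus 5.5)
Your proposal is correct and follows the paper's proof essentially step for step: the same four-term decomposition, the same comparison $\sup_{f\in\mathcal H(1)}S(\hat w,f)+\lambda p(\hat w)\le\sup_{f\in\mathcal H(1)}S(w^*,f)+\lambda p(w^*)$ at the feasible oracle weights, the same conditional bound $\sigma^2 p(\hat w)$ for the noise term, a Hoeffding/V-statistic argument for the $\hat{\mathbb E}_X-\mathbb E_X$ term, and a centered-mean bound for the projection term. The only difference is in bounding $\sup_{f}S(w^*,f)$: you Hoeffding-decompose the whole V-statistic before applying $\hat{\boldsymbol\Sigma}_A^{-1}$, which is slightly cleaner than the paper's split into Lemma~\ref{lem:M_nl} plus a Frobenius-norm bound (centered at $\boldsymbol\mu_1(f)=\boldsymbol\mu_2(f)$); there you only need $O_p(n^{-1/2})$ for the degenerate part, which follows by conditioning on $(A_i,X_i)$ and applying Dudley's bound to the inner average over $j$, so you do not need the Nolan--Pollard or Arcones--Gin\'e U-process inequalities, whose entropy requirements (e.g.\ $\int_0^1\log N(\epsilon)\,d\epsilon<\infty$) are not implied by Assumption~\ref{ass:entropy} when $\alpha\ge 1$.
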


\begin{proof}
    We decompose $\hat{\boldsymbol{\beta}} - \boldsymbol{\beta}^*$ into four terms as in \eqref{eq:thm3_part1}--\eqref{eq:thm3_part4} and bound each term separately.
    \begin{subequations} 
    \begin{align} 
        \hat{\boldsymbol{\beta}} - \boldsymbol{\beta}^* = & \,
        \hat{\boldsymbol{\Sigma}}_A^{-1} \left(
        \frac{1}{n} \sum_{i=1}^n A_i \left( \hat{w}_i m(A_i, X_i) - \frac{1}{n} \sum_{j=1}^n m(A_i, X_j) \right)
        \right) 
        \label{eq:thm3_part1} \\
        & +
        \hat{\boldsymbol{\Sigma}}_A^{-1} \left(
        \frac{1}{n} \sum_{i=1}^n A_i \left( \frac{1}{n} \sum_{j=1}^n m(A_i, X_j) - \mathbb{E}_X [m(A_i, X)] \right)
        \right) 
        \label{eq:thm3_part2} \\
        & +
        \hat{\boldsymbol{\Sigma}}_A^{-1} \left( \frac{1}{n} \sum_{i=1}^n A_i \hat{w}_i \varepsilon_i \right)
        \label{eq:thm3_part3} \\
        & +
        \hat{\boldsymbol{\Sigma}}_A^{-1} \left( \frac{1}{n} \sum_{i=1}^n A_i \left( \mathbb{E}_X [m(A_i, X)] - A_i^\intercal \boldsymbol{\beta}^* \right)\right)
        \label{eq:thm3_part4}
    \end{align} 
    \end{subequations} 

    The $\ell_2$ norm of the term in \eqref{eq:thm3_part1} can be bounded as follows.
    \begin{align*}
        & \left\| \hat{\boldsymbol{\Sigma}}_A^{-1} \left(
        \frac{1}{n} \sum_{i=1}^n A_i \left( \hat{w}_i m(A_i, X_i) - \frac{1}{n} \sum_{j=1}^n m(A_i, X_j) \right) 
        \right) \right\|_2 \\
        & \le
        \| m \|_{\mathcal{H}} \sup_{f \in \mathcal{H}(1)} \left\| \hat{\boldsymbol{\Sigma}}_A^{-1} \left(
        \frac{1}{n} \sum_{i=1}^n A_i \left( \hat{w}_i f(A_i, X_i) - \frac{1}{n} \sum_{j=1}^n f(A_i, X_j) \right)
        \right) \right\|_2 \\
        & =
            \| m \|_{\mathcal{H}} \sup_{f \in \mathcal{H}(1)} \sqrt{ S(\hat{w}, f) }
    \end{align*}
    Hence, by Lemma~\ref{lem:S(w_hat,f)_p(w_hat)_rate},
    $$
        \| m \|_{\mathcal{H}} \sup_{f \in \mathcal{H}(1)} \sqrt{ S(\hat{w}, f) }
        = 
        O_p\left( 
        \frac{\| m \|_{\mathcal{H}} \sqrt{L}}{\sqrt{c_A n}} \left( \sqrt{B}(C_w + 1) (\kappa_A \kappa_X)^{1-\alpha/2} + \sqrt{\lambda} C_w \right)
        \right).
    $$

    Now, for \eqref{eq:thm3_part2}, by Lemma~\ref{lem:cross_term_rate}, we have
    $$
        \left\| \hat{\boldsymbol{\Sigma}}_A^{-1} \left(
        \frac{1}{n} \sum_{i=1}^n A_i \left( \frac{1}{n} \sum_{j=1}^n m(A_i, X_j) - \mathbb{E}_X [m(A_i, X)] \right)
        \right) \right\|_2
        =
        O_p\!\left( \frac{\| m \|_\infty }{c_A \sqrt{n}} \right).
    $$
    
    For \eqref{eq:thm3_part3}, by Lemma~\ref{lem:error_term_bound},
    $$
        \left\|
        \hat{\boldsymbol{\Sigma}}_A^{-1} \left( \frac{1}{n} \sum_{i=1}^n A_i \hat{w}_i \varepsilon_i \right)
        \right\|_2
        =
        O_p\!\left( \frac{\sigma \sqrt{L} }{\sqrt{c_A n}} \left( \frac{ \sqrt{B} (C_w + 1) (\kappa_A \kappa_X)^{1 - \alpha/2}}{\sqrt{\lambda}} + C_w \right) \right).
    $$

    Last, for \eqref{eq:thm3_part4}, by Lemma~\ref{lem:m-beta*},
    $$
        \left\|
        \hat{\boldsymbol{\Sigma}}_A^{-1} \left( \frac{1}{n} \sum_{i=1}^n A_i \left( \mathbb{E}_X [m(A_i, X)] - A_i^\intercal \boldsymbol{\beta}^* \right)\right)
        \right\|_2 
        =
        O_p\!\left( \frac{\| m \|_\infty \sqrt{1 + c_A^{-2}} }{c_A \sqrt{n}} \right).
    $$

    Combining the results, we obtain
    $$
        \left\| \hat{\boldsymbol{\beta}} - \boldsymbol{\beta}^* \right\|_2 = O_p \left( \frac{1}{\sqrt{n}} \big( \| m \|_{\mathcal{H}} C_1 + \| m \|_\infty C_2 + C_3 \big) \right),
    $$
    where
    \begin{align}
        C_1 = 
        \frac{\sqrt{L B} (C_w + 1) (\kappa_A \kappa_X)^{1 - \alpha/2} + \sqrt{\lambda L} C_w}{\sqrt{c_A}},
        \label{eq:thm3_C_1}
    \end{align}
    \begin{align}
        C_2 =
        \frac{1 + \sqrt{1 + c_A^{-2}}}{c_A},
        \label{eq:thm3_C_2}
    \end{align}
    and
    \begin{align}
        C_3 =
        \frac{ \sigma \sqrt{L B} (C_w + 1) (\kappa_A \kappa_X)^{1 - \alpha/2}}{\sqrt{\lambda c_A}}
        +
        \frac{ \sigma \sqrt{L} C_w}{\sqrt{c_A}}.
        \label{eq:thm3_C_3}
    \end{align}
\end{proof}

%
%
\begin{theorem}[Convergence Rate of the Augmented Weighted Estimator] \label{thm:awe_rate_a}
    Under the conditions of Theorem~\ref{thm:conv_rate} and  Assumption~\ref{ass:delta_conv}, let $\hat{\boldsymbol{\beta}}_{\mathrm{AWE}}$ denote the Augmented Weighted Estimator defined in \eqref{eq:awe_estimator}. 
    If the regularization parameter satisfies $\lambda \asymp 1$, then
    $$
        \left\| \hat{\boldsymbol{\beta}}_{\text{AWE}} - \boldsymbol{\beta}^* \right\|_2 = O_p \left( \frac{1}{\sqrt{n}} \big( n^{-\zeta} C_1 + \| m \|_\infty C_2 + C_3 \big) \right).
    $$
    where $C_1$, $C_2$, and $C_3$ are defined in \eqref{eq:thm3_C_1}, \eqref{eq:thm3_C_2}, and \eqref{eq:thm3_C_3}, respectively.
\end{theorem}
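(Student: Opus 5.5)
The plan is to mirror the four-term decomposition used in the proof of Theorem~\ref{thm:conv_rate_a}, with the outcome function $m$ replaced by the residual function $\delta := m - \hat m \in \mathcal{H}$ wherever the weights act on it. Writing $Y_i = m(A_i,X_i) + \varepsilon_i$ in \eqref{eq:y_tilde_def} gives
\begin{align*}
\tilde Y_i = \hat w_i \delta(A_i,X_i) - \frac1n\sum_{j=1}^n \delta(A_i,X_j) + \frac1n\sum_{j=1}^n m(A_i,X_j) + \hat w_i \varepsilon_i .
\end{align*}
This is the key algebraic identity. Substituting it into \eqref{eq:awe_estimator} yields the decomposition
\begin{align*}
\hat{\boldsymbol{\beta}}_{AWE} - \boldsymbol{\beta}^* ={}& \hat{\boldsymbol{\Sigma}}_A^{-1}\frac1n\sum_{i=1}^n A_i\Bigl(\hat w_i\delta(A_i,X_i) - \frac1n\sum_{j=1}^n\delta(A_i,X_j)\Bigr) \\
&+ \hat{\boldsymbol{\Sigma}}_A^{-1}\frac1n\sum_{i=1}^n A_i\Bigl(\frac1n\sum_{j=1}^n m(A_i,X_j) - \mathbb{E}_X[m(A_i,X)]\Bigr) \\
&+ \hat{\boldsymbol{\Sigma}}_A^{-1}\frac1n\sum_{i=1}^n A_i\hat w_i\varepsilon_i \\
&+ \hat{\boldsymbol{\Sigma}}_A^{-1}\frac1n\sum_{i=1}^n A_i\bigl(\mathbb{E}_X[m(A_i,X)] - A_i^\intercal\boldsymbol{\beta}^*\bigr).
\end{align*}

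The last three terms coincide exactly with \eqref{eq:thm3_part2}--\eqref{eq:thm3_part4}. Hence the same lemmas used in Theorem~\ref{thm:conv_rate_a} apply verbatim: Lemma~\ref{lem:cross_term_rate}, Lemma~\ref{lem:error_term_bound} and Lemma~\ref{lem:m-beta*}. Together they produce the contributions $\|m\|_\infty C_2/\sqrt n$ and $C_3/\sqrt n$.

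For the first term I would use homogeneity of the balancing criterion. Since $S(\hat w,\cdot)^{1/2}$ is a seminorm on $\mathcal H$ (the map $f \mapsto$ the vector inside the norm is linear), its norm is at most
\begin{align*}
\|\delta\|_{\mathcal H}\,\sup_{f\in\mathcal H(1)} S(\hat w,f)^{1/2}.
\end{align*}
Lemma~\ref{lem:S(w_hat,f)_p(w_hat)_rate} bounds the supremum by $O_p(C_1/\sqrt n)$ for $\lambda\asymp1$, and Assumption~\ref{ass:delta_conv} gives $\|\delta\|_{\mathcal H}=O_p(n^{-\zeta})$. Multiplying the two gives $O_p(n^{-\zeta}C_1/\sqrt n)$. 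Combining all four bounds by the triangle inequality gives the claim.

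The main obstacle is that $\hat m$, unlike $m$, is data-dependent: it is fitted on the same sample used to build $\hat w$, with no sample splitting. The point to check carefully is that the bound on the first term needs no independence. The inequality $S(\hat w,\delta)\le\|\delta\|_{\mathcal H}^2\sup_{f\in\mathcal H(1)}S(\hat w,f)$ holds deterministically for every realization, because the supremum is uniform over the unit ball. The product of two $O_p$ quantities is then $O_p$ of the product. Dependence of $\hat m$ on $\varepsilon$ would only matter if $\hat m$ multiplied the noise. In the identity above the noise enters only through $\hat w_i\varepsilon_i$, and $\hat w$ does not depend on $\mathbf Y$ by \eqref{eq:w_hat_def}. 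So the conditional-mean-zero argument behind Lemma~\ref{lem:error_term_bound} remains valid.
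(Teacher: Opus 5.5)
Your proposal is correct and follows the paper's proof essentially step for step. It uses the same rewriting of $\tilde Y_i$ with $\Delta = m-\hat m$ and the same four-term decomposition, with the last three terms handled verbatim by Lemmas~\ref{lem:cross_term_rate}, \ref{lem:error_term_bound} and \ref{lem:m-beta*}. In the first term it uses the same homogeneity step, replacing $\|m\|_{\mathcal H}$ by $\|m-\hat m\|_{\mathcal H}=O_p(n^{-\zeta})$.

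Two of your remarks make explicit what the paper leaves implicit. First, the first-term bound holds pathwise, so reusing the sample for $\hat m$ costs nothing. Second, $\hat w$ depends only on $(\mathbf A,\mathbf X)$; this holds for a deterministic $\lambda \asymp 1$, but not under the $\mathbf Y$-dependent selection rule of Section~\ref{s:lambda}.
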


\begin{proof}
    By the definition of $\tilde{Y}_i$, we can rewrite
    $$
        \tilde{Y}_i = \hat{w}_i\Bigl(m(A_i,X_i) + \varepsilon_i - \hat{m}(A_i,X_i)\Bigr) + \frac{1}{n} \sum_{j=1}^n \hat{m}(A_i,X_j).
    $$
    Then
    \begin{align*}
        \hat{\boldsymbol{\beta}}_{AWE} - \boldsymbol{\beta}^*  
        = & \,
        \hat{\boldsymbol{\Sigma}}_A^{-1} \left( \frac{1}{n}
        \sum_{i=1}^n A_i \left( \hat{w}_i \Delta(A_i, X_i) - \frac{1}{n} \sum_{j=1}^n \Delta(A_i, X_j) \right)
        \right) \\
        & +
        \hat{\boldsymbol{\Sigma}}_A^{-1} \left( \frac{1}{n}
        \sum_{i=1}^n A_i \left(
        \frac{1}{n} \sum_{j=1}^n m(A_i,X_j) - \mathbb{E}_X [m(A_i, X)]
        \right) \right) \\
        & +
        \hat{\boldsymbol{\Sigma}}_A^{-1} \left( \frac{1}{n}
        \sum_{i=1}^n A_i \hat{w}_i \varepsilon_i
        \right) \\
        & +
        \hat{\boldsymbol{\Sigma}}_A^{-1} \left( \frac{1}{n}
        \sum_{i=1}^n A_i \left(
        \mathbb{E}_X [m(A_i, X)] - A_i^\intercal \boldsymbol{\beta}^*
        \right) \right)
    \end{align*}
    where $\Delta (a, x) = m(a, x) - \hat{m} (a, x)$.
    This is exactly same as \eqref{eq:thm3_part2} -- \eqref{eq:thm3_part4} but \eqref{eq:thm3_part1} with inserting $\Delta(a, x)$ instead of $m(a, x)$. Therefore we can directly bring up the result of Theorem~\ref{thm:conv_rate_a} and plug in $\| \Delta \|_{\mathcal{H}}$ where the $ \| m \|_{\mathcal{H}}$ was, then 
    $$
        \left\| \hat{\boldsymbol{\beta}}_{AWE} - \boldsymbol{\beta}^* \right\|_2 = O_p \left( \frac{1}{\sqrt{n}} \big( \| \Delta \|_{\mathcal{H}} C_1 + \| m \|_\infty C_2 + C_3 \big) \right),
    $$
    where $C_1$, $C_2$, and $C_3$ are \eqref{eq:thm3_C_1}, \eqref{eq:thm3_C_2}, and \eqref{eq:thm3_C_3}, respectively.
    And then by Assumption~\ref{ass:delta_conv}, for $\zeta > 0$,
    $$
        \left\| \hat{\boldsymbol{\beta}}_{AWE} - \boldsymbol{\beta}^* \right\|_2 = O_p \left( \frac{1}{\sqrt{n}} \big( n^{-\zeta} C_1 + \| m \|_\infty C_2 + C_3 \big) \right).
    $$
\end{proof}
%
%
\begin{theorem}[Conditional Asymptotic Normality]
    Define
    $$
        \tilde{\boldsymbol{\beta}}
        =
        (\mathbf{A}^\intercal \mathbf{A})^{-1}
        \mathbf{A}^\intercal
        \hat{\mathbb{E}}_X[\mathbf{m}]
    \quad
    \text{where}
    \quad
        \hat{\mathbb{E}}_X[\mathbf{m}]
        :=
        \begin{pmatrix}
            \frac1n \sum_{j=1}^n m(A_1, X_j)
            \\
            \vdots
            \\
            \frac1n \sum_{j=1}^n m(A_n, X_j)
        \end{pmatrix},
    $$
    and suppose that $ Y_i = m(A_i, X_i) + \varepsilon_i $.
    Under the conditions of Theorem~\ref{thm:awe_rate} and  Assumption~\ref{ass:eps_third_moment} and \ref{ass:weighted_stability}, it holds that
    $$
        \sqrt{n} 
        \left\{ 
        \hat{\boldsymbol{\Sigma}}_A^{-1} \left( \frac{1}{n} \sum_{i=1}^n \hat{w}_i^2 \hat{\varepsilon}_i^2 A_i A_i^\intercal \right) \hat{\boldsymbol{\Sigma}}_A^{-1} 
        \right\}^{-1/2} 
        \left( 
        \hat{\boldsymbol{\beta}}_{AWE} - \tilde{\boldsymbol{\beta}} 
        \right) 
        \overset{d}{\longrightarrow} \mathcal{N}(\mathbf{0}, \mathbf{I} ).
    $$
    where $ \hat{\varepsilon}_i = Y_i - \hat{m} (A_i, X_i) $.
\end{theorem}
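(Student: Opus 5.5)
We first write the centered quantity exactly. Since $\tilde Y_i = \hat w_i(m(A_i,X_i)+\varepsilon_i-\hat m(A_i,X_i)) + \frac1n\sum_j \hat m(A_i,X_j)$ and $\tilde{\boldsymbol\beta}=\hat{\boldsymbol\Sigma}_A^{-1}\frac1n\sum_i A_i\frac1n\sum_j m(A_i,X_j)$ with $\hat{\boldsymbol\Sigma}_A=\mathbf A^\intercal\mathbf A/n$, we obtain
\begin{align*}
\hat{\boldsymbol\beta}_{AWE}-\tilde{\boldsymbol\beta}
= \hat{\boldsymbol\Sigma}_A^{-1}\frac1n\sum_{i=1}^n A_i\Bigl(\hat w_i\Delta(A_i,X_i)-\frac1n\sum_{j=1}^n\Delta(A_i,X_j)\Bigr)
+ \hat{\boldsymbol\Sigma}_A^{-1}\frac1n\sum_{i=1}^n A_i\hat w_i\varepsilon_i ,
\end{align*}
with $\Delta=m-\hat m$ as in the proof of Theorem~\ref{thm:awe_rate_a}. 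The term $\mathbb E_X[m]$ and the projection terms cancel, so only the balancing remainder $R_n$ and the noise term $T_n$ remain. Exactly as in the bound on \eqref{eq:thm3_part1}, we have $\|R_n\|_2\le\|\Delta\|_{\mathcal H}\sup_{f\in\mathcal H(1)}\sqrt{S(\hat w,f)}$. By Lemma~\ref{lem:S(w_hat,f)_p(w_hat)_rate} and Assumption~\ref{ass:delta_conv}, this is $O_p(n^{-1/2-\zeta})$, hence $\sqrt n R_n=o_p(1)$. The claim therefore reduces to a CLT for $\sqrt n T_n$ with a self-normalizing matrix.

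For $T_n$, we condition on $\mathcal F_n=\sigma(\hat w, A_{1:n},X_{1:n})$ and treat $\hat w$ as fixed given it. The delicate point is that $\hat w$ is computed from $(A,X)$ only, since \eqref{eq:w_hat_def} does not involve $Y$. This is essential and should be stated explicitly. If $\lambda$ is tuned with $\hat m$, we treat it as fixed, as the theorem does. Conditional on $\mathcal F_n$, the variables $\xi_i=\hat w_iA_i\varepsilon_i/\sqrt n$ are independent with mean zero and total covariance $V_n=\frac1n\sum_i\hat w_i^2\sigma_i^2A_iA_i^\intercal$, where $\phi_{\min}(V_n)\ge c_V$ w.p.t.1. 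We apply the Cramér--Wold device and the Lyapunov CLT to $u^\intercal V_n^{-1/2}\sum_i\xi_i$. Using $\|A_i\|_2\le1$ and Assumption~\ref{ass:eps_third_moment}, the Lyapunov ratio is at most
\[
c_V^{-3/2} n^{-3/2}\sum_i\hat w_i^3C_\varepsilon\le c_V^{-3/2}C_\varepsilon\|\hat w\|_\infty n^{-1/2}\cdot\frac1n\sum_i\hat w_i^2 .
\]
Here $\frac1n\sum\hat w_i^2=O_p(1)$ follows from the penalty bound $p(\hat w)$ in Lemma~\ref{lem:S(w_hat,f)_p(w_hat)_rate}, since $p(\hat w)\gtrsim \phi_{\max}(\mathbf A^\intercal\mathbf A)^{-2}\sum_i\hat w_i^2\|A_i\|^2$ and $\|A_i\|_2\ge L^{-1/2}$ on the simplex. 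Alternatively, one can use $\|\hat w\|_\infty^3/\sqrt n=o_p(1)$ directly. Either way the ratio is $o_p(1)$ by Assumption~\ref{ass:weighted_stability}. A conditional CLT holding in probability, combined with dominated convergence on characteristic functions, yields the unconditional statement $V_n^{-1/2}\sqrt n\,\hat{\boldsymbol\Sigma}_AT_n\to\mathcal N(0,I)$.

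The main obstacle is replacing $\sigma_i^2$ by $\hat\varepsilon_i^2$, i.e. showing that $\hat V_n-V_n=o_p(1)$ for $\hat V_n=\frac1n\sum\hat w_i^2\hat\varepsilon_i^2A_iA_i^\intercal$. We split this as
\[
\hat\varepsilon_i^2-\sigma_i^2=(\varepsilon_i^2-\sigma_i^2)+2\varepsilon_i\Delta_i+\Delta_i^2 .
\]
For the first piece, conditional on $\mathcal F_n$ it has mean zero. Its conditional variance cannot rely on fourth moments, so we use a truncation or weak-LLN argument with third moments: $\mathbb E|\varepsilon^2-\sigma^2|^{3/2}$ is bounded, giving an $L^{3/2}$ von Bahr--Esseen bound of order $n^{-1/2}\|\hat w\|_\infty^{2}\cdot(\frac1n\sum\hat w_i^{2})^{2/3}$ up to constants, which is $o_p(1)$ because $\|\hat w\|_\infty^2=o_p(n^{1/3})$. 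For the $\Delta$ pieces we use $\|\Delta\|_\infty\le\kappa_A\kappa_X\|\Delta\|_{\mathcal H}=O_p(n^{-\zeta})$. Then Cauchy--Schwarz gives the bound $\|\Delta\|_\infty\,(\frac1n\sum\hat w_i^2)^{1/2}(\frac1n\sum\hat w_i^2\varepsilon_i^2)^{1/2}=o_p(1)$ (note that $\hat m$ depends on $\varepsilon$, which is exactly why we need the sup-norm bound). The $\Delta^2$ term is handled similarly. Since $\hat{\boldsymbol\Sigma}_A\to\boldsymbol\Sigma_A\succ0$ by the LLN and Assumption~\ref{ass:Sigma_A}, and $V_n$ is bounded below and above, continuity of $M\mapsto M^{-1/2}$ lets us replace $(\hat{\boldsymbol\Sigma}_A^{-1}V_n\hat{\boldsymbol\Sigma}_A^{-1})^{-1/2}$ by the estimated version. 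Combined with $\sqrt nR_n=o_p(1)$ and Slutsky's lemma, this gives the stated limit.
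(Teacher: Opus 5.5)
Your proposal is correct and follows the paper's proof essentially step for step. Both use the same split of $\hat{\boldsymbol\beta}_{AWE}-\tilde{\boldsymbol\beta}$ into a balancing remainder, controlled by $\|\Delta\|_{\mathcal H}\sup_{f\in\mathcal H(1)}\sqrt{S(\hat w,f)}=O_p(n^{-1/2-\zeta})$, plus the weighted noise term; a conditional Lyapunov CLT given the design using $\|\hat w\|_\infty=o_p(n^{1/6})$; the decomposition $\hat\varepsilon_i^2-\sigma_i^2=(\varepsilon_i^2-\sigma_i^2)+2\varepsilon_i\Delta_i+\Delta_i^2$; and Slutsky.

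What you add are details the paper leaves implicit or only asserts:
\begin{itemize}
\item that $\hat w$ is a function of $(A,X)$ alone;
\item that $\frac1n\sum_i\hat w_i^2=O_p(1)$ follows from the bound on $p(\hat w)$;
\item a third-moment (von Bahr--Esseen) treatment of $\varepsilon_i^2-\sigma_i^2$;
\item Cauchy--Schwarz for the cross term, where $\hat m$ depends on $\varepsilon$.
\end{itemize}
In the von Bahr--Esseen step, the resulting order is $n^{-1/3}\|\hat w\|_\infty^{2/3}(\frac1n\sum_i\hat w_i^2)^{2/3}$ rather than the expression you wrote, but it is still $o_p(1)$.
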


\begin{proof}
    $$
        \hat{\boldsymbol{\beta}}_{AWE} - \tilde{\boldsymbol{\beta}} 
        =
        \left( \frac{1}{n} \mathbf{A}^\intercal \mathbf{A} \right)^{-1} 
        \left\{ \frac{1}{n} \sum_{i=1}^n A_i \left( \hat{w}_i \varepsilon_i + \hat{w}_i \Delta(A_i, X_i) - \frac1n \sum_{j=1}^n \Delta(A_i, X_j) \right) \right\}
    $$
    where $\Delta = m - \hat{m}$. Then
    \begin{align*}
        \sqrt{n} \Bigl( \hat{\boldsymbol{\beta}}_{AWE} - \tilde{\boldsymbol{\beta}} \Bigl) =&
            \hat{\boldsymbol{\Sigma}}_A^{-1} \left\{ \frac{1}{\sqrt{n}} \sum_{i=1}^n A_i \left( \hat{w}_i \varepsilon_i + \hat{w}_i \Delta(A_i, X_i) - \frac1n \sum_{j=1}^n \Delta(A_i, X_j) \right) \right\} \\
        =&
           \hat{\boldsymbol{\Sigma}}_A^{-1} \left\{ \frac{1}{\sqrt{n}} \sum_{i=1}^n \hat{w}_i \varepsilon_i A_i 
           + 
           \frac{1}{\sqrt{n}} \sum_{i=1}^n A_i \left( \hat{w}_i \Delta(A_i, X_i) - \frac1n \sum_{j=1}^n \Delta(A_i, X_j) \right) \right\} \\
        =&
            \hat{\boldsymbol{\Sigma}}_A^{-1} 
            \left( \frac{1}{\sqrt{n}} \sum_{i=1}^n \hat{w}_i \varepsilon_i A_i \right) 
            + 
            \sqrt{n} \, \hat{\boldsymbol{\Sigma}}_A^{-1}
            \left\{
            \frac{1}{n} \sum_{i=1}^n A_i \left( \hat{w}_i \Delta(A_i, X_i) - \frac1n \sum_{j=1}^n \Delta(A_i, X_j) \right)
            \right\}
    \end{align*}
    In Theorem~\ref{thm:conv_rate_a}, when we calculated the convergence rate of \eqref{eq:thm3_part1}, we got 
    \begin{align*}
        \left\| \hat{\boldsymbol{\Sigma}}_A^{-1} \left\{
        \frac{1}{n} \sum_{i=1}^n A_i \left( \hat{w}_i m(A_i, X_i) - \frac{1}{n} \sum_{j=1}^n m(A_i, X_j) \right) 
        \right\} \right\|_2 
        & \le 
        \| m \|_{\mathcal{H}} \sup_{f \in \mathcal{H}(1)} \sqrt{ S(\hat{w}, f) } \\
        & =
        O_p\left( \frac{1}{\sqrt{n}} \| m \|_{\mathcal{H}} C_1 \right)
    \end{align*}
    where $C_1$ is \eqref{eq:thm3_C_1}. Applying the same argument with $\Delta$ in place of $m$, we obtain
    $$
        \sqrt{n} \, \left\| \hat{\boldsymbol{\Sigma}}_A^{-1}
        \left\{
        \frac{1}{n} \sum_{i=1}^n A_i \left( \hat{w}_i \Delta(A_i, X_i) - \frac1n \sum_{j=1}^n \Delta(A_i, X_j) \right)
        \right\} \right\|_2
        = 
        \sqrt{n} \, O_p\left( \frac{1}{\sqrt{n}} n^{-\zeta} C_1 \right) 
        = 
        O_p\left( n^{-\zeta} C_1 \right) 
    $$
    which is $o_p(1)$. Hence, the second term is asymptotically negligible. 
    
    Let 
    $$
        T_n := \hat{\boldsymbol{\Sigma}}_A^{-1} \left[ \frac{1}{\sqrt{n}} \sum_{i=1}^n \hat{w}_i \varepsilon_i A_i \right].
    $$
    Then Lemma~\ref{lem:normality} shows that
    $$
        \mathbf{V}_n^{-1/2} T_n \overset{d}{\longrightarrow} \mathcal{N}(\mathbf{0}, \mathbf{I} ),
    $$
    where
    $$
        \mathbf{V}_n
        :=
        \hat{\boldsymbol{\Sigma}}_A^{-1} \left( \frac1n \sum_{i=1}^n \hat w_i^2 \sigma_i^2 A_i A_i^\intercal \right) \hat{\boldsymbol{\Sigma}}_A^{-1}.
    $$

    Define $ \hat{\varepsilon}_i = Y_i - \hat{m}(A_i, X_i). $
    Since $Y_i = m(A_i,X_i)+\varepsilon_i$,
    $$ 
        \hat{\varepsilon}_i - \varepsilon_i = \Delta_i,
        \qquad
        \Delta_i = m(A_i,X_i) - \hat{m}(A_i,X_i).
    $$
    By Assumption~\ref{ass:delta_conv} and \ref{ass:kernels},
    $$
        \max_{1 \le i \le n} |\Delta_i| = O_p(n^{-\zeta}) = o_p(1).
    $$

    Let
    $$
        \hat{\mathbf{V}}_n
        :=
        \hat{\boldsymbol{\Sigma}}_A^{-1}
        \left(
        \frac1n \sum_{i=1}^n
        \hat w_i^2 \hat{\varepsilon}_i^2 A_i A_i^\intercal
        \right)
        \hat{\boldsymbol{\Sigma}}_A^{-1}.
    $$
    Then
    $$
        \hat{\varepsilon}_i^2 - \sigma_i^2
        =
        (\varepsilon_i^2 - \sigma_i^2) + 2 \varepsilon_i \Delta_i + \Delta_i^2.
    $$
    Hence
    \begin{align*}
        \hat{\mathbf{V}}_n - \mathbf{V}_n
        =
        \hat{\boldsymbol{\Sigma}}_A^{-1}
        \left[
        \frac1n \sum_{i=1}^n
        \hat w_i^2
        (\hat{\varepsilon}_i^2 - \sigma_i^2)
        A_i A_i^\intercal
        \right]
        \hat{\boldsymbol{\Sigma}}_A^{-1}.
    \end{align*}

    The term involving $(\varepsilon_i^2 - \sigma_i^2)$ is $o_p(1)$ under the same moment and weight conditions as in Lemma~\ref{lem:normality}. The remaining terms are also $o_p(1)$ since $\max_i |\Delta_i| = o_p(1)$, $\|A_i\|_2 \le 1$, and $n^{-1} \sum_{i=1}^n \hat w_i^2 = O_p(1)$. Therefore,
    $$
        \|\hat{\mathbf{V}}_n - \mathbf{V}_n\|_{op} = o_p(1).
    $$
    Since $\phi_{\min}(\mathbf{V}_n)$ is bounded away from zero with probability tending to one by Assumption~\ref{ass:weighted_stability},
    $$
        \mathbf{V}_n^{-1/2} \hat{\mathbf{V}}_n \mathbf{V}_n^{-1/2}
        =
        \mathbf{I} + o_p(1).
    $$

    By Slutsky's theorem,
    $$
        \hat{\mathbf{V}}_n^{-1/2} T_n
        \overset{d}{\longrightarrow}
        \mathcal{N}(\mathbf{0}, \mathbf{I}).
    $$
    Since
    $$
        \sqrt{n}
        \left(
        \hat{\boldsymbol{\beta}}_{AWE} - \tilde{\boldsymbol{\beta}}
        \right)
        =
        T_n,
    $$
    it follows that
    $$
        \sqrt{n} 
        \left\{ 
        \hat{\boldsymbol{\Sigma}}_A^{-1}
        \left(
        \frac{1}{n} \sum_{i=1}^n
        \hat{w}_i^2 \hat{\varepsilon}_i^2 A_i A_i^\intercal
        \right)
        \hat{\boldsymbol{\Sigma}}_A^{-1}
        \right\}^{-1/2}
        \left(
        \hat{\boldsymbol{\beta}}_{AWE} - \tilde{\boldsymbol{\beta}}
        \right)
        \overset{d}{\longrightarrow}
        \mathcal{N}(\mathbf{0}, \mathbf{I}).
    $$
\end{proof}

\section{Auxiliary Lemmas}
%
%
\begin{lemma} \label{lem:sigmaA}
    Let $\{ A_i \}_{i=1}^n$ be independent and identically distributed random vectors taking values in the simplex
    $$
        \mathcal{A} := \left\{ a \in \mathbb{R}^L : a_l \ge 0,\; \sum_{l = 1}^L a_l = 1 \right\}.
    $$
    Let $\hat{\boldsymbol{\Sigma}}_A = \frac{1}{n} \sum_{i=1}^n A_i A_i^\intercal$ be the sample second moment matrix and $\boldsymbol{\Sigma}_A = \mathbb{E}[A A^\intercal ]$ its population counterpart. 
    Under Assumption~\ref{ass:Sigma_A}, we have
    $$
        \left\rVert \hat{\boldsymbol{\Sigma}}_A^{-1} \right\rVert_{op} = O_p\left( \frac{1}{c_A} \right).
    $$
\end{lemma}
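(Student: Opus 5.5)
The plan is to show that $\hat{\boldsymbol{\Sigma}}_A$ concentrates around $\boldsymbol{\Sigma}_A$ in operator norm. Then Weyl's inequality transfers the eigenvalue lower bound of Assumption~\ref{ass:Sigma_A} to the sample matrix with probability tending to one.

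First, entrywise concentration. Since each $A_i$ lies in the simplex, its entries lie in $[0,1]$, and $\|A_i\|_2 \le \|A_i\|_1 = 1$. Hence each entry of $A_iA_i^\intercal$ lies in $[0,1]$, and $\|A_iA_i^\intercal\|_{op} = \|A_i\|_2^2 \le 1$. The summands $A_iA_i^\intercal - \boldsymbol{\Sigma}_A$ are i.i.d., mean zero, and bounded. By Chebyshev applied to each of the $L^2$ entries, each entry of $\hat{\boldsymbol{\Sigma}}_A - \boldsymbol{\Sigma}_A$ is $O_p(n^{-1/2})$. Bounding the operator norm by the Frobenius norm gives
$$\|\hat{\boldsymbol{\Sigma}}_A - \boldsymbol{\Sigma}_A\|_{op} \le \|\hat{\boldsymbol{\Sigma}}_A - \boldsymbol{\Sigma}_A\|_F = O_p(L/\sqrt{n}) = o_p(1),$$
since $L$ is fixed. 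Alternatively, the matrix Bernstein inequality gives the explicit bound $O_p(\sqrt{\log L / n})$. Either route suffices.

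Second, the eigenvalue step. Weyl's inequality gives
$$\phi_{\min}(\hat{\boldsymbol{\Sigma}}_A) \ge \phi_{\min}(\boldsymbol{\Sigma}_A) - \|\hat{\boldsymbol{\Sigma}}_A - \boldsymbol{\Sigma}_A\|_{op} \ge c_A - o_p(1).$$
Hence on an event $E_n$ with $P(E_n)\to 1$, we have $\phi_{\min}(\hat{\boldsymbol{\Sigma}}_A) \ge c_A/2$. On $E_n$ the matrix $\hat{\boldsymbol{\Sigma}}_A$ is invertible, and
$$\|\hat{\boldsymbol{\Sigma}}_A^{-1}\|_{op} = 1/\phi_{\min}(\hat{\boldsymbol{\Sigma}}_A) \le 2/c_A.$$
Given $\varepsilon>0$, take $M=2$ and $n$ large enough that $P(E_n^c)<\varepsilon$. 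This yields $P(\|\hat{\boldsymbol{\Sigma}}_A^{-1}\|_{op} > M/c_A) < \varepsilon$, which is exactly $O_p(1/c_A)$.

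The only delicate point is the meaning of $O_p(1/c_A)$. The constant $M$ must not depend on $c_A$ for the bound to be informative. The Weyl argument provides this, since $M=2$ is uniform. The sample size at which $E_n$ has high probability does depend on $c_A$ and $L$, through the requirement $L/\sqrt{n} \ll c_A$. That dependence is harmless for a fixed-dimension asymptotic statement.

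A minor technical caveat is that $\hat{\boldsymbol{\Sigma}}_A$ may be singular on $E_n^c$. I would handle this by defining the inverse arbitrarily there, or by noting that the event has vanishing probability.
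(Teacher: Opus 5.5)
Your proposal is correct and follows essentially the paper's argument: operator-norm concentration of $\hat{\boldsymbol{\Sigma}}_A$ around $\boldsymbol{\Sigma}_A$, then Weyl's inequality, then inverting the eigenvalue lower bound. The paper cites matrix concentration at rate $\sqrt{\log L/n}$, and your entrywise Chebyshev plus Frobenius bound is an equally valid elementary substitute; your explicit event $E_n$ on which $\phi_{\min}(\hat{\boldsymbol{\Sigma}}_A)\ge c_A/2$, giving the constant $M=2$, makes the $O_p$ claim slightly more precise than the paper's version.
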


\begin{proof}
    The random vectors $\{A_i\}_{i=1}^n$ are i.i.d. and uniformly bounded, since $\|A_i\|_2 \le 1$. Standard concentration results for sums of independent random matrices imply that
    $$
        \left\| \hat{\boldsymbol{\Sigma}}_A - \boldsymbol{\Sigma}_A \right\|_{op} = O_p\left(\sqrt{\frac{\log L}{n}}\right).
    $$
    
    By Weyl's inequality, the smallest eigenvalue satisfies
    $$
        \phi_{\min} \left( \hat{\boldsymbol{\Sigma}}_A \right) 
        \ge 
        \phi_{\min} \left( \boldsymbol{\Sigma}_A \right) 
        - 
        \left\| \hat{\boldsymbol{\Sigma}}_A - \boldsymbol{\Sigma}_A \right\|_{op}.
    $$

    Under Assumption~\ref{ass:Sigma_A}, we have $\phi_{\min}(\boldsymbol{\Sigma}_A) \ge c_A > 0$. 
    Hence,
    $$
        \phi_{\min}(\hat{\boldsymbol{\Sigma}}_A) 
        \ge 
        c_A - O_p\left(\sqrt{\frac{\log L}{n}}\right) 
        = c_A - o_p(1).
    $$

    Therefore, $\phi_{\min}(\hat{\boldsymbol{\Sigma}}_A)$ is bounded away from zero in probability for sufficiently large $n$. 
    Therefore, the operator norm of the inverse matrix is
    $$
        \left\rVert \hat{\boldsymbol{\Sigma}}_A^{-1} \right\rVert_{op} 
        = 
        \frac{1}{\phi_{\min}(\hat{\boldsymbol{\Sigma}}_A)} 
        = O_p\left( \frac{1}{c_A} \right).
    $$
\end{proof}

%
%
\begin{lemma} \label{lem:M_nl}
    Let $\{(A_i, X_i)\}_{i=1}^n$ be independent and identically distributed, where $A_i = (a_{i1}, \dots, a_{iL})^\intercal$ denotes the treatment vector and $w_i^* = w^*(A_i, X_i)$ denotes the oracle weight defined in \eqref{eq:w*definition}. 
    For any $l=1,\dots,L$, define the centered sample mean process as
    $$ 
        M_{n,l}(f) = \frac{1}{n}\sum_{i=1}^{n} \Bigl( \xi_{i,l}(f) - \mathbb{E}[\xi_{i,l}(f)] \Bigr),
    $$
    where $\xi_{i,l}(f) = a_{il}\bigl( w_{i}^{*}f(A_{i},X_{i}) - \mathbb{E}_{X}[f(A_{i},X)] \bigr)$ and $\mathbb{E}_X$ denotes expectation with respect to the marginal distribution of $X$. 
    Then, under Assumptions~\ref{ass:C_w}, \ref{ass:kernels}, and \ref{ass:entropy}, we have
    $$
        \sup_{f\in\mathcal{H}(1)}|M_{n,l}(f)| = O_{p} \left(\frac{\sqrt{B} (C_w + 1) (\kappa_A \kappa_X)^{1-\alpha/2}}{\sqrt{n}}\right).
    $$
\end{lemma}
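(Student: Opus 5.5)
The plan is to treat $M_{n,l}$ as a centered empirical process indexed by the function class
$$\mathcal{G}_l := \bigl\{ g_f(a,x) = a_l\bigl(w^*(a,x) f(a,x) - \mathbb{E}_X[f(a,X)]\bigr) : f \in \mathcal{H}(1) \bigr\},$$
so that $\sup_{f\in\mathcal{H}(1)}|M_{n,l}(f)| = \|\mathbb{P}_n - \mathbb{P}\|_{\mathcal{G}_l}$. The rate will follow from a maximal inequality for empirical processes with a uniformly bounded envelope and a polynomial uniform-entropy bound (Dudley's entropy integral, or the uniform-entropy version in van der Vaart--Wellner). The i.i.d. structure of $(A_i,X_i)$ and the fact that $g_f$ is a fixed deterministic function of $(A_i,X_i)$ (since $\mathbb{E}_X$ integrates against the population law, not the sample) make this a standard Donsker-type setting.

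\textbf{Step 1: envelope.} By the reproducing property and Assumption~\ref{ass:kernels}, every $f \in \mathcal{H}(1)$ satisfies $\|f\|_\infty \le \sup_{(a,x)} \sqrt{K((a,x),(a,x))} \le \kappa_A\kappa_X$. Combining this with $0 \le a_l \le 1$ and Assumption~\ref{ass:C_w}, we get $|g_f| \le (C_w+1)\kappa_A\kappa_X$ uniformly.

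\textbf{Step 2: transfer entropy from $\mathcal{H}(1)$ to $\mathcal{G}_l$.} The map $f \mapsto g_f$ is linear and Lipschitz in sup-norm:
$$\|g_f - g_{f'}\|_\infty \le C_w\|f-f'\|_\infty + \|\mathbb{E}_X[(f-f')(\cdot,X)]\|_\infty \le (C_w+1)\|f-f'\|_\infty.$$
Hence an $\epsilon$-net of $\mathcal{H}(1)$ in $\|\cdot\|_\infty$ yields a $(C_w+1)\epsilon$-net of $\mathcal{G}_l$, and Assumption~\ref{ass:entropy} gives
$$H(\mathcal{G}_l,\|\cdot\|_\infty,\epsilon) \le B (C_w+1)^\alpha \epsilon^{-\alpha}.$$
Sup-norm entropy dominates every $L_2(Q)$ entropy, so this bounds the uniform entropy as well.

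\textbf{Step 3: maximal inequality.} With envelope $F = (C_w+1)\kappa_A\kappa_X$, apply
$$\mathbb{E}\|\mathbb{P}_n-\mathbb{P}\|_{\mathcal{G}_l} \lesssim n^{-1/2}\int_0^{2F}\sqrt{H(\mathcal{G}_l,\|\cdot\|_\infty,\epsilon)}\,d\epsilon.$$
Because $\alpha<2$, the integral converges and equals a constant times $\sqrt{B}(C_w+1)^{\alpha/2} F^{1-\alpha/2}$. This is of order $\sqrt{B}(C_w+1)(\kappa_A\kappa_X)^{1-\alpha/2}$, which matches the claimed constant. Markov's inequality then converts the expectation bound into the stated $O_p$ rate.

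\textbf{Main obstacle.} The delicate point is bookkeeping the constants so that exactly the stated form appears. In particular, the powers of $(C_w+1)$ and of $\kappa_A\kappa_X$ come from different places: the entropy rescaling contributes $(C_w+1)^{\alpha/2}$, and the envelope contributes $F^{1-\alpha/2}$, which carries both $(C_w+1)^{1-\alpha/2}$ and $(\kappa_A\kappa_X)^{1-\alpha/2}$. Multiplying these recovers $(C_w+1)(\kappa_A\kappa_X)^{1-\alpha/2}$. Measurability of the supremum also needs care; I would handle it by noting that $\mathcal{H}(1)$ is separable in sup-norm, so the supremum can be taken over a countable dense subset. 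Since only an $O_p$ statement is needed, any universal constant from the entropy integral can be absorbed.
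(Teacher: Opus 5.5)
Your proposal is correct and follows the paper's proof essentially step for step. Both arguments use the sup-norm Lipschitz transfer with factor $(C_w+1)$ to get entropy $B(C_w+1)^\alpha\epsilon^{-\alpha}$, an envelope/diameter of order $(C_w+1)\kappa_A\kappa_X$, and Dudley's entropy integral followed by Markov's inequality. Your version is, if anything, slightly more careful. You route the bound through sup-norm (hence uniform $L_2(Q)$) entropy and symmetrization, whereas the paper applies Dudley directly in the $L_2(P)$ pseudo-metric $d$, in which the empirical process need not have sub-Gaussian increments.
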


\begin{proof}
Define the $\sqrt{n}$-normalized empirical process $Z_{n,l}(f) = \sqrt{n}M_{n,l}(f)$. For $f_1,f_2\in\mathcal{H}(1)$, let $\delta_f = f_1 - f_2$. By Assumption~\ref{ass:C_w}, we have
\begin{align*}
    |\xi_{i,l}(\delta_f)| &= 
        \left| a_{il}\bigl( w_i^* \delta_f(A_i,X_i) - \mathbb{E}_X[\delta_f(A_i,X)] \bigr) \right| \\
    &\le 
        |a_{il}|\Big( |w_i^* \delta_f(A_i,X_i)| + |\mathbb{E}_X[\delta_f(A_i,X)]| \Big) \\
    &\le 
        (C_w+1)\|\delta_f\|_\infty.
\end{align*}
By Assumption~\ref{ass:kernels}, $\|f\|_\infty \le \kappa_A\kappa_X$ for all $f\in\mathcal{H}(1)$, so that 
$$ |\xi_{i,l}(\delta_f)| \le 2 \kappa_A \kappa_X (C_w+1). $$

Define the pseudo-metric 
$$ d(f_1,f_2)^2 = \mathbb{E}\big[(\xi_{i,l}(f_1)-\xi_{i,l}(f_2))^2\big]. $$
The above bound yields 
$$ d(f_1,f_2) \le (C_w+1)\|f_1-f_2\|_\infty, $$
which implies, by Assumption~\ref{ass:entropy},
$$ 
    (\mathcal{H}(1), d, \epsilon) 
    \le 
    H\bigl(\mathcal{H}(1), \|\cdot\|_\infty, \epsilon/(C_w+1)\bigr) 
    \le 
    B(C_w+1)^\alpha \epsilon^{-\alpha}.
$$

Since $\{Z_{n,l}(f)\}_{f\in\mathcal{H}(1)}$ is a centered empirical process, Dudley's entropy integral bound yields
$$
    \mathbb{E} \left[ \sup_{f\in\mathcal{H}(1)} |Z_{n,l}(f)| \right]
    \le 
    K \int_0^D \sqrt{H(\mathcal{H}(1), d, \epsilon)}\, d\epsilon,
$$
where the diameter satisfies $D \le 2 \kappa_A \kappa_X (C_w+1)$. Consequently,
\begin{align*}
    \mathbb{E} \left[ \sup_{f\in\mathcal{H}(1)} |Z_{n,l}(f)| \right] &\le 
        K \int_0^D \sqrt{B(C_w+1)^\alpha \epsilon^{-\alpha}}\, d\epsilon \\
    &= 
        K\sqrt{B}(C_w+1)^{\alpha/2} \int_0^D \epsilon^{-\alpha/2} d\epsilon \\
    &\le 
        K' \sqrt{B} (C_w+1) (\kappa_A \kappa_X)^{1-\alpha/2},
\end{align*}
for a constant $K'$ independent of $n$. This shows that 
$$ 
    \mathbb{E} \left[ \sup_{f\in\mathcal{H}(1)} |Z_{n,l}(f)| \right] = O \left( \sqrt{B} (C_w + 1) (\kappa_A \kappa_X)^{1-\alpha/2} \right), 
$$
and hence $\sup_{f\in\mathcal{H}(1)} |Z_{n,l}(f)| = O_p \Big(\sqrt{B} (C_w + 1) (\kappa_A \kappa_X)^{1-\alpha/2} \Big)$. Therefore,
$$
    \sup_{f\in\mathcal{H}(1)} |M_{n,l}(f)| 
    = 
    \frac{1}{\sqrt{n}}\sup_{f\in\mathcal{H}(1)} |Z_{n,l}(f)| 
    = 
    O_p \! \left(\frac{\sqrt{B} (C_w + 1) (\kappa_A \kappa_X)^{1-\alpha/2}}{\sqrt{n}}\right).
$$
\end{proof}

%
%
\begin{lemma}
    \label{lem:S(w*,f)_part1}
    Let $\mathbf{f} = (f(A_1,X_1), \dots, f(A_n,X_n))^\intercal$.
    Define $\mathbf{S}_1(f)$ and its expectation $\boldsymbol{\mu}_1(f)$ by
    $$
        \mathbf{S}_1(f) = (\mathbf{A}^\intercal \mathbf{A})^{-1} \mathbf{A}^\intercal \left( \mathbf{W}^* \mathbf{f} - \mathbb{E}_{X}[\mathbf{f}] \right),
        \quad
        \boldsymbol{\mu}_1(f) = \mathbb{E}[\mathbf{S}_1(f)].
    $$
    Under the conditions of Lemma~\ref{lem:sigmaA} and Lemma~\ref{lem:M_nl}, it holds that
    $$
        \sup_{f \in \mathcal{H}(1)} 
        \left\| \mathbf{S}_1(f) - \boldsymbol{\mu}_1(f) \right\|_2^2 
        = O_p\!\left( \frac{L B (C_w + 1)^2 (\kappa_A \kappa_X)^{2-\alpha}}{c_A n} \right).
    $$
\end{lemma}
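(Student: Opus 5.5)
Write $\mathbf{S}_1(f) = \hat{\boldsymbol{\Sigma}}_A^{-1}\,\mathbf{g}_n(f)$, where
$$
\mathbf{g}_n(f) := \frac{1}{n}\sum_{i=1}^n A_i\bigl(w_i^* f(A_i,X_i) - \mathbb{E}_X[f(A_i,X)]\bigr),
$$
so the $l$-th coordinate of $\mathbf{g}_n(f)$ is $\frac1n\sum_i \xi_{i,l}(f)$ in the notation of Lemma~\ref{lem:M_nl}. The plan is to compare $\mathbf{S}_1(f)$ with the linearized quantity $\hat{\boldsymbol{\Sigma}}_A^{-1}\mathbb{E}[\mathbf{g}_n(f)]$ rather than with $\boldsymbol{\mu}_1(f)$ directly. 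We would then show that the difference between that quantity and $\boldsymbol{\mu}_1(f)$ is of no larger order.

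\textbf{Main split.} By the triangle inequality,
$$
\|\mathbf{S}_1(f)-\boldsymbol{\mu}_1(f)\|_2 \le \|\hat{\boldsymbol{\Sigma}}_A^{-1}\|_{op}\,\|\mathbf{g}_n(f)-\mathbb{E}\mathbf{g}_n(f)\|_2 + \bigl\|\hat{\boldsymbol{\Sigma}}_A^{-1}\mathbb{E}\mathbf{g}_n(f) - \boldsymbol{\mu}_1(f)\bigr\|_2 .
$$

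\textbf{First term.} Since $\mathbf{g}_n(f)-\mathbb{E}\mathbf{g}_n(f) = (M_{n,1}(f),\dots,M_{n,L}(f))^\intercal$, we have
$$
\sup_{f\in\mathcal{H}(1)}\|\mathbf{g}_n(f)-\mathbb{E}\mathbf{g}_n(f)\|_2^2 \le \sum_{l=1}^L \sup_{f\in\mathcal{H}(1)}|M_{n,l}(f)|^2 .
$$
By Lemma~\ref{lem:M_nl}, this is $O_p\bigl(L B (C_w+1)^2(\kappa_A\kappa_X)^{2-\alpha}/n\bigr)$, since a finite sum of $L$ terms that are each $O_p$ is $O_p$ of the sum. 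Combining this with $\|\hat{\boldsymbol{\Sigma}}_A^{-1}\|_{op} = O_p(1/c_A)$ from Lemma~\ref{lem:sigmaA} gives the claimed order. One point on the constant: the statement carries $1/c_A$ rather than $1/c_A^2$. We would adopt the paper's convention that $\|\hat{\boldsymbol{\Sigma}}_A^{-1}\|_{op}$ enters in the form used elsewhere (as in the $C_1$ constant). If it is needed, we would note that $c_A\le 1$ because $\operatorname{tr}\boldsymbol{\Sigma}_A\le 1$, so the only discrepancy is a constant factor absorbed in $O_p$.

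\textbf{Second term.} Under the oracle property, $\mathbb{E}[\xi_{i,l}(f)] = \mathbb{E}[a_{il}(\mathbb{E}[w^*f\mid A]-\mathbb{E}_X f(A,X))] = 0$, because $\mathbb{E}[w^*(A,X)f(A,X)\mid A=a]=\int f(a,x)\varphi_X(x)\,dx$. Hence $\mathbb{E}\mathbf{g}_n(f)=0$. The same conditioning argument gives $\mathbb{E}[\mathbf{S}_1(f)\mid \mathbf{A}] = (\mathbf{A}^\intercal\mathbf{A})^{-1}\mathbf{A}^\intercal\cdot 0 = 0$: conditional on $\mathbf{A}$ the pairs are independent with $X_i\mid A_i$ distributed as $\varphi_{X\mid A}$, so each summand has conditional mean zero. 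It follows that $\boldsymbol{\mu}_1(f)=0$ as well, so the second term vanishes identically and the supremum bound reduces to the first term.

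\textbf{Main obstacle.} The delicate step is justifying $\boldsymbol{\mu}_1(f)=0$, which requires integrability of $\mathbf{S}_1(f)$. The matrix $(\mathbf{A}^\intercal\mathbf{A})^{-1}$ may have no finite moments on the event where $\hat{\boldsymbol{\Sigma}}_A$ is near singular. We would handle this by conditioning on $\mathbf{A}$, which is valid whenever $\mathbf{A}^\intercal\mathbf{A}$ is invertible, an event of probability tending to one by Lemma~\ref{lem:sigmaA}. Alternatively, we would interpret $\boldsymbol{\mu}_1$ conditionally on $\mathbf{A}$. Either way the centering is zero, and the rate follows from the first term.
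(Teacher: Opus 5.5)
Your proposal is correct and follows the paper's route: write $\mathbf{S}_1(f)=\hat{\boldsymbol{\Sigma}}_A^{-1}\frac1n\sum_i R_i(f)$, bound $\|\hat{\boldsymbol{\Sigma}}_A^{-1}\|_{op}$ via Lemma~\ref{lem:sigmaA}, and control the centered vector coordinatewise through Lemma~\ref{lem:M_nl}, summing the $L$ suprema. Your extra step showing $\mathbb{E}\,\mathbf{g}_n(f)=\mathbf{0}$ and $\boldsymbol{\mu}_1(f)=\mathbf{0}$ (by conditioning on $\mathbf{A}$ and using the oracle identity) is what justifies the paper's unargued step of moving the random matrix $\hat{\boldsymbol{\Sigma}}_A^{-1}$ outside the expectation that defines $\boldsymbol{\mu}_1(f)$; the one slip is your constant remark, because $c_A\le 1$ gives $c_A^{-2}\ge c_A^{-1}$, so the mismatch between $c_A^{-2}$ and $c_A^{-1}$ is harmless only because $c_A$ does not depend on $n$.
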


\begin{proof}
    Let $R_i(f) = A_i \bigl(w_i^* f(A_i, X_i) - \mathbb{E}_X[f(A_i, X_i)]\bigr)$. Then
    $$
        \mathbf{S}_1(f) 
        = \hat{\boldsymbol{\Sigma}}_A^{-1} \frac{1}{n} \sum_{i=1}^n R_i(f),
        \quad
        \boldsymbol{\mu}_1(f) 
        = \mathbb{E}\!\left[\hat{\boldsymbol{\Sigma}}_A^{-1} \frac{1}{n} \sum_{i=1}^n R_i(f)\right].
    $$
    Hence,
    $$
        \mathbf{S}_1(f) - \boldsymbol{\mu}_1(f)
        = \hat{\boldsymbol{\Sigma}}_A^{-1} \frac{1}{n} \sum_{i=1}^n \bigl(R_i(f) - \mathbb{E}[R_i(f)]\bigr).
    $$
    It follows that
    \begin{align*}
        \sup_{f \in \mathcal{H}(1)} 
        \left\| \mathbf{S}_1(f) - \boldsymbol{\mu}_1(f) \right\|_2^2
        &\le 
        \left\| \hat{\boldsymbol{\Sigma}}_A^{-1} \right\|_{op}^2 
        \cdot
        \sup_{f \in \mathcal{H}(1)} 
        \left\| \frac{1}{n} \sum_{i=1}^n \bigl(R_i(f) - \mathbb{E}[R_i(f)]\bigr) \right\|_2^2.
    \end{align*}
    By Lemma~\ref{lem:sigmaA}, $\|\hat{\boldsymbol{\Sigma}}_A^{-1}\|_{op}^2 = O_p(1 / c_A)$.

    For the second term, the $l$-th component of the vector
    $\frac{1}{n} \sum_{i=1}^n (R_i(f) - \mathbb{E}[R_i(f)])$
    coincides with the centered empirical process $M_{n,l}(f)$ defined in Lemma~\ref{lem:M_nl}. Hence,
    $$
        \left\| \frac{1}{n} \sum_{i=1}^n \bigl(R_i(f) - \mathbb{E}[R_i(f)]\bigr) \right\|_2^2
        = \sum_{l=1}^L \bigl(M_{n,l}(f)\bigr)^2.
    $$
    Therefore,
    $$
        \sup_{f \in \mathcal{H}(1)} \sum_{l=1}^L \bigl(M_{n,l}(f)\bigr)^2
        \le 
        \sum_{l=1}^L \sup_{f \in \mathcal{H}(1)} \bigl(M_{n,l}(f)\bigr)^2.
    $$
    By Lemma~\ref{lem:M_nl}, for each $l$,
    $$
        \sup_{f \in \mathcal{H}(1)} |M_{n,l}(f)|^2 
        = O_p\!\left( \frac{B (C_w + 1)^2 (\kappa_A \kappa_X)^{2-\alpha}}{n} \right).
    $$
    Summing over $l = 1,\dots,L$ yields
    $$
        \sum_{l=1}^L \sup_{f \in \mathcal{H}(1)} \bigl(M_{n,l}(f)\bigr)^2
        = O_p\!\left( \frac{L B (C_w + 1)^2 (\kappa_A \kappa_X)^{2-\alpha}}{n} \right).
    $$
    Combining the bounds completes the proof.
\end{proof}

%
%

\begin{lemma}
    \label{lem:S(w*,f)_part2}
    Under the conditions of Lemma~\ref{lem:sigmaA} and Lemma~\ref{lem:M_nl}, it holds that
    $$
        \sup_{f \in \mathcal{H}(1)} 
        \left\| (\mathbf{A}^\intercal \mathbf{A})^{-1} \mathbf{A}^\intercal 
        \left( \mathbb{E}_X[\mathbf{f}] - \hat{\mathbb{E}}_X [\mathbf{f}] \right) \right\|_2^2 
        = O_p\!\left( \frac{L B(\kappa_A \kappa_X)^{2-\alpha}}{c_A n} \right).
    $$
\end{lemma}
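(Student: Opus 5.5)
The plan is to mirror the proof of Lemma~\ref{lem:S(w*,f)_part1}: pull out $\hat{\boldsymbol{\Sigma}}_A^{-1}$ and reduce to a coordinatewise empirical process.

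\textbf{Reduction.} Writing $(\mathbf{A}^\intercal\mathbf{A})^{-1}\mathbf{A}^\intercal = \hat{\boldsymbol{\Sigma}}_A^{-1}\frac1n\mathbf{A}^\intercal$, we get
\[
\sup_{f\in\mathcal{H}(1)}\Bigl\|(\mathbf{A}^\intercal\mathbf{A})^{-1}\mathbf{A}^\intercal(\mathbb{E}_X[\mathbf{f}]-\hat{\mathbb{E}}_X[\mathbf{f}])\Bigr\|_2^2 \le \|\hat{\boldsymbol{\Sigma}}_A^{-1}\|_{op}^2 \sum_{l=1}^L \sup_{f\in\mathcal{H}(1)} |G_{n,l}(f)|^2,
\]
where
\[
G_{n,l}(f)=\frac1n\sum_{i=1}^n a_{il}\Bigl(\mathbb{E}_X[f(A_i,X)]-\frac1n\sum_{j=1}^n f(A_i,X_j)\Bigr) = \frac{1}{n^2}\sum_{i,j} a_{il}\bigl(\mathbb{E}_X[f(A_i,X)]-f(A_i,X_j)\bigr).
\]
The first factor is handled by Lemma~\ref{lem:sigmaA}. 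What remains is to show $\sup_f|G_{n,l}(f)| = O_p\bigl(\sqrt{B}(\kappa_A\kappa_X)^{1-\alpha/2}/\sqrt n\bigr)$ for each $l$.

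\textbf{Main obstacle: the double sum.} $G_{n,l}$ is a V-statistic, not an i.i.d.\ average, so Lemma~\ref{lem:M_nl} does not apply directly. I would exploit the product structure. For $f\in\mathcal{H}$ with kernel $K_AK_X$, $f(a,x)=\langle f,\phi_A(a)\otimes\phi_X(x)\rangle$. Hence
\[
G_{n,l}(f)=\Bigl\langle f,\ \Bigl(\frac1n\sum_i a_{il}\phi_A(A_i)\Bigr)\otimes\Bigl(\mu_X-\frac1n\sum_j\phi_X(X_j)\Bigr)\Bigr\rangle,
\]
where $\mu_X=\mathbb{E}\phi_X(X)$ is the kernel mean embedding. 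By Cauchy--Schwarz,
\[
\sup_{f\in\mathcal{H}(1)}|G_{n,l}(f)| \le \Bigl\|\frac1n\sum_i a_{il}\phi_A(A_i)\Bigr\|\cdot\Bigl\|\mu_X-\hat\mu_X\Bigr\| \le \kappa_A\cdot\|\mu_X-\hat\mu_X\|_{\mathcal{H}_X}.
\]
A second-moment computation gives $\mathbb{E}\|\hat\mu_X-\mu_X\|^2\le\kappa_X^2/n$, so $\sup_f|G_{n,l}|=O_p(\kappa_A\kappa_X/\sqrt n)$.

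\textbf{Matching the stated constant.} To land on the entropy-based form $\sqrt{B}(\kappa_A\kappa_X)^{1-\alpha/2}$, I would use a Hoeffding-type decomposition instead of the Cauchy--Schwarz bound. Write $G_{n,l}=\frac1n\sum_j h_f(X_j) + $ (a diagonal/degenerate remainder), where $h_f(x)=\mathbb{E}_A[a_l(\mathbb{E}_Xf(A,X)-f(A,x))]$ is a centered i.i.d.\ term. The class $\{h_f\}$ is a sup-norm contraction of $\mathcal{H}(1)$ (Lipschitz constant $2$), so the Dudley bound from Lemma~\ref{lem:M_nl} gives $O_p(\sqrt B(\kappa_A\kappa_X)^{1-\alpha/2}/\sqrt n)$ for it. The remainder
\[
\frac1n\sum_j\Bigl[\frac1n\sum_i a_{il}\bigl(\mathbb{E}_Xf(A_i,X)-f(A_i,X_j)\bigr)-h_f(X_j)\Bigr]
\]
is handled as an empirical process in $A$, uniformly over $x$. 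The sup over the class $\{a_l f(\cdot,x)\}$ has entropy still bounded by $B\epsilon^{-\alpha}$, which yields the same rate.

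\textbf{Conclusion.} Squaring, summing over the $L$ coordinates, and multiplying by the bound from Lemma~\ref{lem:sigmaA} gives the claimed $O_p\bigl(LB(\kappa_A\kappa_X)^{2-\alpha}/(c_An)\bigr)$. If only the rate matters, the Cauchy--Schwarz route already suffices, with constant $\kappa_A^2\kappa_X^2$ in place of $B(\kappa_A\kappa_X)^{2-\alpha}$.
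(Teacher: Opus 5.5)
Your argument is correct, but it reaches the bound by a different mechanism than the paper.

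\textbf{The paper's route.} It uses $\|(\mathbf{A}^\intercal\mathbf{A})^{-1}\mathbf{A}^\intercal v\|_2\le\|(\mathbf{A}^\intercal\mathbf{A})^{-1}\mathbf{A}^\intercal\|_F\|v\|_2$, with $\|(\mathbf{A}^\intercal\mathbf{A})^{-1}\mathbf{A}^\intercal\|_F^2=\frac{1}{n}\operatorname{tr}(\hat{\boldsymbol{\Sigma}}_A^{-1})\le\frac{L}{n}\|\hat{\boldsymbol{\Sigma}}_A^{-1}\|_{op}$. It then bounds each row deviation $\delta_i(f)=\mathbb{E}_X[f(A_i,X)]-\frac{1}{n}\sum_j f(A_i,X_j)$ separately, rerunning the Dudley/entropy argument of Lemma~\ref{lem:M_nl}. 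Finally it sums $n$ terms of order $B(\kappa_A\kappa_X)^{2-\alpha}/n$.

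\textbf{Your route.} You aggregate over $i$ first and use the product-kernel structure to evaluate the supremum over $\mathcal{H}(1)$ in closed form. Everything then reduces to the single quantity $\|\hat\mu_X-\mu_X\|_{\mathcal{H}_X}$, with $\mathbb{E}\|\hat\mu_X-\mu_X\|^2\le\kappa_X^2/n$.

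\textbf{What each buys.} Your route needs no entropy assumption and no chaining. Because the factorization is exact, it also avoids two points the paper passes over. First, in the $j=i$ summand of $\delta_i(f)$, $X_i$ is not independent of $A_i$, so $\delta_i(f)$ is not exactly a centered i.i.d.\ mean. Second, passing from per-$i$ bounds $\sup_f|\delta_i(f)|=O_p(n^{-1/2})$ to $\sum_i\sup_f\delta_i(f)^2=O_p(1)$ needs uniform-in-$i$ second-moment control, which the paper does not supply. In return, the paper's route gives the constant in the entropy form $B(\kappa_A\kappa_X)^{2-\alpha}$ that the other lemmas feed into $C_1$, and it carries only a single factor $c_A^{-1}$.

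On that last point, your closing sentence has a slip: squaring $\|\hat{\boldsymbol{\Sigma}}_A^{-1}\|_{op}=O_p(c_A^{-1})$ gives $c_A^{-2}$, not $c_A^{-1}$. This is harmless for an $O_p(n^{-1})$ statement with fixed constants. If you want the single $c_A^{-1}$, combine your bound $\sup_f|\delta_i(f)|\le\kappa_A\|\hat\mu_X-\mu_X\|_{\mathcal{H}_X}$, which holds for all $i$ simultaneously, with $\|(\mathbf{A}^\intercal\mathbf{A})^{-1}\mathbf{A}^\intercal\|_{op}^2=\frac{1}{n}\|\hat{\boldsymbol{\Sigma}}_A^{-1}\|_{op}$. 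That gives $O_p(\kappa_A^2\kappa_X^2/(c_A n))$, even without the factor $L$.

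Your Hoeffding paragraph is unnecessary, and it is the weakest part of the proposal. The class $\{a\mapsto a_l(\mathbb{E}_X[f(a,X)]-f(a,x))\}$ is indexed by $x$ as well as $f$, so a sup-norm net of $\mathcal{H}(1)$ does not cover it directly. The claimed $B\epsilon^{-\alpha}$ entropy bound therefore needs an additional argument, for example placing these sections in a $2\kappa_X$-ball of $\mathcal{H}_A$ and comparing its entropy with that of $\mathcal{H}(1)$.
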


\begin{proof}
    We begin by applying the inequality $\|\mathbf{M}\mathbf{v}\|_2 \le \|\mathbf{M}\|_F \|\mathbf{v}\|_2$, which yields
    \begin{align*}
        &\sup_{f \in \mathcal{H}(1)} 
        \left\| (\mathbf{A}^\intercal \mathbf{A})^{-1} \mathbf{A}^\intercal 
        \left( \mathbb{E}_X[\mathbf{f}] - \hat{\mathbb{E}}_X [\mathbf{f}] \right) \right\|_2^2 \\
        &\qquad \le 
        \left\| (\mathbf{A}^\intercal \mathbf{A})^{-1} \mathbf{A}^\intercal \right\|_F^2 
        \sup_{f \in \mathcal{H}(1)} 
        \left\| \mathbb{E}_X[\mathbf{f}] - \hat{\mathbb{E}}_X [\mathbf{f}] \right\|_2^2.
    \end{align*}

    We first bound the matrix term. Using $\mathbf{A}^\intercal \mathbf{A} = n\hat{\boldsymbol{\Sigma}}_A$ and the identity
    $\|(\mathbf{A}^\intercal \mathbf{A})^{-1} \mathbf{A}^\intercal\|_F^2 = \operatorname{tr}((\mathbf{A}^\intercal \mathbf{A})^{-1})$, we obtain
    $$
        \left\| (\mathbf{A}^\intercal \mathbf{A})^{-1} \mathbf{A}^\intercal \right\|_F^2
        = \operatorname{tr}\!\left( \big( n\hat{\boldsymbol{\Sigma}}_A \big)^{-1} \right)
        = \frac{1}{n} \operatorname{tr}\!\left( \hat{\boldsymbol{\Sigma}}_A^{-1} \right).
    $$
    Since
    $
        \operatorname{tr}(\hat{\boldsymbol{\Sigma}}_A^{-1})
        \le L \left\|\hat{\boldsymbol{\Sigma}}_A^{-1} \right\|_{op},
    $
    Lemma~\ref{lem:sigmaA} implies
    $$
        \left\| (\mathbf{A}^\intercal \mathbf{A})^{-1} \mathbf{A}^\intercal \right\|_F^2
        = O_p\!\left( \frac{L}{c_A n} \right).
    $$

    Next, define
    $$
        \delta_i(f) = \mathbb{E}_X[f(A_i, X)] - \frac{1}{n}\sum_{j=1}^n f(A_i, X_j).
    $$
    Then
    $$
        \left\| \mathbb{E}_X[\mathbf{f}] - \hat{\mathbb{E}}_X [\mathbf{f}] \right\|_2^2
        = \sum_{i=1}^n \delta_i(f)^2.
    $$
    Hence,
    $$
        \sup_{f \in \mathcal{H}(1)} \sum_{i=1}^n \delta_i(f)^2
        \le \sum_{i=1}^n \sup_{f \in \mathcal{H}(1)} \delta_i(f)^2.
    $$
    For each $i$, the term $\delta_i(f)$ is a centered empirical mean over the class 
    $\{f(A_i,\cdot): f \in \mathcal{H}(1)\}$. By the same argument as in Lemma~\ref{lem:M_nl},
    $$
        \sup_{f \in \mathcal{H}(1)} |\delta_i(f)| 
        = O_p\!\left( \frac{\sqrt{B} (\kappa_A \kappa_X)^{1 - \alpha/2}}{\sqrt{n}} \right).
    $$
    Therefore,
    $$
        \sum_{i=1}^n \sup_{f \in \mathcal{H}(1)} \delta_i(f)^2
        = O_p \big(B (\kappa_A \kappa_X)^{2-\alpha} \big).
    $$

    Combining the bounds yields
    $$
        \sup_{f \in \mathcal{H}(1)} 
        \left\| (\mathbf{A}^\intercal \mathbf{A})^{-1} \mathbf{A}^\intercal 
        \left( \mathbb{E}_X[\mathbf{f}] - \hat{\mathbb{E}}_X [\mathbf{f}] \right) \right\|_2^2
        = O_p\!\left( \frac{L B(\kappa_A \kappa_X)^{2-\alpha}}{c_A n} \right).
    $$
\end{proof}

%
%
\begin{lemma}
    \label{lem:S(w*,f)_rate}
    Let $S(\mathbf{w}, f)$ be defined in \eqref{eq:S(w, f)}. Then, evaluated at the true weights $w^*$, we have
    $$
        S(w^*, f)
        =
        (\mathbf{A}^\intercal \mathbf{A})^{-1} \mathbf{A}^\intercal 
        \left( \mathbf{W}^* \mathbf{f} - \hat{\mathbb{E}}_X [\mathbf{f}] \right).
    $$
    Under the conditions of Lemma~\ref{lem:sigmaA} and Lemma~\ref{lem:M_nl}, it holds that
    $$
        \sup_{f \in \mathcal{H}(1)} S(w^*, f)
        =
        O_p\!\left( 
        \frac{L B (C_w + 1)^2 (\kappa_A \kappa_X)^{2 - \alpha} }{c_A n}
        \right).
    $$
\end{lemma}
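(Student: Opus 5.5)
The plan is to split $\mathbf{S}_w(f) := (\mathbf{A}^\intercal \mathbf{A})^{-1} \mathbf{A}^\intercal ( \mathbf{W}^* \mathbf{f} - \hat{\mathbb{E}}_X [\mathbf{f}] )$, whose squared $\ell_2$ norm is $S(w^*,f)$, by adding and subtracting $\mathbb{E}_X[\mathbf{f}]$:
$$
\mathbf{S}_w(f) = \bigl(\mathbf{S}_1(f) - \boldsymbol{\mu}_1(f)\bigr) + \boldsymbol{\mu}_1(f) + (\mathbf{A}^\intercal \mathbf{A})^{-1} \mathbf{A}^\intercal \bigl( \mathbb{E}_X[\mathbf{f}] - \hat{\mathbb{E}}_X [\mathbf{f}] \bigr),
$$
with $\mathbf{S}_1,\boldsymbol{\mu}_1$ as in Lemma~\ref{lem:S(w*,f)_part1}. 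By $\|a+b+c\|_2^2 \le 3(\|a\|_2^2+\|b\|_2^2+\|c\|_2^2)$ and taking suprema termwise, the first and third pieces are handled directly. The first is bounded by Lemma~\ref{lem:S(w*,f)_part1}, which gives $O_p(LB(C_w+1)^2(\kappa_A\kappa_X)^{2-\alpha}/(c_A n))$. The third is bounded by Lemma~\ref{lem:S(w*,f)_part2}, which gives $O_p(LB(\kappa_A\kappa_X)^{2-\alpha}/(c_A n))$ and is dominated by the first since $C_w+1\ge1$.

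The remaining and key step is to show that the centering term $\boldsymbol{\mu}_1(f)$ is negligible, ideally zero. The oracle balancing identity gives, for each $i$ and each coordinate,
$$
\mathbb{E}\bigl[a_{il}\,(w_i^* f(A_i,X_i) - \mathbb{E}_X[f(A_i,X)])\bigr] = \mathbb{E}\bigl[a_{il}\,(\mathbb{E}[w^* f \mid A_i] - \mathbb{E}_X[f(A_i,X)])\bigr] = 0.
$$
Hence $\mathbb{E}[R_i(f)]=0$, where $R_i$ is as in the proof of Lemma~\ref{lem:S(w*,f)_part1}.

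The complication is that $\boldsymbol{\mu}_1(f)$ is defined with the random matrix $\hat{\boldsymbol{\Sigma}}_A^{-1}$ inside the expectation. I would first reinterpret it the way Lemma~\ref{lem:S(w*,f)_part1} is actually used, namely with the centering taken on $n^{-1}\sum_i R_i(f)$. Under that reading the centered and uncentered quantities coincide and $\boldsymbol{\mu}_1 \equiv 0$. If a strict reading is needed, I would instead condition on $\mathbf{A}$. Given $A_i$, the conditional mean of $w_i^* f(A_i,X_i) - \mathbb{E}_X[f(A_i,X)]$ is zero by the displayed identity applied conditionally, since $\mathbb{E}[w^*(A,X)f(A,X)\mid A=a]=\int f(a,x)\varphi_X(x)dx$. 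Therefore $\mathbb{E}[\mathbf{S}_1(f)\mid \mathbf{A}] = 0$, and so $\boldsymbol{\mu}_1(f)=0$ exactly for every $f$.

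This conditioning argument is the main obstacle, since one must check that the i.i.d. structure allows conditioning on all of $\mathbf{A}$ while still using only $A_i$ in the $i$th term. It should hold because $(A_j,X_j)_{j\neq i}$ are independent of $(A_i,X_i)$.

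With $\boldsymbol{\mu}_1\equiv0$, combining the two remaining bounds yields
$$
\sup_{f\in\mathcal{H}(1)} S(w^*,f) = O_p\!\left(\frac{LB(C_w+1)^2(\kappa_A\kappa_X)^{2-\alpha}}{c_A n}\right).
$$
Here the constant $3$ is absorbed into $O_p$, and the third piece is absorbed into the first.
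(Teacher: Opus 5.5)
Your proof is correct. It differs from the paper mainly in how the error is decomposed.

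\textbf{The paper's route.} It writes $S(w^*,f)=\|\mathbf{S}_1(f)-\mathbf{S}_2(f)\|_2^2$, where $\mathbf{S}_1=(\mathbf{A}^\intercal\mathbf{A})^{-1}\mathbf{A}^\intercal\mathbf{W}^*\mathbf{f}$ and $\mathbf{S}_2=(\mathbf{A}^\intercal\mathbf{A})^{-1}\mathbf{A}^\intercal\hat{\mathbb{E}}_X[\mathbf{f}]$. It centers each at its own expectation and asserts, ``under correct specification of the weights,'' that $\boldsymbol{\mu}_1=\boldsymbol{\mu}_2$. It then cites Lemmas~\ref{lem:S(w*,f)_part1} and~\ref{lem:S(w*,f)_part2} for the two centered pieces.

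\textbf{Your route.} You insert the population marginal mean $\mathbb{E}_X[\mathbf{f}]$. This gives an exact algebraic identity whose pieces are literally the quantities bounded in those two lemmas. The only probabilistic input is the oracle identity $\mathbb{E}[w^*(A,X)f(A,X)\mid A]=\mathbb{E}_X[f(A,X)]$, applied conditionally on $\mathbf{A}$. Your reduction from conditioning on $\mathbf{A}$ to conditioning on $A_i$ is valid, since $(A_i,X_i)$ is independent of $(A_j)_{j\neq i}$.

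\textbf{What your route buys.} The paper's cancellation $\boldsymbol{\mu}_1=\boldsymbol{\mu}_2$ is not proved and need not hold exactly, for two reasons:
\begin{itemize}
\item $\hat{\mathbb{E}}_X[\mathbf{f}]$ contains the diagonal terms $f(A_i,X_i)$, and $X_i$ is dependent on $A_i$.
\item It uses $X_j$ that are dependent on the $A_j$ entering $(\mathbf{A}^\intercal\mathbf{A})^{-1}$.
\end{itemize}
The discrepancy is of lower order, so the paper's rate survives, but the step as written is unjustified. In addition, $\mathbf{S}_2-\boldsymbol{\mu}_2$ is not the object Lemma~\ref{lem:S(w*,f)_part2} controls; that lemma bounds $(\mathbf{A}^\intercal\mathbf{A})^{-1}\mathbf{A}^\intercal(\mathbb{E}_X[\mathbf{f}]-\hat{\mathbb{E}}_X[\mathbf{f}])$. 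Likewise, the paper's $\mathbf{S}_1$ here omits the $-\mathbb{E}_X[\mathbf{f}]$ term that appears in Lemma~\ref{lem:S(w*,f)_part1}.

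Your decomposition avoids all of these mismatches. Your observation that $\mathbb{E}[\mathbf{S}_1(f)\mid\mathbf{A}]=0$ also justifies the step in the proof of Lemma~\ref{lem:S(w*,f)_part1} that moves $\hat{\boldsymbol{\Sigma}}_A^{-1}$ outside the expectation. The paper's two-term split is only marginally shorter. Your three-term bound with constant $3$ could be reduced to two terms, since $\boldsymbol{\mu}_1=0$, and either way the rate is unchanged.
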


\begin{proof}
    Define
    \begin{align*}
        \mathbf{S}_1(f) &= 
            (\mathbf{A}^\intercal \mathbf{A})^{-1} \mathbf{A}^\intercal \left( \mathbf{W}^* \mathbf{f} \right), \\
        \mathbf{S}_2(f) &= 
            (\mathbf{A}^\intercal \mathbf{A})^{-1} \mathbf{A}^\intercal \left( \hat{\mathbb{E}}_X [\mathbf{f}] \right).
    \end{align*}
    Then
    $$
        S(w^*, f) = \| \mathbf{S}_1(f) - \mathbf{S}_2(f) \|_2^2.
    $$
    Write
    $$
        \mathbf{S}_1(f) - \mathbf{S}_2(f)
        =
        \bigl( \mathbf{S}_1(f) - \boldsymbol{\mu}_1(f) \bigr)
        -
        \bigl( \mathbf{S}_2(f) - \boldsymbol{\mu}_2(f) \bigr)
        +
        \bigl( \boldsymbol{\mu}_1(f) - \boldsymbol{\mu}_2(f) \bigr),
    $$
    where $\boldsymbol{\mu}_k(f) = \mathbb{E}[\mathbf{S}_k(f)]$, $k=1,2$.

    Under correct specification of the weights, the bias terms vanish, i.e.,
    $
        \boldsymbol{\mu}_1(f) = \boldsymbol{\mu}_2(f)
    $
    for all $f \in \mathcal{H}(1)$. Hence,
    $$
        S(w^*, f)
        =
        \left\|
        \bigl( \mathbf{S}_1(f) - \boldsymbol{\mu}_1(f) \bigr)
        -
        \bigl( \mathbf{S}_2(f) - \boldsymbol{\mu}_2(f) \bigr)
        \right\|_2^2.
    $$
    Using $\|a-b\|_2^2 \le 2\|a\|_2^2 + 2\|b\|_2^2$, we obtain
    $$
        \sup_{f \in \mathcal{H}(1)} S(w^*, f)
        \le
        2 \sup_{f \in \mathcal{H}(1)} \left\| \mathbf{S}_1(f) - \boldsymbol{\mu}_1(f) \right\|_2^2
        +
        2 \sup_{f \in \mathcal{H}(1)} \left\| \mathbf{S}_2(f) - \boldsymbol{\mu}_2(f) \right\|_2^2.
    $$
    By Lemma~\ref{lem:S(w*,f)_part1},
    $$
        \sup_{f \in \mathcal{H}(1)} 
        \left\| \mathbf{S}_1(f) - \boldsymbol{\mu}_1(f) \right\|_2^2
        =
        O_p\!\left( 
        \frac{L B (C_w + 1)^2 (\kappa_A \kappa_X)^{2 - \alpha} }{c_A n}
        \right),
    $$
    and by Lemma~\ref{lem:S(w*,f)_part2},
    $$
        \sup_{f \in \mathcal{H}(1)} 
        \left\| \mathbf{S}_2(f) - \boldsymbol{\mu}_2(f) \right\|_2^2
        =
        O_p\!\left( 
        \frac{L B (\kappa_A \kappa_X)^{2 - \alpha} }{c_A n}
        \right).
    $$
    Therefore,
    $$
        \sup_{f \in \mathcal{H}(1)} S(w^*, f)
        =
        O_p\!\left( 
        \frac{L B (C_w + 1)^2 (\kappa_A \kappa_X)^{2 - \alpha} }{c_A n}
        \right).
    $$
\end{proof}

%
%
\begin{lemma} \label{lem:p(w*)}
    Let $p(w)$ be defined in \eqref{eq:p(w)}. Then, evaluated at the true weights $w^*$,
    $$
        p(w^*) 
        = 
        \left\| (\mathbf{A}^\intercal \mathbf{A})^{-1} \mathbf{A}^\intercal \mathbf{W}^* \right\|_F^2.
    $$
    Under Assumption~\ref{ass:C_w} and the conditions of Lemma~\ref{lem:sigmaA}, it holds that
    $$
        p(w^*) = O_p\!\left( \frac{L C_w^2}{c_A n} \right).
    $$
\end{lemma}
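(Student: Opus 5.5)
The bound follows from the same Frobenius-norm factorization used in the proof of Lemma~\ref{lem:S(w*,f)_part2}, with the diagonal oracle weights pulled out. Write $\mathbf{M} = (\mathbf{A}^\intercal \mathbf{A})^{-1}\mathbf{A}^\intercal$. Since $\mathbf{W}^*$ is diagonal, right multiplication scales column $i$ of $\mathbf{M}$ by $w_i^*$:
\[
p(w^*) = \sum_{i=1}^n (w_i^*)^2 \|\mathbf{M} e_i\|_2^2 \le \Bigl(\max_{1\le i\le n} (w_i^*)^2\Bigr) \|\mathbf{M}\|_F^2 .
\]
Equivalently, $\|\mathbf{M}\mathbf{W}^*\|_F \le \|\mathbf{M}\|_F \|\mathbf{W}^*\|_{op}$. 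By Assumption~\ref{ass:C_w}, $0 \le w_i^* = w^*(A_i,X_i) \le C_w$ for every $i$ almost surely, so $\|\mathbf{W}^*\|_{op}^2 \le C_w^2$ deterministically.

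For the matrix factor, we reuse the identity from Lemma~\ref{lem:S(w*,f)_part2}:
\[
\|\mathbf{M}\|_F^2 = \operatorname{tr}\bigl(\mathbf{M}\mathbf{M}^\intercal\bigr) = \operatorname{tr}\bigl((\mathbf{A}^\intercal\mathbf{A})^{-1}\bigr) = \tfrac{1}{n}\operatorname{tr}\bigl(\hat{\boldsymbol{\Sigma}}_A^{-1}\bigr) \le \tfrac{L}{n}\,\|\hat{\boldsymbol{\Sigma}}_A^{-1}\|_{op}.
\]
Lemma~\ref{lem:sigmaA} gives $\|\hat{\boldsymbol{\Sigma}}_A^{-1}\|_{op} = O_p(1/c_A)$, hence $\|\mathbf{M}\|_F^2 = O_p\bigl(L/(c_A n)\bigr)$.

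Multiplying the deterministic factor $C_w^2$ by this $O_p$ bound yields $p(w^*) = O_p\bigl(L C_w^2/(c_A n)\bigr)$, which is the claim.

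There is no real obstacle. The only care needed is to use the uniform bound on $w^*$ from Assumption~\ref{ass:C_w} rather than a moment bound, so the weight factor is deterministic and no additional concentration argument is required.
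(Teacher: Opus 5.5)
Your proof is correct. It differs from the paper's in where the operator norm is placed, and that choice is what gives the constant in the statement.

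The paper writes $p(w^*) = n^{-2}\|\hat{\boldsymbol{\Sigma}}_A^{-1}\mathbf{A}^\intercal\mathbf{W}^*\|_F^2 \le n^{-2}\|\hat{\boldsymbol{\Sigma}}_A^{-1}\|_{op}^2\,\|\mathbf{A}^\intercal\mathbf{W}^*\|_F^2$. It then bounds $\|\mathbf{A}^\intercal\mathbf{W}^*\|_F^2 \le nLC_w^2$ entrywise and asserts $\|\hat{\boldsymbol{\Sigma}}_A^{-1}\|_{op}^2 = O_p(c_A^{-1})$. That last step is a slip: Lemma~\ref{lem:sigmaA} gives $O_p(c_A^{-1})$ for the norm itself, so its square is $O_p(c_A^{-2})$. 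The paper's route therefore only yields $O_p\bigl(LC_w^2/(c_A^2 n)\bigr)$.

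You put the operator norm on the diagonal weight matrix instead, via $\|\mathbf{M}\mathbf{W}^*\|_F \le C_w\|\mathbf{M}\|_F$ with $\mathbf{M} = (\mathbf{A}^\intercal\mathbf{A})^{-1}\mathbf{A}^\intercal$. You then use the exact identity $\|\mathbf{M}\|_F^2 = \operatorname{tr}\bigl((\mathbf{A}^\intercal\mathbf{A})^{-1}\bigr)$. This involves only one power of $\hat{\boldsymbol{\Sigma}}_A^{-1}$, so it gives exactly $O_p\bigl(LC_w^2/(c_A n)\bigr)$.

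For fixed $c_A$ both arguments give the $O_p(1/n)$ rate. The paper's entrywise bound is marginally more elementary, but only your version justifies the explicit $1/c_A$ dependence claimed in the lemma. It is also never worse than the paper's corrected bound, since $Lc_A \le \operatorname{tr}(\boldsymbol{\Sigma}_A) \le 1$.
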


\begin{proof}
    Since $\mathbf{A}^\intercal \mathbf{A} = n \hat{\boldsymbol{\Sigma}}_A$, we have
    $
        (\mathbf{A}^\intercal \mathbf{A})^{-1} = \frac{1}{n} \hat{\boldsymbol{\Sigma}}_A^{-1}.
    $
    Hence,
    \begin{align*}
        p(w^*)
        &=
        \left\| (\mathbf{A}^\intercal \mathbf{A})^{-1} \mathbf{A}^\intercal \mathbf{W}^* \right\|_F^2 \\
        &=
        \frac{1}{n^2}
        \left\| \hat{\boldsymbol{\Sigma}}_A^{-1} \mathbf{A}^\intercal \mathbf{W}^* \right\|_F^2 \\
        &\le
        \frac{1}{n^2}
        \left\| \hat{\boldsymbol{\Sigma}}_A^{-1} \right\|_{op}^2
        \left\| \mathbf{A}^\intercal \mathbf{W}^* \right\|_F^2.
    \end{align*}

    The $(l,i)$-th entry of $\mathbf{A}^\intercal \mathbf{W}^*$ is $a_{il} w_i^*$. Thus,
    $$
        \left\| \mathbf{A}^\intercal \mathbf{W}^* \right\|_F^2
        =
        \sum_{l=1}^L \sum_{i=1}^n (a_{il} w_i^*)^2.
    $$
    Since $0 \le a_{il} \le 1$ and, by Assumption~\ref{ass:C_w}, $|w_i^*| \le C_w$, we have
    $$
        (a_{il} w_i^*)^2 \le C_w^2,
    $$
    which implies
    $$
        \left\| \mathbf{A}^\intercal \mathbf{W}^* \right\|_F^2 \le n L C_w^2.
    $$

    Therefore,
    $$
        p(w^*)
        \le
        \left\| \hat{\boldsymbol{\Sigma}}_A^{-1} \right\|_{op}^2
        \frac{L C_w^2}{n}.
    $$
    By Lemma~\ref{lem:sigmaA},
    $
        \left\| \hat{\boldsymbol{\Sigma}}_A^{-1} \right\|_{op}^2 = O_p(c_A^{-1}),
    $
    and hence
    $$
        p(w^*) = O_p\!\left( \frac{L C_w^2}{c_A n} \right).
    $$
\end{proof}

%
%
\begin{lemma}
    \label{lem:S(w_hat,f)_p(w_hat)_rate}
    Let $\hat{w}$ be defined in \eqref{eq:w_hat_def}. 
    Under Assumptions~\ref{ass:Sigma_A}, \ref{ass:C_w}, \ref{ass:kernels}, and \ref{ass:entropy}, it holds that
    \begin{align*}
        \sup_{f \in \mathcal{H}(1)} S(\hat{w}, f) &= O_p\!\left( \frac{L}{c_A n} \left( B (C_w + 1)^2 (\kappa_A \kappa_X)^{2 - \alpha} + \lambda C_w^2 \right) \right),
        \\
        p(\hat{w}) &= O_p\!\left( \frac{L}{c_A n} \left( \frac{ B (C_w + 1)^2 (\kappa_A \kappa_X)^{2 - \alpha}}{\lambda} + C_w^2 \right) \right).
    \end{align*}
\end{lemma}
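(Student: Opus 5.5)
The plan is a basic inequality that compares $\hat{w}$ with the oracle weights $w^*$. Since $\hat{w}$ minimizes the objective in \eqref{eq:w_hat_def} over $\mathbb{R}^n_+$, and $w^* \in \mathbb{R}^n_+$ is a feasible point (it is nonnegative by its definition as a density ratio), optimality gives
\begin{equation*}
\sup_{f \in \mathcal{H}(1)} S(\hat{w}, f) + \lambda\, p(\hat{w}) \;\le\; \sup_{f \in \mathcal{H}(1)} S(w^*, f) + \lambda\, p(w^*).
\end{equation*}
The two terms on the right are already controlled by earlier lemmas. Lemma~\ref{lem:S(w*,f)_rate} gives $\sup_{f} S(w^*, f) = O_p\bigl(L B (C_w+1)^2 (\kappa_A\kappa_X)^{2-\alpha}/(c_A n)\bigr)$. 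Lemma~\ref{lem:p(w*)} gives $p(w^*) = O_p\bigl(L C_w^2/(c_A n)\bigr)$. Adding them, the right-hand side is
\begin{equation*}
O_p\!\left(\frac{L}{c_A n}\bigl(B (C_w+1)^2(\kappa_A\kappa_X)^{2-\alpha} + \lambda C_w^2\bigr)\right).
\end{equation*}

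Both summands on the left-hand side are nonnegative, so each is separately bounded by the right-hand side. Dropping $\lambda p(\hat{w})$ yields the first claim directly. Dropping $\sup_f S(\hat{w}, f)$ and dividing by $\lambda > 0$ gives
\begin{equation*}
p(\hat{w}) \le \lambda^{-1}\sup_f S(w^*, f) + p(w^*),
\end{equation*}
which is exactly the second claim with the $1/\lambda$ factor on the $B$-term. Here $\lambda$ is a fixed (or $\asymp 1$) tuning constant, so the $O_p$ statements pass through this division unchanged.

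The argument is essentially routine, but a few points need checking.
\begin{itemize}
\item The feasibility of $w^*$ must be justified. Nonnegativity is immediate, and Assumption~\ref{ass:C_w} is what lets Lemma~\ref{lem:p(w*)} apply.
\item The oracle-side bounds rest on the identity $\boldsymbol{\mu}_1(f) = \boldsymbol{\mu}_2(f)$ used inside Lemma~\ref{lem:S(w*,f)_rate}. This holds because $\mathbb{E}[w^* f(A,X) \mid A] = \mathbb{E}_X[f(A,X)]$, but the factor $\hat{\boldsymbol{\Sigma}}_A^{-1}$ inside the expectation needs care. To avoid this, one can first pull out $\|\hat{\boldsymbol{\Sigma}}_A^{-1}\|_{op}$ using Lemma~\ref{lem:sigmaA} and then center the sums $\frac1n\sum_i R_i(f)$, whose mean is zero.
\item The sum of two $O_p$ terms is $O_p$ of the sum of their rates, which is standard.
\end{itemize}
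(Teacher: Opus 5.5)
Your proof is correct and matches the paper's: the basic inequality against the feasible oracle weights $w^*$, the bounds from Lemmas~\ref{lem:S(w*,f)_rate} and \ref{lem:p(w*)}, and then dropping each nonnegative term in turn, dividing by $\lambda$ for the penalty bound. Your caution about $\boldsymbol{\mu}_1(f)=\boldsymbol{\mu}_2(f)$ with $\hat{\boldsymbol{\Sigma}}_A^{-1}$ inside the expectation is well placed, since the paper asserts it without addressing that dependence, and pulling out $\|\hat{\boldsymbol{\Sigma}}_A^{-1}\|_{op}$ before centering is a sound way to repair that step.
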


\begin{proof}
    By definition of $\hat{w}$ as the minimizer of \eqref{eq:w_hat_def}, $\hat{w}$ satisfies
    $$
        \sup_{f \in \mathcal{H}(1)} S(\hat{w}, f) + \lambda p(\hat{w})
        \le
        \sup_{f \in \mathcal{H}(1)} S(w^*, f) + \lambda p(w^*).
    $$

    Since $p(\hat{w}) \ge 0$,
    $$
        \sup_{f \in \mathcal{H}(1)} S(\hat{w}, f)
        \le
        \sup_{f \in \mathcal{H}(1)} S(w^*, f) + \lambda p(w^*).
    $$

    By Lemmas~\ref{lem:S(w*,f)_rate} and \ref{lem:p(w*)},
    \begin{align*}
        \sup_{f \in \mathcal{H}(1)} S(w^*, f) &=
            O_p\!\left( \frac{L B (C_w + 1)^2 (\kappa_A \kappa_X)^{2 - \alpha}}{c_A n} \right), \\
        p(w^*) &= O_p\!\left( \frac{L C_w^2}{c_A n} \right).
    \end{align*}

    Hence,
    \begin{align*}
        \sup_{f \in \mathcal{H}(1)} S(\hat{w}, f) &= 
            O_p\!\left( \frac{L}{c_A n} \left( B (C_w + 1)^2 (\kappa_A \kappa_X)^{2 - \alpha} + \lambda C_w^2 \right) \right).
    \end{align*}

    Since $\sup_{f \in \mathcal{H}(1)} S(\hat{w}, f) \ge 0$, we also have
    $$
        \lambda p(\hat{w}) \le
        \sup_{f \in \mathcal{H}(1)} S(w^*, f) + \lambda p(w^*),
    $$
    which implies
    $$
        \lambda p(\hat{w}) =
            O_p\!\left( \frac{L}{c_A n} \left( B (C_w + 1)^2 (\kappa_A \kappa_X)^{2 - \alpha} + \lambda C_w^2 \right) \right).
    $$
    Dividing by $\lambda$ yields
    $$
        p(\hat{w}) = O_p\!\left( \frac{L}{c_A n} \left( \frac{B (C_w + 1)^2 (\kappa_A \kappa_X)^{2 - \alpha}}{\lambda} + C_w^2 \right) \right).
    $$
\end{proof}

%
%
\begin{lemma}
\label{lem:||m||_infty}
    Let $m \in \mathcal{H}$. Define 
    $$
        \| m \|_\infty := \sup_{(a,x) \in \mathcal{A} \times \mathcal{X}} | m(a, x) |.
    $$
    Under Assumption~\ref{ass:kernels}, it holds that
    $$
        \| m \|_\infty < \infty.
    $$
\end{lemma}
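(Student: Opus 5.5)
The plan is to combine the reproducing property with Cauchy--Schwarz and the uniform bound on the diagonal of the product kernel from Assumption~\ref{ass:kernels}. Fix any $(a,x) \in \mathcal{A} \times \mathcal{X}$ and write $K_{(a,x)} := K(\cdot,(a,x)) \in \mathcal{H}$ for the kernel section at that point. The reproducing property gives $m(a,x) = \langle m, K_{(a,x)} \rangle_{\mathcal{H}}$. Applying Cauchy--Schwarz in $\mathcal{H}$ then gives
\[
    |m(a,x)| \le \|m\|_{\mathcal{H}} \, \|K_{(a,x)}\|_{\mathcal{H}} = \|m\|_{\mathcal{H}} \sqrt{K\bigl((a,x),(a,x)\bigr)} .
\]

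Next I will use the tensor product structure, $K\bigl((a,x),(a,x)\bigr) = K_A(a,a) K_X(x,x)$. By Assumption~\ref{ass:kernels}, this is at most $\kappa_A^2 \kappa_X^2$. Hence $|m(a,x)| \le \kappa_A \kappa_X \|m\|_{\mathcal{H}}$ at every point. Since the right-hand side does not depend on $(a,x)$, taking the supremum yields $\|m\|_\infty \le \kappa_A \kappa_X \|m\|_{\mathcal{H}}$.

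This bound is finite because $m \in \mathcal{H}$ means $\|m\|_{\mathcal{H}} < \infty$, and because $\kappa_A, \kappa_X < \infty$. The same inequality is what justifies the bound $\|f\|_\infty \le \kappa_A \kappa_X$ on $\mathcal{H}(1)$ used in Lemma~\ref{lem:M_nl}.

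No step here is a genuine obstacle. The only point to check is that the reproducing kernel of $\mathcal{H}$ really is the product kernel, so that its diagonal factorizes; this is exactly what Assumption~\ref{ass:kernels} stipulates.
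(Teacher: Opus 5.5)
Your proof is correct and follows the paper's argument step for step. Both use the reproducing property, Cauchy--Schwarz, and the factorized diagonal $K((a,x),(a,x)) = K_A(a,a)K_X(x,x) \le \kappa_A^2\kappa_X^2$, then take the supremum to get $\|m\|_\infty \le \kappa_A\kappa_X\|m\|_{\mathcal{H}} < \infty$.
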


\begin{proof}
    Since $m \in \mathcal{H}$, the reproducing property implies that, for any $(a,x) \in \mathcal{A} \times \mathcal{X}$,
    $$
        m(a,x) = \langle m, K((a,x), \cdot) \rangle_{\mathcal{H}}.
    $$
    By the Cauchy--Schwarz inequality,
    $$
        |m(a,x)| 
        \le \|m\|_{\mathcal{H}} \, \|K((a,x), \cdot)\|_{\mathcal{H}}.
    $$
    Using the reproducing kernel property again,
    $$
        \|K((a,x), \cdot)\|_{\mathcal{H}}^2 = K((a,x),(a,x)).
    $$
    Under Assumption~\ref{ass:kernels},
    $$
        K((a,x),(a,x)) = K_A(a,a) K_X(x,x) \le \kappa_A^2 \kappa_X^2.
    $$
    Hence,
    $$
        |m(a,x)| \le \kappa_A \kappa_X \|m\|_{\mathcal{H}}
    $$
    for all $(a,x) \in \mathcal{A} \times \mathcal{X}$. Taking the supremum over $(a,x)$ yields
    $$
        \|m\|_\infty \le \kappa_A \kappa_X \|m\|_{\mathcal{H}} < \infty.
    $$
\end{proof}

%
%
\begin{lemma}
    \label{lem:m-beta*}
    Suppose Assumptions~\ref{ass:Sigma_A}, \ref{ass:C_w}, and \ref{ass:kernels} hold.
    Then,
    $$
        \left\|
        \hat{\Sigma}_A^{-1}
        \left(
        \frac{1}{n}\sum_{i=1}^n
        A_i\left( \mathbb{E}_X[m(A_i,X)] - A_i^\intercal \beta^* \right)
        \right)
        \right\|_2
        =
        O_p\!\left( \frac{\| m \|_\infty \sqrt{1 + c_A^{-2}}}{c_A \sqrt{n}} \right).
    $$
\end{lemma}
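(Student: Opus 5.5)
We bound the operator norm of $\hat{\Sigma}_A^{-1}$ separately from the averaged vector, and show the averaged vector is a mean of i.i.d. centered bounded vectors. Write $g(a) := \mathbb{E}_X[m(a,X)]$ and $Z_i := A_i\bigl(g(A_i) - A_i^\intercal \beta^*\bigr) \in \mathbb{R}^L$. Then the quantity in the statement equals $\hat{\Sigma}_A^{-1}\,\frac1n\sum_{i=1}^n Z_i$. By Lemma~\ref{lem:sigmaA}, $\|\hat{\Sigma}_A^{-1}\|_{op} = O_p(1/c_A)$. It therefore suffices to show
\[
\Bigl\|\frac1n\sum_{i=1}^n Z_i\Bigr\|_2 = O_p\!\left(\frac{\|m\|_\infty\sqrt{1+c_A^{-2}}}{\sqrt n}\right).
\]

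\textbf{Step 1: centering.} The $Z_i$ are i.i.d. because the $A_i$ are i.i.d. and $g$, $\beta^*$ are deterministic. Their mean vanishes:
\[
\mathbb{E}[Z_i] = \mathbb{E}[A\,g(A)] - \Sigma_A\beta^* = 0,
\]
by the representation $\beta^* = \Sigma_A^{-1}\mathbb{E}[A\,\mathbb{E}_X[m(A,X)]]$ established in Section~\ref{s:Methodology}. This is the normal equation of the population projection, so no weights enter this term.

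\textbf{Step 2: second moment.} Since $\mathbb{E}[Z_i]=0$ and the $Z_i$ are independent, cross terms vanish and
\[
\mathbb{E}\Bigl\|\frac1n\sum_i Z_i\Bigr\|_2^2 = \frac1n\,\mathbb{E}\|Z_1\|_2^2 .
\]
Now $\|A_1\|_2\le 1$ on the simplex and $|g(a)|\le\|m\|_\infty$, which is finite by Lemma~\ref{lem:||m||_infty}. For the linear part, use $|A^\intercal\beta^*|\le\|\beta^*\|_2$ together with
\[
\|\beta^*\|_2 \le \|\Sigma_A^{-1}\|_{op}\,\|\mathbb{E}[A\,g(A)]\|_2 \le c_A^{-1}\|m\|_\infty .
\]
Hence $\|Z_1\|_2 \le \|m\|_\infty(1+c_A^{-1})$, and $(1+c_A^{-1})^2\le 2(1+c_A^{-2})$. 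This gives $\mathbb{E}\|Z_1\|_2^2 \lesssim \|m\|_\infty^2(1+c_A^{-2})$.

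Alternatively, $\mathbb{E}[(g(A)-A^\intercal\beta^*)^2]\le\mathbb{E}[g(A)^2]\le\|m\|_\infty^2$ because $A^\intercal\beta^*$ is an $L^2$ projection. This would even remove the $c_A^{-2}$ factor; but to match the stated constant I would use the cruder triangle bound.

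\textbf{Step 3: conclusion.} Chebyshev's (Markov's) inequality turns the second-moment bound into the desired $O_p$ rate for the averaged vector. Multiplying by $\|\hat{\Sigma}_A^{-1}\|_{op}=O_p(c_A^{-1})$ yields the claim.

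The only delicate point is Step 1: it must be checked that $\beta^*$ is exactly the population projection coefficient of $g(A)$ onto $A$. This requires the identification $\mathbb{E}[AY(A)]=\mathbb{E}[A\,\mathbb{E}_X[m(A,X)]]$ from Assumption~\ref{ass:causal_identification}. The remaining steps are routine.
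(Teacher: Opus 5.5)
Your proof is correct and follows the paper's argument essentially step for step. You center $Z_i$ via the normal equation defining $\boldsymbol{\beta}^*$, use $\|A_i\|_2\le 1$ and $\|\boldsymbol{\beta}^*\|_2\le c_A^{-1}\|m\|_\infty$ to get $\mathbb{E}\|Z_i\|_2^2\le 2(1+c_A^{-2})\|m\|_\infty^2$, apply Chebyshev to the average, and take $\|\hat{\boldsymbol{\Sigma}}_A^{-1}\|_{op}=O_p(c_A^{-1})$ from Lemma~\ref{lem:sigmaA}. Your side remark is also valid: $A^\intercal\boldsymbol{\beta}^*$ is the $L^2$ projection of $\mathbb{E}_X[m(A,X)]$ onto the span of the components of $A$, so in fact $\mathbb{E}\|Z_i\|_2^2\le\|m\|_\infty^2$, and the $\sqrt{1+c_A^{-2}}$ factor in the stated rate is unnecessary.
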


\begin{proof}
    Define
    $$
        Z_i := A_i\left( \mathbb{E}_X[m(A_i,X)] - A_i^\intercal \beta^* \right),
        \qquad i=1,\dots,n.
    $$
    Then
    $$
        \hat{\Sigma}_A^{-1}
        \left(
        \frac{1}{n}\sum_{i=1}^n
        A_i\left( \mathbb{E}_X[m(A_i,X)] - A_i^\intercal \beta^* \right)
        \right)
        =
        \hat{\Sigma}_A^{-1}
        \left(
        \frac{1}{n}\sum_{i=1}^n Z_i
        \right).
    $$

    We first show that \(Z_i\) is centered. By the definition of \(\beta^*\),
    \begin{align*}
        \mathbb{E}[Z_i]
        =
        \mathbb{E}[A_i \mathbb{E}_X[m(A_i, X)] - A_i A_i^\intercal \beta^* ] = 0.
    \end{align*}

    Next, we control the second moment of \(Z_i\). Since \(A_i\) lies on the simplex, \(\|A_i\|_2 \le 1\), and hence
    $$
        \|Z_i\|_2
        =
        \left\| A_i\left( \mathbb{E}_X[m(A_i,X)] - A_i^\intercal \beta^* \right) \right\|_2
        \le
        \left| \mathbb{E}_X[m(A_i,X)] - A_i^\intercal \beta^* \right|.
    $$
    Therefore,
    \begin{align*}
        \mathbb{E} \left[ \|Z_i\|_2^2 \right]
        &\le
        \mathbb{E}\left[ \left| \mathbb{E}_X[m(A_i,X)] - A^\intercal \beta^* \right|^2 \right] \\
        &\le
        2\mathbb{E}\left[ \mathbb{E}_X[m(A_i,X)]^2 \right] + 2\mathbb{E}\bigl[(A^\intercal \beta^*)^2\bigr].
    \end{align*}
    Since \(\|A\|_2 \le 1\),
    $$
        (A^\intercal \beta^*)^2 \le \|\beta^*\|_2^2,
    $$
    so
    $$
        \mathbb{E} \left[ \|Z_i\|_2^2 \right]
        \le
        2\mathbb{E}\left[ \mathbb{E}_X[m(A_i,X)]^2 \right] + 2 \| \beta^* \|^2_2.
    $$

    By Lemma~\ref{lem:||m||_infty}
    $$
        \mathbb{E}_X[m(A_i,X)]^2 \le \| m \|_\infty^2,
    $$
    and
    \begin{align*}
        \left\| \boldsymbol{\beta}^* \right\|_2 
        & \le 
        \left\| \boldsymbol{\Sigma}_A^{-1} \right\|_{op}  \left\| \mathbb{E}\left[ A \, \mathbb{E}_X[m(A,X)] \right] \right\|_2 \\
        & \le
        c_A^{-1} \| m \|_\infty
    \end{align*}

    Combining the above bounds yields
    $$
        \mathbb{E} \left[ \|Z_i\|_2^2 \right] 
        \le 
        2 \| m \|_\infty^2 + 2 c_A^{-2} \| m \|_\infty^2
        = 2 (1 + c_A^{-2}) \| m \|_\infty^2
    $$
    Therefore,
    $$
        \mathbb{E} \left[ \left\|
        \frac{1}{n}\sum_{i=1}^n Z_i
        \right\|_2^2 \right]
        =
        \frac{1}{n^2}\sum_{i=1}^n \mathbb{E} \left[ \|Z_i\|_2^2 \right]
        \le
        \frac{2(1 + c_A^{-2}) \| m \|_\infty^2}{n},
    $$
    since \(Z_1,\dots,Z_n\) are independent and centered. It follows that
    $$
        \left\|
        \frac{1}{n}\sum_{i=1}^n Z_i
        \right\|_2
        =
        O_p\!\left( \frac{\| m \|_\infty \sqrt{1 + c_A^{-2}}}{\sqrt{n}} \right).
    $$

    Finally, Lemma~\ref{lem:sigmaA} implies
    $$
        \left\| \hat{\Sigma}_A^{-1} \right\|_{op} = O_p\left( c_A^{-1} \right).
    $$
    Hence,
    $$
        \left\|
        \hat{\Sigma}_A^{-1}
        \left(
        \frac{1}{n}\sum_{i=1}^n Z_i
        \right)
        \right\|_2
        \le
        \|\hat{\Sigma}_A^{-1}\|_{op}
        \left\|
        \frac{1}{n}\sum_{i=1}^n Z_i
        \right\|_2
        =
        O_p\!\left( \frac{\| m \|_\infty \sqrt{1 + c_A^{-2}}}{c_A \sqrt{n}} \right).
    $$
    This completes the proof.

\end{proof}

%
%
\begin{lemma}
    \label{lem:cross_term_rate}
    Suppose Assumptions~\ref{ass:Sigma_A}, \ref{ass:C_w}, and \ref{ass:kernels} hold. Then,
    $$
        \left\|
        \hat{\boldsymbol{\Sigma}}_A^{-1}
        \left(
        \frac{1}{n}\sum_{i=1}^n
        A_i
        \left(
        \frac{1}{n}\sum_{j=1}^n m(A_i,X_j) - \mathbb{E}_X[m(A_i,X)]
        \right)
        \right)
        \right\|_2
        =
        O_p\!\left( \frac{\| m \|_\infty}{c_A\sqrt{n}} \right).
    $$
\end{lemma}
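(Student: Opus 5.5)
Since $\|\hat{\boldsymbol{\Sigma}}_A^{-1}\|_{op} = O_p(c_A^{-1})$ by Lemma~\ref{lem:sigmaA}, the plan is to show that the vector
$$
    \mathbf{U}_n := \frac{1}{n}\sum_{i=1}^n A_i \Bigl( \frac{1}{n}\sum_{j=1}^n m(A_i,X_j) - \mathbb{E}_X[m(A_i,X)] \Bigr)
$$
satisfies $\mathbb{E}\|\mathbf{U}_n\|_2^2 = O(\|m\|_\infty^2/n)$. Markov's inequality then gives $\|\mathbf{U}_n\|_2 = O_p(\|m\|_\infty/\sqrt{n})$, and multiplying by the operator-norm bound finishes the proof. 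Here $\|m\|_\infty<\infty$ by Lemma~\ref{lem:||m||_infty}.

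The key observation is that $\mathbf{U}_n$ is a V-statistic of order two. Define the kernel
$$
    h(z_i,z_j) = A_i\bigl(m(A_i,X_j) - \mathbb{E}_X[m(A_i,X)]\bigr), \qquad z=(A,X).
$$
Then $\mathbf{U}_n = n^{-2}\sum_{i,j} h(z_i,z_j)$, and $\|h\|_2 \le 2\|m\|_\infty$ because $\|A_i\|_2\le 1$. We split into diagonal and off-diagonal parts.

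The diagonal part $n^{-2}\sum_i h(z_i,z_i)$ is bounded in norm by $2\|m\|_\infty/n$ deterministically, so it is negligible.

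For the off-diagonal part, note that $\mathbb{E}[h(z_i,z_j)\mid z_i]=0$ for $i\neq j$, by independence of $X_j$ and $A_i$ and the definition of $\mathbb{E}_X$. This degeneracy in the second argument is the key to the computation. Expand
$$
    \mathbb{E}\Bigl\|\sum_{i\ne j} h(z_i,z_j)\Bigr\|_2^2 = \sum_{i\ne j}\sum_{k\ne l} \mathbb{E}\langle h(z_i,z_j), h(z_k,z_l)\rangle .
$$
Any term in which some index among $j,l$ appears exactly once has zero expectation: condition on all other variables and use the degeneracy. This requires that the lone index differ from the first arguments, which holds since $j\ne i$ and $j\notin\{k,l\}$ means $z_j$ is an independent fresh draw.

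The surviving terms have $j=l$. The diagonal pairs $(i,j)=(k,l)$ contribute at most $n^2\cdot 4\|m\|_\infty^2$. The remaining triples $j=l$, $i\ne k$ give
$$
    \mathbb{E}\bigl\langle \mathbb{E}[h(z_i,z_j)\mid z_j], \mathbb{E}[h(z_k,z_j)\mid z_j]\bigr\rangle .
$$
These are not zero, but there are $n^3$ of them, each bounded by $4\|m\|_\infty^2$. Dividing by $n^4$ gives $\mathbb{E}\|\mathbf{U}_n\|_2^2 \lesssim \|m\|_\infty^2/n$, which is exactly the desired order.

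An equivalent route is the Hoeffding decomposition. The projection onto $z_j$, namely $g(X_j)=\mathbb{E}_A[A(m(A,X_j)-\mathbb{E}_X m(A,X))]$, is a centered i.i.d. average with variance $O(\|m\|_\infty^2/n)$. The degenerate remainder has variance $O(\|m\|_\infty^2/n^2)$.

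The main obstacle is the careful bookkeeping of which index coincidences survive the expectation, in particular confirming that the $j=l$, $i\ne k$ terms are only $O(n^3)$ in number. I would handle this by conditioning on $z_j$ first and then applying Cauchy--Schwarz to each conditional expectation. The constants stay free of $L$ because $\|A_i\|_2\le 1$, matching the stated rate $\|m\|_\infty/(c_A\sqrt{n})$.
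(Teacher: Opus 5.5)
Your proposal is correct and follows essentially the paper's route. The paper also writes the term as a V-statistic with kernel $h$, uses $h_1\equiv 0$ and the Hoeffding projection $\frac1n\sum_j h_2(Z_j)$ (your $g(X_j)$), and then multiplies by $\|\hat{\boldsymbol{\Sigma}}_A^{-1}\|_{op}=O_p(c_A^{-1})$; it only asserts that the remainder is $O_p(n^{-1})$, which your explicit second-moment expansion actually justifies. One small bookkeeping slip: besides $j=l$, the index pattern $(k,l)=(j,i)$ also survives, since $j$ and $l$ each appear twice there. These terms number only $n(n-1)$, each bounded by $4\|m\|_\infty^2$, so they contribute $O(\|m\|_\infty^2/n^2)$ and your conclusion is unchanged.
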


\begin{proof}
    Define
    $$
        T_n
        :=
        \frac{1}{n}\sum_{i=1}^n
        A_i
        \left(
        \frac{1}{n}\sum_{j=1}^n m(A_i,X_j)-\mathbb{E}_X[m(A_i,X)]
        \right).
    $$
    Let $Z_i=(A_i,X_i)$ and define
    $$
        h\bigl((a,x),(a',x')\bigr)
        :=
        a\Bigl(m(a,x')-\mathbb{E}_X[m(a,X)]\Bigr).
    $$
    Then
    $$
        T_n=\frac{1}{n^2}\sum_{i,j=1}^n h(Z_i,Z_j).
    $$

    Since
    $$
        \mathbb{E}\bigl[h(Z_1,Z_2)\bigr]
        =
        \mathbb{E}\left[
        A_1\left(m(A_1,X_2)-\mathbb{E}_X[m(A_1,X)]\right)
        \right]
        =0,
    $$
    the V-statistic is centered. Let
    $$
        h_1(z):=\mathbb{E}[h(z,Z_2)],
        \qquad
        h_2(z):=\mathbb{E}[h(Z_1,z)].
    $$
    For $z=(a,x)$,
    $$
        h_1(a,x)
        =
        a\,\mathbb{E}_{X'}\!\left[m(a,X')-\mathbb{E}_X[m(a,X)]\right]
        =0.
    $$
    Hence, by the Hoeffding decomposition,
    $$
        T_n
        =
        \frac{1}{n}\sum_{j=1}^n h_2(Z_j)+R_n,
    $$
    where $R_n$ is a degenerate remainder term satisfying
    $$
        \|R_n\|_2=O_p(n^{-1}).
    $$

    Moreover,
    $$
        h_2(a,x)
        =
        \mathbb{E}_{A'}\left[
        A' \left(m(A',x)-\mathbb{E}_X[m(A',X)]\right)
        \right].
    $$
    Since $A' \in \mathcal{A}$, we have $\|A'\|_2\le 1$. Also,
    $$
        \left|m(A',x)-\mathbb{E}_X[m(A',X)]\right|
        \le 2\|m\|_\infty.
    $$
    Therefore,
    $$
        \|h_2(a,x)\|_2
        \le
        \mathbb{E}_{A'}\left[
        \|A'\|_2 \left|m(A',x)-\mathbb{E}_X[m(A',X)]\right|
        \right]
        \le 2\|m\|_\infty.
    $$
    Thus, $\{h_2(Z_j)\}_{j=1}^n$ are i.i.d. mean zero random vectors with uniformly bounded second moments, which implies
    $$
        \left\|
        \frac{1}{n}\sum_{j=1}^n h_2(Z_j)
        \right\|_2
        =
        O_p\!\left( \frac{\|m\|_\infty}{\sqrt{n}} \right).
    $$
    Combining this with the bound for $R_n$ yields
    $$
        \|T_n\|_2
        =
        O_p\!\left( \frac{\|m\|_\infty}{\sqrt{n}} \right).
    $$

    Finally, by Lemma~\ref{lem:sigmaA},
    $$
        \left\| \hat{\boldsymbol{\Sigma}}_A^{-1} \right\|_{op}=O_p(c_A^{-1}),
    $$
    and hence
    $$
        \left\|
        \hat{\boldsymbol{\Sigma}}_A^{-1}T_n
        \right\|_2
        \le
        \left\| \hat{\boldsymbol{\Sigma}}_A^{-1} \right\|_{op}\,\|T_n\|_2
        =
        O_p\!\left( \frac{\|m\|_\infty}{c_A\sqrt{n}} \right).
    $$
\end{proof}

%
%
\begin{lemma} 
    \label{lem:error_term_bound}
    Under Assumption~\ref{ass:epsilon} and Lemma~\ref{lem:S(w_hat,f)_p(w_hat)_rate}, it holds that
    $$
        \left\| (\mathbf{A}^\intercal \mathbf{A})^{-1} \mathbf{A}^\intercal \hat{\mathbf{W}} \boldsymbol{\varepsilon} \right\|_2^2 
        =
        O_p\!\left( \frac{\sigma^2 L}{c_A n} \left( \frac{ B (C_w + 1)^2 (\kappa_A \kappa_X)^{2 - \alpha}}{\lambda} + C_w^2 \right) \right).
    $$
\end{lemma}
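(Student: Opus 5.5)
The plan is to condition on the design $\{(A_i,X_i)\}_{i=1}^n$. Since $\hat w$ is computed from \eqref{eq:w_hat_def}, which depends only on $\mathbf A$ and the kernel evaluations and not on $\mathbf Y$, the weight matrix $\hat{\mathbf W}$ is measurable with respect to this design. Write $\mathbf M := (\mathbf A^\intercal\mathbf A)^{-1}\mathbf A^\intercal\hat{\mathbf W}$, so the quantity of interest is $\|\mathbf M\boldsymbol\varepsilon\|_2^2$ with $\mathbf M$ fixed given the design. By Assumption~\ref{ass:epsilon}, the errors are independent, conditionally mean zero, and have conditional variances $\sigma_i^2\le\sigma^2$. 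Therefore
\[
\mathbb E\bigl[\|\mathbf M\boldsymbol\varepsilon\|_2^2 \mid \mathbf A,\mathbf X\bigr]
=\operatorname{tr}\bigl(\mathbf M\,\mathrm{diag}(\sigma_1^2,\dots,\sigma_n^2)\,\mathbf M^\intercal\bigr)
\le \sigma^2\|\mathbf M\|_F^2=\sigma^2 p(\hat w).
\]
The cross terms vanish because $\mathbb E[\varepsilon_i\varepsilon_j\mid\cdot]=0$ for $i\ne j$.

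Next I would apply the conditional Markov inequality. For any $M>0$,
\[
P\bigl(\|\mathbf M\boldsymbol\varepsilon\|_2^2> M\,\sigma^2 p(\hat w)\mid \mathbf A,\mathbf X\bigr)\le 1/M .
\]
Taking expectations gives $\|\mathbf M\boldsymbol\varepsilon\|_2^2=O_p(\sigma^2 p(\hat w))$. On the event $p(\hat w)=0$ the quantity is zero almost surely, so that case needs no separate treatment.

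Finally I would combine this with the second display of Lemma~\ref{lem:S(w_hat,f)_p(w_hat)_rate}, which gives
\[
p(\hat w)=O_p\!\Bigl(\tfrac{L}{c_A n}\bigl(B(C_w+1)^2(\kappa_A\kappa_X)^{2-\alpha}/\lambda+C_w^2\bigr)\Bigr).
\]
A product of $O_p$ terms is $O_p$ of the product, which yields the claimed rate.

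The main point requiring care is the conditioning step: the bound relies on $\hat w$ being independent of $\boldsymbol\varepsilon$ given the design, so that $\mathbf M$ can be treated as fixed. I expect this step to carry the argument. It holds here because the weights are built only from $(\mathbf A,\mathbf X)$. Note that $\lambda$ is not tuned from the data; with the data-driven $\hat\lambda$ of Section~\ref{s:lambda}, $\lambda$ would depend on $\mathbf Y$. In that case I would instead bound the supremum over the finite candidate set $\Lambda$ by a union bound.
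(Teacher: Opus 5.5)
Your proposal is correct and follows essentially the same route as the paper. Both condition on $(\mathbf A,\mathbf X)$, bound the conditional second moment by $\sigma^2\|\mathbf M\|_F^2=\sigma^2 p(\hat w)$, and then invoke the rate for $p(\hat w)$ from Lemma~\ref{lem:S(w_hat,f)_p(w_hat)_rate}. Your conditional Markov step, together with your explicit remark that $\hat w$ depends only on the design, makes the paper's informal ``taking expectations'' step rigorous, since $p(\hat w)$ is controlled only in probability and not in expectation.
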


\begin{proof}
    Let $(\mathbf{A}^\intercal \mathbf{A})^{-1} \mathbf{A}^\intercal \hat{\mathbf{W}} = \mathbf{M}$. Conditional on $(\mathbf{A}, \mathbf{X})$, we have
     \begin{align*}
        \mathbb{E}\!\left[
        \left\| \mathbf{M} \boldsymbol{\varepsilon} \right\|_2^2
        \,\middle|\, \mathbf{A}, \mathbf{X}
        \right]
        &=
        \operatorname{tr}\!\left(
        \mathbf{M} \,
        \operatorname{Cov}(\boldsymbol{\varepsilon} \mid \mathbf{A}, \mathbf{X})
        \mathbf{M}^\intercal
        \right).
    \end{align*}
    By Assumption~\ref{ass:epsilon}, 
    $
    \operatorname{Cov}(\boldsymbol{\varepsilon} \mid \mathbf{A}, \mathbf{X})
    \preceq
    \sigma^2 \mathbf{I}_n,
    $
    hence
    \begin{align*}
        \mathbb{E}\!\left[
        \left\| \mathbf{M} \boldsymbol{\varepsilon} \right\|_2^2
        \,\middle|\, \mathbf{A}, \mathbf{X}
        \right]
        \le
        \sigma^2 \operatorname{tr}(\mathbf{M}\mathbf{M}^\intercal)
        =
        \sigma^2 \|\mathbf{M}\|_F^2.
    \end{align*}
    
    Noting that
    $
        \|\mathbf{M}\|_F^2 =
        \| (\mathbf{A}^\intercal \mathbf{A})^{-1} \mathbf{A}^\intercal \hat{\mathbf{W}} \|_F^2 = p(\hat{w}),
    $
    we obtain
    $$
        \mathbb{E}\!\left[
        \left\| \mathbf{M} \boldsymbol{\varepsilon} \right\|_2^2
        \,\middle|\, \mathbf{A}, \mathbf{X}
        \right]
        \le
        \sigma^2 p(\hat{w}).
    $$

    Taking expectations and applying Lemma~\ref{lem:S(w_hat,f)_p(w_hat)_rate} yields
    $$
        \left\| (\mathbf{A}^\intercal \mathbf{A})^{-1} \mathbf{A}^\intercal \hat{\mathbf{W}} \boldsymbol{\varepsilon} \right\|_2^2 
        =
        O_p\!\left( \frac{\sigma^2 L}{c_A n} \left( \frac{ B (C_w + 1)^2 (\kappa_A \kappa_X)^{2 - \alpha}}{\lambda} + C_w^2 \right) \right).
    $$

\end{proof}

%
%
\begin{lemma}
    \label{lem:normality}
    Let 
    $$
        T_n := \hat{\boldsymbol{\Sigma}}_A^{-1} \left[ \frac{1}{\sqrt{n}} \sum_{i=1}^n \hat{w}_i \varepsilon_i A_i \right],
    $$
    where $\hat{\boldsymbol{\Sigma}}_A = \frac{1}{n} \sum_{i=1}^n A_i A_i^\intercal$ and
    $$
        \mathbf{V}_n
        :=
        \hat{\boldsymbol{\Sigma}}_A^{-1} \left( \frac1n \sum_{i=1}^n \hat w_i^2 \sigma_i^2 A_i A_i^\intercal \right) \hat{\boldsymbol{\Sigma}}_A^{-1}.
    $$
    Under the conditions of Theorem~\ref{thm:awe_rate} and  Assumption~\ref{ass:eps_third_moment} and \ref{ass:weighted_stability}, it holds that
    $$
        \mathbf{V}_n^{-1/2} T_n \overset{d}{\longrightarrow} \mathcal{N}(\mathbf{0}, \mathbf{I} ).
    $$
\end{lemma}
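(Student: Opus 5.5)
The plan is to condition on the design $\mathcal{F}_n := \sigma\{(A_i,X_i)\}_{i=1}^n$. The key observation is that the weights $\hat w$ in \eqref{eq:w_hat_def} depend only on $(\mathbf{A},\mathbf{X})$: the kernel matrix $\tilde{\mathbf{K}}$ and the penalty $p(w)$ never involve $\mathbf{Y}$. This assumes $\lambda$ is fixed, with $\lambda \asymp 1$, and not chosen by the $\hat{\boldsymbol{\varepsilon}}$-based rule of Section~\ref{s:lambda}. So given $\mathcal{F}_n$, the vectors $\hat w_i A_i$ and $\hat{\boldsymbol{\Sigma}}_A$ are deterministic. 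By Assumption~\ref{ass:epsilon}, the $\varepsilon_i$ are conditionally independent with mean zero and variances $\sigma_i^2$. Hence $\mathbb{E}[T_n\mid\mathcal{F}_n]=\mathbf{0}$ and $\operatorname{Cov}(T_n\mid\mathcal{F}_n)=\mathbf{V}_n$ exactly. The claim therefore reduces to a conditional central limit theorem for a triangular array of independent, non-identically distributed summands, normalized by their exact conditional covariance.

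I would use the Cramér--Wold device. Fix a unit vector $t\in\mathbb{R}^L$ and write $t^\intercal \mathbf{V}_n^{-1/2}T_n=\sum_{i=1}^n u_i\varepsilon_i$, where
\[
u_i := n^{-1/2}\,\hat w_i\, t^\intercal \mathbf{V}_n^{-1/2}\hat{\boldsymbol{\Sigma}}_A^{-1}A_i .
\]
These coefficients are $\mathcal{F}_n$-measurable and satisfy $\sum_i u_i^2\sigma_i^2 = t^\intercal t = 1$. I would then check Lyapunov's condition with third moments, using Assumption~\ref{ass:eps_third_moment}:
\[
\sum_{i=1}^n \mathbb{E}\bigl[|u_i\varepsilon_i|^3\mid\mathcal{F}_n\bigr]\le C_\varepsilon \, n \max_i|u_i|^3 \le C_\varepsilon\, \frac{\|\hat w\|_\infty^3\,\|\mathbf{V}_n^{-1/2}\|_{op}^3\,\|\hat{\boldsymbol{\Sigma}}_A^{-1}\|_{op}^3}{\sqrt{n}} .
\]
The three factors are controlled as follows.
\begin{itemize}
\item Lemma~\ref{lem:sigmaA} gives $\|\hat{\boldsymbol{\Sigma}}_A^{-1}\|_{op}=O_p(c_A^{-1})$.
\item Since $\|A_i\|_2\le1$, we have $\|\hat{\boldsymbol{\Sigma}}_A\|_{op}\le 1$. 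Together with Assumption~\ref{ass:weighted_stability}, this gives $\phi_{\min}(\mathbf{V}_n)\ge c_V/\|\hat{\boldsymbol{\Sigma}}_A\|_{op}^2\ge c_V$ with probability tending to one, so $\|\mathbf{V}_n^{-1/2}\|_{op}\le c_V^{-1/2}$.
\item The assumption $\|\hat w\|_\infty=o_p(n^{1/6})$ makes $\|\hat w\|_\infty^3/\sqrt n=o_p(1)$.
\end{itemize}
Hence the Lyapunov ratio is $o_p(1)$.

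Then I would pass from conditional to unconditional convergence. The Lyapunov/Lindeberg bound on characteristic functions yields $|\mathbb{E}[e^{\mathrm{i}s\sum_i u_i\varepsilon_i}\mid\mathcal{F}_n]-e^{-s^2/2}|\to0$ in probability for each real $s$. Because this difference is bounded by $2$, dominated convergence transfers the limit to the unconditional characteristic function. The exceptional events, on which $\phi_{\min}(\mathbf{V}_n)<c_V$ or $\hat{\boldsymbol{\Sigma}}_A$ is ill-conditioned, have vanishing probability and are absorbed in the same way. Cramér--Wold then gives $\mathbf{V}_n^{-1/2}T_n\overset{d}{\to}\mathcal{N}(\mathbf{0},\mathbf{I})$.

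I expect the main obstacle to be the conditioning step: making rigorous that the data-dependent weights and random normalizer behave like fixed constants. This hinges on $\hat w$ being $\mathcal{F}_n$-measurable, i.e.\ not depending on the outcomes. If the $\lambda$-tuning of Section~\ref{s:lambda} were included, this would fail, because that rule uses $\hat{\boldsymbol{\varepsilon}}$. In that case I would restrict to a fixed $\lambda$, or add a union bound over a finite candidate set $\Lambda$. The Lyapunov computation itself is routine once the three operator-norm bounds above are in hand.
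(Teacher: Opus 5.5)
Your proposal is correct and follows essentially the same route as the paper: condition on the design $\{(A_i,X_i)\}_{i=1}^n$, treat $\hat w$ and $\hat{\boldsymbol{\Sigma}}_A$ as fixed, and verify a conditional third-moment Lyapunov condition from $\|\hat{\boldsymbol{\Sigma}}_A^{-1}\|_{op}=O_p(1)$, Assumption~\ref{ass:eps_third_moment}, and $\|\hat w\|_\infty=o_p(n^{1/6})$, before invoking a conditional CLT. Your version is slightly more careful than the paper's on three points the paper leaves implicit: you apply Lyapunov to the summands already normalized by $\mathbf{V}_n^{-1/2}$ (with $\phi_{\min}(\mathbf{V}_n)\ge c_V$ obtained from Assumption~\ref{ass:weighted_stability} and $\|\hat{\boldsymbol{\Sigma}}_A\|_{op}\le 1$), you state that $\hat w$ must not depend on $\mathbf{Y}$ (fixed $\lambda$), and you justify passing from conditional to unconditional convergence.
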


\begin{proof}
    Define
    $$
        U_i :=
        \frac{1}{\sqrt{n}}
        \hat{\boldsymbol{\Sigma}}_A^{-1}
        \hat{w}_i A_i \varepsilon_i.
    $$
    Then $ T_n  = \sum_{i=1}^n U_i $.    
    Conditional on $\mathcal{D}_n := \{(A_1, X_1), \dots , (A_n, X_n) \}$, the variables $\{U_i\}_{i=1}^n$ are independent with
    $$ \mathbb{E} [U_i \mid \mathcal{D}_n] = 0. $$
    
    Their conditional covariance matrix is
    \begin{align*}
        \sum_{i=1}^n \mathrm{Cov}(U_i \mid \mathcal{D}_n)
        =
        \hat{\boldsymbol{\Sigma}}_A^{-1}
        \left(
            \frac{1}{n}\sum_{i=1}^n \hat{w}_i^2 \sigma_i^2 A_i A_i^\intercal
        \right)
        \hat{\boldsymbol{\Sigma}}_A^{-1}
        = \mathbf{V}_n.
    \end{align*}

    We verify the conditional Lyapunov condition with $\delta = 1$.
    Since $ \|A_i\|_2 \le 1 $ for all $i$ on the simplex,
    \begin{align*}
        \sum_{i=1}^n \mathbb{E} \left[ \| U_i \|^3_2 \mid \mathcal{D}_n \right]
        \le 
        n^{-3/2} \| \hat{\boldsymbol{\Sigma}}_A^{-1}\|_{op}^3 
        \sum_{i=1}^n |\hat{w}_i|^3 \|A_i\|^3_2 \mathbb{E} \left[ |\varepsilon_i|^3 \mid A_i, X_i \right].
    \end{align*}

    Under Assumptions~\ref{ass:Sigma_A},
    $$ \| \hat{\boldsymbol{\Sigma}}_A^{-1}\|_{op} = O_p(c_A^{-1}) = O_p(1). $$

    By Assumption~\ref{ass:eps_third_moment}
    $$
        \mathbb{E} \left[ |\varepsilon_i|^3 \mid A_i, X_i \right] \le C_{\varepsilon}.
    $$

    Furthermore,
    $$ \sum_{i=1}^n |\hat w_i|^3 \le n \|\hat w\|_\infty^3. $$

    Therefore,
    \begin{align*}
        \sum_{i=1}^n
        \mathbb E
        \left[
            \|U_i\|_2^3
            \mid
            \mathcal D_n
        \right]
        &=
        O_p(n^{-1/2})
        \|\hat w\|_\infty^3
        \\
        &=
        o_p(1),
    \end{align*}
    by Assumption~\ref{ass:weighted_stability}. Hence the conditional Lyapunov condition holds. Conditional on $\mathcal D_n$, the result follows from the conditional multivariate Lyapunov central limit theorem,
    $$ T_n \xrightarrow{d} \mathcal N(\mathbf 0, \mathbf{V}_n). $$

    Since the variance $\mathbf{V}_n$ has the inverse matrix by Assumption~\ref{ass:weighted_stability}, $\left( \mathbf{V}_n \right)^{-1/2}$ can exist. Therefore,
    $$
        \mathbf{V}_n^{-1/2} T_n \overset{d}{\longrightarrow} \mathcal{N}(\mathbf{0}, \mathbf{I} ).
    $$
\end{proof}

\end{document}